\newcommand{\point}[1]{{\mathbf {#1}}}
\newcommand{\vect}[1]{{\overrightarrow{\mathbf {#1}}}}
\newcommand{\llangle}{\langle\langle}
\newcommand{\rrangle}{\rangle\rangle}
\newtheorem{definition}{Definition}[section]
\newtheorem{theorem}{Theorem}[section]
\newtheorem{corollary}{Corollary}[section]
\newtheorem{lemma}{Lemma}[section]
\newtheorem{proposition}{Proposition}[section]
\newcounter{remctr}[section]
\newenvironment{remark}
{\stepcounter{remctr}%
 {\bf {\em Remark \arabic{section}.\arabic{remctr}}} \ }{$\blacklozenge$\\}
\newenvironment{proof}{{\em Proof.}\ }{$\blacksquare$\\}
\begin{document}
\begin{frontmatter}

%% TITLE                                                      %%
\title{Weighted distance transforms\\ generalized to modules\\
 and their computation on point lattices}
\author[CBA_UU]{C\'{e}line Fouard}
\ead{celine@cb.uu.se}
\author[CBA_UU]{Robin Strand\corauthref{cor1}}
\ead{robin@cb.uu.se}
\author[CBA_SLU]{Gunilla Borgefors}
\ead{gunilla@cb.uu.se}
\corauth[cor1]{Corresponding author}
\address[CBA_UU]{Centre for Image Analysis, Uppsala University, \\
L\"{a}gerhyddsv\"{a}gen 3, SE-75237 Uppsala, Sweden} 
\address[CBA_SLU]{Centre for Image Analysis, Swedish University of
Agricultural Sciences, \\ L\"{a}gerhyddsv\"{a}gen 3, SE-75237 Uppsala,
Sweden}

%% ABSTRACT                                                   %%
\begin{abstract}
This paper presents the generalization of weighted distances to 
modules and their computation through the chamfer algorithm on general 
point-lattices. A first part is devoted to the definitions and 
properties (distance, metric, norm) of weighted distances on modules, 
with a presentation of weight optimization in the general case, to 
get rotation invariant distances. A general formula for the weighted 
distance on any module is presented. The second part of this paper 
proves that, for any point-lattice, the sequential 2-scan chamfer 
algorithm  produces correct distance maps. 
Finally, the definitions and computation of weighted 
distances are applied to the face-centered cubic (FCC) and 
body-centered cubic (BCC) grids. 
\end{abstract}

%% KEYWORDS                                                   %%
\begin{keyword}
Weighted distance \sep Distance transform \sep Chamfer algorithm \sep
Non-standard grids 
\end{keyword}

\end{frontmatter}

%%%%%%%%%%%%%%%%%%%%%%%%%%%%%%%%%%%%%%%%%%%%%%%%%%%%%%%%%%%%%%%%
%%                                                            %%
%%                     INTRODUCTION                           %%
%%                                                            %%
%%%%%%%%%%%%%%%%%%%%%%%%%%%%%%%%%%%%%%%%%%%%%%%%%%%%%%%%%%%%%%%%
\section{Introduction}
Given a binary image consisting of object and background grid
points, the distance transform assigns to each object grid point its
distance value to the closest background grid point. The distance
transform provides valuable information for shape analysis and is
widely used in several applications \cite{borgefors:avfp:1994}, 
such as, for example, skeleton
extraction \cite{pudney:cviu:1998}, template matching
\cite{borgefors:PAMI:1988}, shape based interpolation
\cite{herman:cga:1992,grevera:tmi:1996} or image registration
\cite{cai:ijrobp:1999}.

Computing the distance from each object grid point to each background
grid point would lead to a far too high computational cost. Numerous
authors have thus investigated alternative ways of computing
distance maps. To do so, they use the spatial consistency of a
distance map, which allows propagation of local information.
In this paper, we focus on one such algorithm, the chamfer algorithm,
which computes the {\em weighted distance transform} (WDT). 
When using the chamfer algorithm, only a
small neighborhood of each grid point is considered. A weight, a
\textit{local distance}, is assigned to each grid point in the
neighborhood. By propagating the local distances in the two-scan
algorithm, the correct distance map is obtained. For example, the
well-known two dimensional city-block ($L^1$) and chessboard
($L^\infty$) distances can be obtained in this way by using unit
weights for the neighbors.

It has been shown that in two dimensions, the hexagonal grid is in
many ways to prefer over the usual square grid,
\cite{bell:ivc:1989}. For example, since the hexagonal grid
constitutes the closest sphere packing \cite{conway:gmw:1988}, only
$87\%$ of the number of samples needed in the square grid can be used
without information loss, \cite{ibanez:dgci:1996}. Extending this to
three dimensions, the face-centered cubic (FCC) grid is the lattice
with the highest packing density, resulting in that less samples are
needed for the FCC grid without loosing information, compared to the
cubic grid, \cite{ibanez:dgci:1996}. The reciprocal grid of the FCC
grid is the body-centered cubic (BCC) grid. This means that if the FCC
grid is used in the spatial/frequency domain, then the BCC grid is
used to represent the image in the frequency/spatial domain. The FCC
grid has the highest packing density, so this grid is preferably used
in frequency domain resulting in that only $71\%$ of the samples are
needed for an image on the BCC grid compared to the cubic grid. If the
FCC grid is used in spatial domain instead, $77\%$ of the samples can
be used without loosing information, \cite{ibanez:dgci:1996}. This
reasoning requires that it is possible to acquire images directly to
the grids with non-cubic voxels. 
In theory, any tomography-based technique for acquiring volume images
(i.e., that gives a sequence of 2D slices) such as Magnetic Resonance
Imaging (MRI), Computed Tomography (CT), Single Photon Emission
Computed Tomography (SPECT), or Positron Emission Tomography (PET) can
be adjusted to work on the FCC or BCC grids.

For SPECT and CT, using
the filtered back-projection method or direct Fourier methods,
the fact that the Fourier transform for the FCC and BCC grids (and for
two-dimensional planes that correspond to the 2D slices) exists is
enough. Images on the FCC and BCC grids acquired using the algebraic
reconstruction technique for SPECT or CT is found in
\cite{sorzano:jsb:2004}. They use spherically-symmetric volume
elements (blobs), \cite{matej:tns:1995}. With the PET-technique, the
origin of gamma-rays (produced by annihilation of an emitted positron
and an electron) are computed. The technique does not depend on the
underlying grid except in the digitization step, which just as well
can be performed on the FCC and BCC grids.

Dealing with digital images means dealing with discrete
grids. Indeed, digital images are discrete representation of the
continuous world. Two kind of approaches can be adopted to deal with
discrete grids. 
The first one applies continuous methods to
the discrete grid. This is the case of the Euclidean distance
transform (EDT) which aims at computing exact Euclidean distance on
discrete images. 
The second approach, called discrete geometry, consists in developing
tools directly devoted to digital images and discrete grids. This is
the case of the WDT which can, for example, be
computed using a two-scan algorithm, a {\em chamfer algorithm} in the
usual square grid \cite{rosenfeld:acm:1966}.

Since computing the WDT is less memory demanding and faster than
computing the EDT, the WDT is preferable in
situations where the data set is too big to fit into the
central memory (such as in \cite{fouard:isbi:2004}) or the
computational time should be minimized (recent 
contributions involve speedup of level-set methods,
\cite{krissian:prl:2005}, and real-time tele-operated force-feedback,
\cite{lee:is:2005}). 
It should be mentioned that there are linear-time (or almost linear)
algorithms for computing the EDT obtained by
sequentially propagating vectors, \cite{danielsson:cgip:1980}, and
dimensionality reduction,
\cite{saito:pr:1994,breu:pami:1995,maurer:pami:2003}. These algorithms
are, however, not as fast as the chamfer algorithm since the EDT
requires a larger number of arithmetic operations. Also, the vector
propagation algorithm is never guaranteed to be error-free,
\cite{forchhammer:SCIA:1989}. 
The WDT has other properties which makes it useful in applications. 
It consists of only integers without loosing the metric property which
is not the case for the EDT (real-valued distances) or the squared EDT
(does not satisfy the triangular inequality). This leads to nice
properties, e.g., when extracting the centers of maximal balls for
computing skeletons which is fast and simple for the most common WDTs,
\cite{arcelli:cvgip:1988,borgefors:avfp:1994}, in contrast to the case
for EDT, \cite{remy:dgci:2003}. Observe that, when using large
neighborhoods in the WDT, then the situation can be complex also for
WDTs, \cite{remy:prl:2001}. In general, the weighted distances are
well-suited for morphological operations, \cite{nacken:jmiv:1994}. 

To allow image processing on images on the non-standard grids,
algorithms must be designed to work directly on these grids. Weighted
distances and the chamfer algorithm has been applied to non-standard
grids such as the two-dimensional hexagonal grid,
\cite{borgefors:cvgip:1984,borgefors:prl:1989,strand:cviu:2005} and
the FCC and BCC grids, \cite{strand:cviu:2005}. In this paper, the
weighted distance and the chamfer algorithm are considered on
\textit{modules}. This gives a very general framework in which
$\mathbb{Z}^n$, the hexagonal, FCC, and BCC grids are all considered
in parallel. The general framework implies that the theory also
applies to the higher-dimensional generalizations of these grids.

This paper is organized as follows: we first summarize weighted
distance definitions and properties that can be found in the
literature and generalize them to sub-modules of
$\mathbb{R}^{n}$. Then we exhibit conditions for the chamfer algorithm
to work on our framework. Finally, we propose examples of chamfer
masks for the body-centered cubic (BCC) grid and the
face-centered cubic (FCC) grid.

%%%%%%%%%%%%%%%%%%%%%%%%%%%%%%%%%%%%%%%%%%%%%%%%%%%%%%%%%%%%%%%%
%%                                                            %%
%%     DEFINITIONS NOTATIONS AND PROPERTIES                    %%
%%                                                            %%
%%%%%%%%%%%%%%%%%%%%%%%%%%%%%%%%%%%%%%%%%%%%%%%%%%%%%%%%%%%%%%%%
\section{Definitions, notations and properties}
This section generalizes definitions and weighted
distance properties found in literature to our general framework of
modules. We denote the set of real numbers, the set of integers, and
the set of natural numbers, with $\mathbb{R}$, $\mathbb{Z}$, and
$\mathbb{N}$, respectively.

\subsection{General framework: Module}

Let $(\mathcal{G}, +)$ be an Abelian group ($\mathcal{G}$ could be, for
example, $\mathbb{R}^{n}$ or $\mathbb{Z}^{n}$ for $n \in
\mathbb{N}\backslash \{0\}$).  
To define weighted distances, we need not only an internal operator
(here $+$), but also an external operation: $\alpha \cdot \point{p}$
with $\alpha \in \mathbb{R}, \mathbb{Z} \textrm{, or } \mathbb{N}$ and
$\point{p} \in \mathcal{G}$.
Vector spaces (as for example $(\mathbb{R}^{n},\mathbb{R}, +,
\times)$) handle external operations, but are too restrictive
($(\mathbb{Z}^{n}, \mathbb{Z}, +, \times)$, for example, is not a
vector space). To be general enough, we use modules.

\begin{definition}[Module]
Let $\mathcal{R}$ be a commutative ring (for example $(\mathbb{Z}, +,
\times)$) with two neutral elements $0$ and $1$. A set $\mathcal{G}$
is called a {\em module on $\mathcal{R}$} (or $\mathcal{R}$-module) if
$\mathcal{G}$ has a commutative group operation $+$, an external law $\cdot$,
and satisfies the following properties:
\begin{eqnarray*}
  \textrm{ (identity) } & \forall \point{p} \in \mathcal{G} & 1 \cdot
  \point{p} = \point{p} \\ 
  \textrm{ (associativity) } & \forall \point{p} \in \mathcal{G} \textrm{, }
  \forall \alpha, \beta \in \mathcal{R}, \ & \alpha \cdot ( \beta
  \cdot \point{p}) = (\alpha \times \beta) \cdot \point{p} \\
  \textrm{ (scalar distributivity) } & \forall \point{p} \in \mathcal{G}
  \textrm{, } \forall \alpha, \beta \in \mathcal{R}, & (\alpha +
  \beta) \cdot \point{p} = \alpha \cdot \point{p} +  \beta \cdot \point{p} \\ 
  \textrm{ (vectorial distributivity) } & \forall \point{p}, \point{q} \in \mathcal{G}
  \textrm{, } \forall \alpha \in \mathcal{R}, & \alpha \cdot (\point{p} + \point{q}) =
  \alpha \cdot \point{p} + \alpha \cdot \point{q}\\
\end{eqnarray*}
\end{definition}

We now consider sub-rings $(\mathcal{R}, +, \times)$ of
$(\mathbb{R}, +, \times)$. Given
an Abelian group $(\mathcal{G}, +)$ we consider the $\mathcal{R}$-module
$(\mathcal{G}, \mathcal{R}, +, \cdot)$. 

\begin{remark}
\label{rem:vspace_module}
The main difference between a module and a vector space is the
non-invertibility (with respect to the external law) of the elements
of its associated ring $\mathcal{R}$ (e.g. $2 \in \mathbb{Z}$ but $1/2
\not\in \mathbb{Z}$). 
A basis of a module $\mathcal{G}$ of dimension $n$ is a
family of $n$ independent vectors $(\vect{v_{i}})_{i=1..n}$ ($\forall
\alpha \in \mathcal{R}^{n} \textrm{, } \sum_{i=1}^{n} \alpha_{i} \cdot
\vect{v_{i}} = 0 \Leftrightarrow  \forall i \in [1..n] \textrm{, }
\alpha_{i} = 0$). 
But a linearly independent family of $n$  vectors may not be a basis
of $\mathcal{G}$. For example, $\left( \overrightarrow{(1,0)},
\overrightarrow{(0,2)} \right)$ is a
family of two independent vectors of $\mathbb{Z}^2$ but is not a
basis of $\mathbb{Z}^2$ ($ \overrightarrow{(1,1)} \in  \mathbb{Z}^2$
can not be reached by a linear combination of $\overrightarrow{(1,0)}$
and $\overrightarrow{(0,2)}$ with coefficients taken in the ring
$\mathbb{Z}$). 
\end{remark}

\begin{definition}[Distance]
A {\em distance} on a group $\mathcal{G}$, having values in
$\mathcal{R}$, called $(d, \mathcal{G}, \mathcal{R})$ is
a function $d\colon \mathcal{G} \times \mathcal{G} \mapsto \mathcal{R}$
which satisfies the following properties:
\begin{eqnarray*}
\textrm{(positive) } & \forall \point{p},\point{q} \in \mathcal{G} &\ 
d(\point{p},\point{q}) \geq 0 \\
\textrm{(definite) } & \forall \point{p},\point{q} \in \mathcal{G} &\ 
d(\point{p},\point{q}) = 0 \Leftrightarrow \point{p}=\point{q} \\
\textrm{(symmetric) }   & \forall \point{p}, \point{q} \in \mathcal{G}
&\ d(\point{p},\point{q}) = d(\point{q},\point{p}) \\ 
\end{eqnarray*}
\end{definition}

\begin{definition}[Metric]
Given a distance $(d, \mathcal{G}, \mathcal{R})$, $d$ is called a {\em
metric} if it also satisfies the following property:
\begin{eqnarray*}
\textrm{(triangular inequality)} & \forall \point{p},\point{q},\point{r}
\in \mathcal{G} &\ d(\point{p},\point{q}) \leq d(\point{p},\point{r}) +
d(\point{r},\point{q})\\ 
\end{eqnarray*}
\end{definition}

\begin{definition}[Norm]
Given a metric $(d, \mathcal{G}, \mathcal{R})$, $d$ is called a {\em
norm} on the module $(\mathcal{G}, \mathcal{R}, +, \cdot)$ if
it also satisfies the following property:
\begin{eqnarray*}
\textrm{(positive homogeneity) } & \forall \point{p}, \point{q} \in
\mathcal{G}, \forall \alpha \in \mathcal{R}, & d(\alpha \cdot
\point{p}, \alpha \cdot \point{q}) = |\alpha| \times d(\point{p},
\point{q})\\ 
\end{eqnarray*}
\end{definition}

\begin{definition}[Image]
An {\em image} is a function $I\colon \mathcal{S} \rightarrow
\mathcal{R} \cup \infty$, where $\mathcal{S}$ is a finite subset of
$\mathcal{G}$.
\end{definition}

\begin{definition}[Distance map]\label{def:dist-map}
Given a binary image $I$, i.e., $I(\point{p})\in\{0,1\}$,let $X=\{
\point{p} \in \mathcal{S}, I(\point{p}) = 1 \}$ be the foreground and
$\overline{X}=\{ \point{p} \in \mathcal{S}, I(\point{p}) = 0 \}$ be
the background.  
Given a distance $(d, \mathcal{G}, \mathcal{R} \cup \infty)$, 
the {\em distance map}, $DM_{X}$, of $I$ is a grey level image,
$DM_{X}(\point{p}) \in \mathcal{R}$,  where the value of each point of
the foreground corresponds to its shortest distance to the background,
i.e. 
\begin{displaymath}
DM_{X} \ \colon \ \left\{ 
\begin{array}{l} 
\mathcal{S} \longrightarrow \mathcal{R} \cup \infty \\ 
\point{p} \longmapsto d(\point{p}, \overline{X}) = \inf_{\point{q} \in
\overline{X}} d(\point{p}, \point{q})\\  
\end{array} \right.
\end{displaymath}
\end{definition}

Given a module (i.e. a set of vectors) $(\mathcal{G}, \mathcal{R})$
and  an affine space (i.e, a set of points in $I$)
$\widetilde{\mathcal{G}}$,  we say that $\widetilde{\mathcal{G}}$ is
equivalent to $\mathcal{G}$ in the the following sense:
\begin{displaymath}
\forall \point{p}, \point{q} \in \widetilde{\mathcal{G}} \textrm{, }
\exists \vect{x} \in \mathcal{G} \textrm{, } \point{q} = \point{p} +
\vect{x} \textrm{; }
\end{displaymath}
this vector $\vect{x}$ is denoted $\vect{pq} = \point{q} -
\point{p}$. For every module $\mathcal{G}$, there exists an equivalent
affine space $\widetilde{\mathcal{G}}$ and vice versa. 
These spaces always have the same dimension. 
Given a point $O$ of $\widetilde{\mathcal{G}}$,
$(\widetilde{\mathcal{G}}, O)$ is an affine space with an origin. 
An origin is required to build a basis of $\widetilde{\mathcal{G}}$ and
allows to define the following operations. 
\begin{displaymath}
  \forall \point{p}, \point{q} \in \widetilde{\mathcal{G}}, \forall
  \lambda \in \mathcal{R} \textrm{,  } \lambda \point{p} = O + \lambda
  \vect{Op} \textrm{,  } \point{p} + \point{q} = O + \vect{Op} +
  \vect{Oq}.
\end{displaymath}
In the following, we will consider $\widetilde{\mathcal{G}}$ and
$\mathcal{G}$ as the same set and denote $\point{p} = \vect{Op}$.

\subsection{Chamfer masks and weighted distances}
 
We now consider more particularly sub-modules of $(\mathbb{R}^{n},
\mathbb{R})$, (for example $(\mathbb{Q}^{n}, \mathbb{Q})$ and $(\mathbb{Z}^{n},
\mathbb{Z})$) equivalent to the affine space
$\mathbb{R}^{n}$, with the origin
$O(0, 0, ..., 0)$ and the canonical basis $\left\{ (1, 0, 0, ..., 0),
(0, 1, 0, ..., 0), ...,(0, 0, 0, ..., 1)  \right\}$.
For any vector or point $\point{x} \in \mathcal{G}$, we denote
$\point{x} = (x^{i})_{i=1..n}$ the decomposition of $\point{x}$ in
this canonical basis.

Among  numerous other methods, distance maps can be computed
by propagation of local distances using a chamfer mask.
The latter is defined as follows:
\begin{definition}[Chamfer mask]
\label{def:chamfer-mask}
A {\em chamfer mask} $\mathcal{C}$ is a finite set of weighted vectors 
$\{ (\vect{v}_{k}, w_{k})_{k \in [1..m]} \in \mathcal{G} \times
\mathcal{R} \}$ which contains a basis of $(\mathcal{G}, \mathcal{R})$
and satisfies the following properties: 
\begin{eqnarray*}
(\textrm{positive weights}) & \forall k 
& w_k \in \mathbb{R}_+ \textrm{ and } \vect{v}_{k} \neq 0 \\
(\textrm{symmetry}) & (\vect{v}, w) \in \mathcal{C} & \Longrightarrow
( - \vect{v} , w ) \in \mathcal{C} 
\end{eqnarray*}  
\end{definition}

Intuitively, we speak about a distance between two points as the
length of the shortest path between these points. In the case of
weighted distances, we restrict the possible paths to those allowed by
the chamfer mask. 
In this case, a path is an ordered sequence of chamfer mask vectors
$\left\{ \vect{v}_{i_{1}}, \vect{v}_{i_{2}}, ... , \vect{v}_{i_{l}} \right\}$
with $\vect{v}_{i_{1}}$ having its origin at $\point{p}$, $\vect{v}_{i_{l}}$
having its end at $\point{q}$, and $i_{1}, i_{2}, ..., i_{l} \in [1...m]$. In
other words, $\forall j \in [1..l] \vect{v}_{i_{j}} \in \mathcal{C}$ and
$\vect{pq} = \vect{v}_{i_{1}} + \vect{v}_{i_{2}} + ... + \vect{v}_{i_{l}}$. 
As $\mathcal{G}$ is an Abelian group and a module on
$\mathcal{R}$, the order of the vectors does not matter, and we can
bring together the different vectors $(\vect{v}_{i_{j}})_{j \in [1..l]}$
of the path equal to the same mask vector $\vect{v}_{k}, k \in
[1..m]$. Observe that $l$ is the number of vectors within the path and
$m$ is the size of the chamfer mask. 
Moreover, as for each vector $\vect{v}_{k} \in
{\mathcal{C}}$, $- \vect{v}_{k} \in \mathcal{C}$, we can consider only
positive coefficients. We obtain the following definition:
\begin{definition}[Path from $\point{p}$ to $\point{q}$]
\label{def:chamfer-path}
Given a chamfer mask\\ 
$\mathcal{C}=\{ (\vect{v}_{k}, w_{k})_{k \in [1..m]} \in
\mathcal{G} \times \mathcal{R}\}$ and two points
$\point{p}, \point{q} \in \mathcal{G}$, a path
$\mathcal{P}_{\point{pq}}$ from $\point{p}$ to $\point{q}$ is a
sequence of vectors $\vect{v}_{k}$ of the mask $\mathcal{C}$ such
that: 
\begin{displaymath}
\mathcal{P}_{\point{pq}} = \sum_{k=1}^{m} \alpha_{k} \vect{v}_{k} = \vect{pq}
\textrm{ with } \forall k \in [1..m] \textrm{, } \alpha_{k} \in \mathcal{R}^{+}
\end{displaymath}
\end{definition}

\begin{definition}[Cost of a path]
\label{def:chamfer-cost}
The cost $\mathcal{W}$ of a such a path $\mathcal{P}_{\point{pq}}$ is defined by: 
\begin{displaymath}
\mathcal{W}(\mathcal{P}_{\point{pq}}) = \sum_{k=1}^{m} \alpha_{k} \cdot w_{k}
\end{displaymath}
\end{definition}

Since a mask $\mathcal{C}$ contains a basis of $\mathcal{G}$, and is
symmetric, such a path always exists for any couple of points
$(\point{p}, \point{q})$ with positive coefficients.

\begin{definition}[Weighted distance]
\label{def:weighted-distance}
A weighted distance $d_{\mathcal{C}}$ associated with a chamfer mask
$\mathcal{C}$ between two points $\point{p}$ and $\point{q}$ in
$\mathcal{G}$ is the minimum of the costs associated with paths
$\mathcal{P}_{\point{pq}}$ linking $\point{p}$ to $\point{q}$.
\begin{displaymath}
d_{\mathcal{C}}(\point{p}, \point{q}) = \min \left\{
\mathcal{W}(\mathcal{P}_{\point{pq}}) \right\} 
\end{displaymath}
\end{definition}

\subsection{Weighted distances properties \label{sec:wdp}}
\begin{theorem}
A weighted distance is invariant under translation.
\end{theorem}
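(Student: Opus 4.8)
The plan is to observe that the weighted distance $d_{\mathcal{C}}(\point{p}, \point{q})$ depends on $\point{p}$ and $\point{q}$ only through the difference vector $\vect{pq} = \point{q} - \point{p}$. Indeed, by Definition~\ref{def:chamfer-path} a path $\mathcal{P}_{\point{pq}}$ is a family of nonnegative coefficients $(\alpha_{k})_{k \in [1..m]} \in (\mathcal{R}^{+})^{m}$ subject to the single constraint $\sum_{k=1}^{m} \alpha_{k} \vect{v}_{k} = \vect{pq}$, and by Definition~\ref{def:chamfer-cost} its cost $\sum_{k=1}^{m} \alpha_{k} \cdot w_{k}$ is a function of the coefficients alone, with no further reference to the endpoints. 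Hence the whole construction of Definition~\ref{def:weighted-distance} factors through $\vect{pq}$.

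First I would fix an arbitrary translation vector $\vect{t} \in \mathcal{G}$ and set $\point{p}' = \point{p} + \vect{t}$ and $\point{q}' = \point{q} + \vect{t}$. Since $\mathcal{G}$ is an Abelian group, $\vect{p'q'} = \point{q}' - \point{p}' = (\point{q} + \vect{t}) - (\point{p} + \vect{t}) = \point{q} - \point{p} = \vect{pq}$.

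Next I would exhibit the (trivial) identification between paths from $\point{p}$ to $\point{q}$ and paths from $\point{p}'$ to $\point{q}'$: a tuple $(\alpha_{k})_{k}$ is admissible for the pair $(\point{p}, \point{q})$ exactly when $\sum_{k} \alpha_{k} \vect{v}_{k} = \vect{pq} = \vect{p'q'}$, i.e.\ exactly when it is admissible for the pair $(\point{p}', \point{q}')$. Under this identification the cost functional $\mathcal{W}$ is literally unchanged, so the two sets of attainable costs coincide, and therefore so do their minima; that is, $d_{\mathcal{C}}(\point{p}', \point{q}') = d_{\mathcal{C}}(\point{p}, \point{q})$, which is the claimed translation invariance.

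The argument is essentially a bookkeeping verification, so I do not anticipate a genuine obstacle. The only points deserving a word of care are the use of the Abelian group structure of $\mathcal{G}$ to rearrange $(\point{q} + \vect{t}) - (\point{p} + \vect{t})$, and the remark (already granted in the excerpt, since $\mathcal{C}$ contains a basis of $\mathcal{G}$ and is symmetric) that at least one admissible path exists, so that the minimum in Definition~\ref{def:weighted-distance} is taken over a nonempty set and the equality of the two minima is meaningful.
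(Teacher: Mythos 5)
Your proposal is correct and follows essentially the same route as the paper: both arguments observe that a path and its cost depend only on the difference vector $\vect{pq}$, which is unchanged by translating both endpoints, so the sets of admissible paths (and hence the minima of their costs) coincide. Your version is slightly more careful in noting the nonemptiness of the set of paths, but the core argument is identical.
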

\begin{proof}
  This proof can be found in \cite{thiel:hdr:2001}.
  Given any three points $\point{p}, \point{q},\point{r}$ in
  $\widetilde{\mathcal{G}}$, $d_{\mathcal{C}}(\point{r} + \point{p},
  \point{r} + \point{q}) = d_{\mathcal{C}}(\point{p}, \point{q})$.
  Indeed, $\vect{(r+p)(r+q)} = \vect{pq}$ and any path
  $\mathcal{P}_{\scriptscriptstyle{\vect{(r+p)(r+q)}}} = \sum_{k=1}^{m} \alpha_{k}
  \vect{v}_{k} = \vect{(r+p)(r+q)} = \vect{pq}$ from $\point{r+p}$ to
  $\point{r+q}$ is also a path from $\point{p}$ to $\point{q}$, with
  the same cost: $\mathcal{W}(\mathcal{P}_{\vect{(r+p)(r+q)}}) =
  \mathcal{W}(\mathcal{P}_{\vect{pq}}) = \sum_{k=1}^{m} \alpha_{k}
  w_{k}$. 
\end{proof}
\begin{corollary}
A property of a weighted distance between any two points
$(\point{q}, \point{r})$ of $\mathcal{G}$ can be expressed as a
property of a weighted distance between the origin $O$ and a point
$\point{p}$. 
\end{corollary}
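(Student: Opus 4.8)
The plan is to derive this immediately from the preceding translation-invariance theorem, which lets us slide any pair of points so that one of them lands on the origin. Given arbitrary points $\point{q}, \point{r} \in \widetilde{\mathcal{G}}$, I would apply the theorem with the translation vector $-\point{r}$, so that
\begin{displaymath}
d_{\mathcal{C}}(\point{q}, \point{r}) = d_{\mathcal{C}}(\point{q} - \point{r}, \point{r} - \point{r}) = d_{\mathcal{C}}(\point{q} - \point{r}, O).
\end{displaymath}
Setting $\point{p} = \point{q} - \point{r} = \vect{rq} \in \mathcal{G}$ and invoking the symmetry property of the distance then gives $d_{\mathcal{C}}(\point{q}, \point{r}) = d_{\mathcal{C}}(O, \point{p})$.

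Next I would spell out what ``expressed as a property of'' means here: any statement $\Phi\bigl(d_{\mathcal{C}}(\point{q}, \point{r})\bigr)$ about the weighted distance between a generic pair of points is, by the identity above, literally the same statement $\Phi\bigl(d_{\mathcal{C}}(O, \point{p})\bigr)$ about the weighted distance from the origin to the single point $\point{p} = \point{q} - \point{r}$; conversely, running the same translation backwards shows that a property holding for all pairs $(O, \point{p})$ transfers to all pairs $(\point{q}, \point{r})$ via $\point{p} \mapsto (\point{p} + \point{r}, \point{r})$. So the correspondence $(\point{q}, \point{r}) \leftrightarrow \point{p} = \point{q} - \point{r}$ is the bijection that makes the two formulations equivalent.

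Since the argument is nothing more than an instantiation of the already-proved theorem plus the symmetry axiom from the definition of a distance, there is no real obstacle; the only thing to be careful about is bookkeeping the direction of the translation and the order of the arguments of $d_{\mathcal{C}}$, which the symmetry property resolves. I would keep the write-up to the two displayed identities above and a sentence identifying $\point{p}$ with $\vect{rq}$.
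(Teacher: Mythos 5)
Your argument is correct and is essentially the paper's own: both reduce the claim to the preceding translation-invariance theorem by sliding one endpoint of the pair onto the origin. The only difference is that the paper sets $\point{p} = \point{r} - \point{q}$ and translates by $\point{q}$, so the origin appears directly as the first argument and no appeal to the symmetry of $d_{\mathcal{C}}$ is needed --- a detail worth adopting, since symmetry of the weighted distance is only established later (in the norm theorem) and is therefore not yet available at this point in the paper.
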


Taking $\point{p} = \point{r} - \point{q}$ leads to 
$d_{\mathcal{C}}(O, \point{p}) = d_{\mathcal{C}}(O + \point{q},
\point{r}-\point{q}+\point{q}) = d_{\mathcal{C}}(\point{q}, \point{r})$.

In the following, we will express the properties of weighted distances
from the origin to any point $\point{p}$ of $\mathcal{G}$ and 
denote $d_{\mathcal{C}}(\point{p}) = d_{\mathcal{C}}(\point{O},\point{p})$.

To be able to forecast the weighted distance map, from a single point,
from the chamfer mask $\mathcal{C}$ , we divide $\mathcal{C}$ into several
sectors spanning the points of $\mathcal{G}$ such that the distance to
the origin of a point lying within a sector will only depend on a
restricted number (actually $n$ -- the dimension) of mask weights. We
define the following objects:
 
\begin{definition}[$\mathbb{R}$-sector]
Given a family of $n$ independent vectors of $\mathcal{G}$,
$\left(\vect{v}_{k} \right)_{k \in [1..n]}$, the {\em $\mathbb{R}$-sector}
$\langle \vect{v}_{1}, \vect{v}_{2}, ..., \vect{v}_{n} \rangle$ is
the region of $\mathbb{R}^{n}$ spanned by the vectors $\vect{v}_{1},
\vect{v}_{2}, ..., \vect{v}_{n}$ i.e.:
\begin{displaymath}
  \langle \vect{v}_{1}, \vect{v}_{2}, ..., \vect{v}_{n} \rangle =
  \left\{ \point{p} \in \mathbb{R}^{n} : \vect{p} = \sum_{k=1}^{n}
  \lambda_{k} \vect{v}_{k} \textrm{, } \lambda_{k} \in \mathbb{R}^{+} \right\} 
\end{displaymath}
\end{definition}

\begin{definition}[$\mathcal{G}$-sector]
The {\em $\mathcal{G}$-sector} $\llangle \vect{v}_{1}, \vect{v}_{2},
..., \vect{v}_{n} \rrangle$ is the set of points belonging to
$\mathcal{G}$ which are included in the $\mathbb{R}$-sector $\langle
\vect{v}_{1}, \vect{v}_{2}, ..., \vect{v}_{n} \rangle$. 
\end{definition}

\begin{figure}[ht!]
\begin{center}
\includegraphics[width=0.4\linewidth]{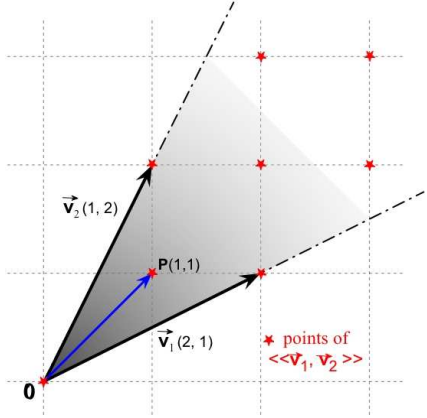}
\caption{\label{fig:R-G-sectors}Example of $\mathbb{R}$-sector and
$\mathcal{G}$-sector in $\left(\mathbb{Z}^{2}, \mathbb{Z}, +, \cdot
\right)$.}
\end{center}
\end{figure}

\begin{remark}
It is possible that a $\mathcal{G}$-sector does not correspond to the
set of points $\point{p}$ such 
that $\point{p} = \sum_{k=1}^{n} \lambda_{k} \vect{v}_{k} \textrm{, }
\lambda_{k} \in \mathcal{R}^{+}$. For example, as illustrated in
Figure \ref{fig:R-G-sectors}, if we take the module
$\left(\mathbb{Z}^{2}, \mathbb{Z}, +, \cdot \right)$, the point
$\point{p} {\scriptstyle \vect{(1,1)}}$ lies in the $\mathbb{R}$-sector
$\left\langle \vect{v}_{1} = {\scriptstyle \vect{(2,1)}}, \vect{v}_{2}
= {\scriptstyle \vect{(1,2)}} \right\rangle$ 
$\left(\vect{(1,1)} = 1/3 \cdot \vect{(2,1)} + 1/3 \cdot
\vect{(1,2)} \right)$ and as it is a point of $\mathbb{Z}^{2}$, it also
lies in the $\mathcal{G}$-sector $\llangle \vect{v}_{1}, \vect{v}_{2}
\rrangle$. But it cannot be written as $\lambda_{1} \cdot \vect{v}_{1}
+ \lambda_{2} \cdot \vect{v}_{2}$ with $\lambda_{1}, \lambda_{2} \in
  \mathbb{Z}$.
\label{remark:R-G-sectors} 
\end{remark}

\begin{definition}[Wedge of a chamfer mask]
We call a {\em wedge} of a chamfer mask $\mathcal{C}$, a
$\mathcal{G}$-sector formed by a family $\mathcal{F} = \left(
\vect{v}_{i_{k}} \right)_{k=1..n}$ of $n$ vectors of $\mathcal{C}$ which 
does not contain any other vector of $\mathcal{C}$.
\end{definition}
To avoid the situation illustrated in Remark 2.2 %\ref{remark:R-G-sectors} 
we consider only $\mathcal{G}$-sectors based on a basis of
$\mathcal{G}$. 
\begin{definition}[$\mathcal{G}$-basis-sector]
We call a $\mathcal{G}$-sector $\llangle
\vect{v}_{1}, ... \vect{v}_{n} \rrangle$ where\\ $\left( \vect{v}_{k}
\right)_{k=1..n}$ is a basis of $\left(\mathcal{G}, \mathcal{R}, +,
\cdot \right)$ a $\mathcal{G}$-basis-sector.
\end{definition}
By definition of a basis, a $\mathcal{G}$-basis-sector corresponds
exactly to the the set of points $\point{p}$ such that $\vect{p} =
\sum_{k=1}^{n} \lambda_{k} \vect{v}_{k}, \lambda_{k} \in
\mathcal{R}^{+}$.

Given a family $\mathcal{F} = \left(\vect{v}_{k}\right)_{k \in
[1..n]}$ of $n$ independent vectors, we denote
$\Delta_{\mathcal{F}}^{0} \in \mathcal{R}$
\begin{displaymath}
\Delta_{\mathcal{F}}^{0} = 
\det (\vect{v}_{1}, \vect{v}_{2}, ..., \vect{v}_{n}) = 
\left| \begin{array}{l l l l}  
v^{1}_{1} & v^{1}_{2} & \cdots & v^{1}_{n} \\ 
v^{2}_{1} & v^{2}_{2} & \cdots & v^{2}_{n} \\ 
\vdots    & \vdots    & \ddots & \vdots    \\
v^{n}_{1} & v^{n}_{2} & \cdots & v^{n}_{n} \\ 
\end{array} \right| 
\end{displaymath}

and $\forall k \in [1..n]$, we consider the function
$\Delta_{\mathcal{F}}^{k} \ \colon\ \left\{ \begin{array}{l} 
\mathcal{G} \longrightarrow \mathcal{R} \\ 
\vect{x} \longmapsto \Delta_{\mathcal{F}}^{k}(\vect{x})\\
\end{array} \right.$ such that: 

\begin{displaymath}
\Delta_{\mathcal{F}}^{k}(\vect{x}) =  
\det (\vect{v}_{1}, ..., \vect{v}_{k-1}, \vect{x}, \vect{v}_{k+1},
..., \vect{v}_{n}) =  
\left| \begin{array}{c c c c c c c}  
v^{1}_{1} & \cdots & v_{k-1}^{1} & x^{1} & v_{k+1}^{1} & \cdots & v_{n}^{1} \\ 
v^{2}_{1} & \cdots & v_{k-1}^{2} & x^{2} & v_{k+1}^{2} & \cdots & v_{n}^{2} \\ 
\vdots    & \ddots & \vdots      & \vdots & \vdots     & \ddots & \vdots    \\
v^{n}_{1} & \cdots & v_{k-1}^{n} & x^{n} & v_{k+1}^{n} & \cdots & v_{n}^{n} \\ 
\end{array} \right|. 
\end{displaymath}

\begin{lemma}
\label{lemma:value}
A family $\mathcal{F} = \left( \vect{v}_{k} \right)_{k \in [1..n]}$
is a basis of $\left(\mathcal{G}, \mathcal{R}, +, \cdot \right)$, 
iff
\begin{displaymath}
\forall \vect{x} \in \mathcal{G} \textrm{, } \forall k \in [1..n]
\textrm{  }
\frac{1}{\Delta_{\mathcal{F}}^{0}} \times
\Delta_{\mathcal{F}}^{k}(\vect{x}) \in \mathcal{R}. 
\end{displaymath}
\end{lemma}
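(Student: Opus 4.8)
The plan is to prove both directions of the equivalence using the fact that, for a family $\mathcal{F}=(\vect{v}_k)_{k\in[1..n]}$ of $n$ independent vectors of $\mathcal{G}\subseteq\mathbb{R}^n$, any $\vect{x}\in\mathbb{R}^n$ can be written uniquely as $\vect{x}=\sum_{k=1}^n\lambda_k\vect{v}_k$ with $\lambda_k\in\mathbb{R}$, and by Cramer's rule these coordinates are exactly $\lambda_k=\Delta_{\mathcal{F}}^k(\vect{x})/\Delta_{\mathcal{F}}^0$ (note $\Delta_{\mathcal{F}}^0\neq0$ since the $\vect{v}_k$ are independent). So the quantity $\frac{1}{\Delta_{\mathcal{F}}^0}\Delta_{\mathcal{F}}^k(\vect{x})$ appearing in the statement is nothing but the $k$-th coordinate of $\vect{x}$ in the $\mathbb{R}$-basis $\mathcal{F}$. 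The lemma then just says: $\mathcal{F}$ is a $\mathcal{G}$-basis iff every element of $\mathcal{G}$ has all its coordinates (in this $\mathbb{R}$-basis) lying in $\mathcal{R}$.

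First I would establish the forward direction ($\Rightarrow$). Suppose $\mathcal{F}$ is a basis of $(\mathcal{G},\mathcal{R},+,\cdot)$. Then by definition every $\vect{x}\in\mathcal{G}$ can be written $\vect{x}=\sum_{k=1}^n\alpha_k\vect{v}_k$ with $\alpha_k\in\mathcal{R}$. Since $\mathcal{R}\subseteq\mathbb{R}$, this is also a decomposition over $\mathbb{R}$, and by uniqueness of coordinates in $\mathbb{R}^n$ (the $\vect{v}_k$ being $\mathbb{R}$-independent because $\Delta_{\mathcal{F}}^0\neq0$) we get $\alpha_k=\Delta_{\mathcal{F}}^k(\vect{x})/\Delta_{\mathcal{F}}^0$ for each $k$. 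Hence $\frac{1}{\Delta_{\mathcal{F}}^0}\Delta_{\mathcal{F}}^k(\vect{x})=\alpha_k\in\mathcal{R}$, as required.

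For the converse ($\Leftarrow$), suppose that for all $\vect{x}\in\mathcal{G}$ and all $k$, $\frac{1}{\Delta_{\mathcal{F}}^0}\Delta_{\mathcal{F}}^k(\vect{x})\in\mathcal{R}$. I must check $\mathcal{F}$ is a $\mathcal{G}$-basis, i.e. (i) it is $\mathcal{R}$-independent and (ii) it $\mathcal{R}$-spans $\mathcal{G}$. For (ii): given $\vect{x}\in\mathcal{G}$, set $\alpha_k:=\Delta_{\mathcal{F}}^k(\vect{x})/\Delta_{\mathcal{F}}^0$, which lies in $\mathcal{R}$ by hypothesis; Cramer's rule gives $\vect{x}=\sum_k\alpha_k\vect{v}_k$, an $\mathcal{R}$-linear combination, so $\mathcal{F}$ spans $\mathcal{G}$. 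For (i): the $\vect{v}_k$ are $\mathbb{R}$-independent since $\Delta_{\mathcal{F}}^0=\det(\vect{v}_1,\dots,\vect{v}_n)\neq0$ (they were assumed independent in $\mathcal{G}$, hence in $\mathbb{R}^n$), and $\mathbb{R}$-independence implies $\mathcal{R}$-independence because $\mathcal{R}\subseteq\mathbb{R}$. Combining (i) and (ii), $\mathcal{F}$ is a basis of $(\mathcal{G},\mathcal{R},+,\cdot)$.

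I expect the main (minor) obstacle to be care with what "independent family" means in the lemma's hypothesis and making the Cramer's-rule identification clean: one should note explicitly that $\Delta_{\mathcal{F}}^0\in\mathcal{R}$ is nonzero as a real number (so division is legitimate in $\mathbb{R}$, even though $1/\Delta_{\mathcal{F}}^0$ need not lie in $\mathcal{R}$), that the functions $\Delta_{\mathcal{F}}^k$ are $\mathbb{R}$-linear in $\vect{x}$ and that $\Delta_{\mathcal{F}}^k(\vect{v}_j)=\delta_{kj}\,\Delta_{\mathcal{F}}^0$, which is the content of Cramer's rule here. Everything else is a direct unwinding of definitions; the crucial conceptual point, worth stating up front, is simply that $\Delta_{\mathcal{F}}^k(\vect{x})/\Delta_{\mathcal{F}}^0$ is the $k$-th coordinate of $\vect{x}$ relative to $\mathcal{F}$.
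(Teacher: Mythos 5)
Your proof is correct and takes essentially the same route as the paper's: both use Cramer's rule to identify $\frac{1}{\Delta_{\mathcal{F}}^{0}}\Delta_{\mathcal{F}}^{k}(\vect{x})$ as the $k$-th coordinate of $\vect{x}$ in the $\mathbb{R}$-basis $\mathcal{F}$, after which the equivalence is just the definition of a module basis (all coordinates lying in $\mathcal{R}$). You are slightly more explicit about separating the two implications and about $\mathcal{R}$-independence following from $\mathbb{R}$-independence, but the substance is identical.
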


\begin{proof}
The proof in $\left(\mathbb{Z}^{2}, \mathbb{Z} \right)$ can be found
in \cite{hardy:book:1978}.\\
By definition of a basis, a family $\mathcal{F} = \left( \vect{v}_{k}
\right)_{k \in [1..n]}$ is a basis of $\left(\mathcal{G}, \mathcal{R},
+, \cdot \right)$ iff $\forall \point{x} \in \mathcal{G}$, $\exists
\alpha_{1}, \alpha_{2}, ..., \alpha_{n} \in \mathcal{R}$ such that
$\point{x} = \sum_{k=1}^{n} \alpha_{k} \cdot \vect{v}_{k}$. \\
As we consider sub-modules of $\mathbb{R}^{n}$,
$\point{x} \in \mathbb{R}^{n}$ and $\forall k \in [1..n]$, $\vect{v}_{k} \in
\mathbb{R}^{n}$. 
If the vectors of $\mathcal{F}$ are not independent, $\mathcal{F}$ is
not a basis of $\mathcal{G}$, and as $\Delta_{\mathcal{F}}^{0} = 0$,
$\forall \vect{x} \in \mathcal{G}, \frac{1}{\Delta_{\mathcal{F}}^{0}}
\Delta_{\mathcal{F}}^{k}(\vect{x}) \not\in \mathcal{R}$.
If $\mathcal{F}$ is an independent family of vectors, 
as $\mathbb{R}^{n}$ is a vector space, $\mathcal{F}$
is a basis of $\left( \mathbb{R}^{n}, \mathbb{R} \right)$ and $\exists
\alpha_{1}, \alpha_{2}, ..., \alpha_{n} \in \mathbb{R}$ such that
$\point{x} = \sum_{k=1}^{n} \alpha_{k} \cdot \vect{v}_{k}$.
This can be written: 
\begin{displaymath}
\begin{array}{c c c}
\left( \begin{array}{c} x^{1}\\ x^{2}\\ \vdots \\ x^{n}\end{array}\right) = 
&
\underbrace{
\left( \begin{array}{c c c c} 
v^{1}_{1} & v^{1}_{2} & \cdots & v^{1}_{n}\\
v^{2}_{1} & v^{2}_{2} & \cdots & v^{2}_{n}\\
\vdots    & \vdots    & \ddots & \vdots   \\
v^{n}_{1} & v^{n}_{2} & \cdots & v^{n}_{n}
\end{array}\right) } 
& 
\times \left( \begin{array}{c} \alpha_{1}\\ \alpha_{2}\\ \vdots \\
\alpha_{n} \end{array} \right) \\
 & \mathcal{M}_{\mathcal{F}} & 
\end{array}
\end{displaymath}
As $\mathcal{F}$ is an independent family, $\Delta_{\mathcal{F}}^{0}
= \det(\mathcal{M}_{\mathcal{F}})
\neq 0$ and the matrix $\mathcal{M}_{\mathcal{F}}$ can be inverted
such that 
\begin{displaymath}
\left( \begin{array}{c} \alpha_{1}\\ \alpha_{2}\\ \vdots \\
\alpha_{n} \end{array} \right) = 
\left( \begin{array}{c c c c} 
v^{1}_{1} & v^{1}_{2} & \cdots & v^{1}_{n}\\
v^{2}_{1} & v^{2}_{2} & \cdots & v^{2}_{n}\\
\vdots    & \vdots    & \ddots & \vdots   \\
v^{n}_{1} & v^{n}_{2} & \cdots & v^{n}_{n}
\end{array}\right)^{-1} 
 \left( \begin{array}{c} x^{1}\\ x^{2}\\ \vdots \\ x^{n}\end{array}\right).
\end{displaymath}
Inverting $\mathcal{M}_{\mathcal{F}}$ using Cramer's rule leads to 
\begin{displaymath}
 \forall k \in [1..n] \textrm{, } \alpha_{k} = \frac{1}{\Delta_{\mathcal{F}}^{0}} \times \Delta_{\mathcal{F}}^{k},
\end{displaymath}
and $\mathcal{F}$ is a basis of $\left( \mathcal{G}, \mathcal{R}
\right)$ iff $\forall k \in [1..n]$, $\alpha_{k} \in \mathcal{R}$.
\end{proof}
When $\left(\mathcal{G}, \mathcal{R} \right) = \left( \mathbb{Z}^{n},
\mathbb{Z} \right)$ we obtain the condition $\Delta_{\mathcal{F}}^{0}
= \pm 1$ for cone regularity defined in \cite{remy:iwcia:2000}. 

We can always organize $\mathcal{C}$ as a set of wedges (taking
$n$-tuples of vectors that do not contain any other vectors).
To avoid the situation in Remark 2.2 %\ref{remark:R-G-sectors} 
and to be able to
forecast the final weighted distance from the chamfer mask, we choose
masks whose wedges are all $\mathcal{G}$-basis. When $(\mathcal{G},
\mathcal{R}) = (\mathbb{Z}^{2}, \mathbb{Z})$, taking every couple
$(\vect{u}, \vect{v})$ of adjacent vectors of $\mathcal{C}$ in
clockwise (or counter-clockwise) order gives such an organization. If,
for each couple $\det(\vect{u}, \vect{v}) = \pm 1$, then, they are
a $\mathcal{G}$-basis (cf. \cite{nacken:jmiv:1994}). 
For example, in Figure \ref{fig:shortestPath} of
Remark \ref{rem:shortestPath}, 
$\left\{ 
{\scriptstyle \llangle \vect{v}_{1}, \vect{v}_{2} \rrangle} \right.$, 
${\scriptstyle \llangle \vect{v}_{2}, \vect{v}_{3} \rrangle}$, 
${\scriptstyle \llangle \vect{v}_{3}, -\vect{v}_{1} \rrangle}$, 
${\scriptstyle \llangle -\vect{v}_{1}, -\vect{v}_{2} \rrangle}$, 
${\scriptstyle \llangle -\vect{v}_{2}, -\vect{v}_{3} \rrangle}$, 
$\left. {\scriptstyle \llangle -\vect{v}_{3}, \vect{v}_{1} \rrangle} \right\}$
is an organization of $\mathcal{C}$ in $G$-basis wedges. For $n \geq
3$, this organization may be more complicated. Indeed several ways of
organizing $n+1$ independent vectors into two wedges may exist: for
example, if we take the vectors $\vect{v}_{1} = \vect{(1,0,0)}$,
$\vect{v}_{2} = \vect{(1,1,0)}$, $\vect{v}_{3} = \vect{(1,1,1)}$,
$\vect{v}_{4} = \vect{(1,0,1)}$ and their symmetric vectors, the wedges
$\llangle \vect{v}_{1}, \vect{v}_{2}, \vect{v}_{3} \rrangle$ and
$\llangle \vect{v}_{3}, \vect{v}_{4}, \vect{v}_{1} \rrangle$ are
a $\mathcal{G}$-basis, but we can also consider the wedges $\llangle
\vect{v}_{4}, \vect{v}_{1}, \vect{v}_{2} \rrangle$ and $\llangle
\vect{v}_{2}, \vect{v}_{3}, \vect{v}_{4}
\rrangle$. In \cite{fouard:ivc:2005}, an automatic recursive method is
given; it is based on Farey series to organize chamfer masks of
$(\mathbb{Z}^{3}, \mathbb{Z})$ with $\mathcal{G}$-basis wedges. 
In the general case, considering a mask $\mathcal{C}$ containing only a
$\mathcal{G}$-basis $\mathcal{F} = (\vect{v}_{k}, w_{k})_{k=1..n}$ and
their symmetric wedges $\mathcal{F}_{0} = (-\vect{v}_{k}, w_{k})_{k=1..n}$,
leads to two
$\mathcal{G}$-basis wedges (if $\forall \vect{x} \in \mathcal{G},
\forall k \in [1..n], \frac{1}{\Delta_{\mathcal{F}}^{0}}
\Delta_{\mathcal{F}}^{k}(\vect{x}) \in \mathcal{R}$,  
$\frac{1}{\Delta_{\mathcal{F}_{0}}^{0}}
\Delta_{\mathcal{F}_{0}}^{k}(\vect{x}) = \pm \frac{1}{\Delta_{\mathcal{F}}^{0}}
\Delta_{\mathcal{F}}^{k}(\vect{x})\in \mathcal{R}$ as $\mathcal{R}$ is a
group). Moreover, the other wedges are $\mathcal{F}_{l} =
(\vect{v}_{1}, ..., \vect{v}_{l-1}, -\vect{v}_{l}, \vect{v}_{l+1},
..., \vect{v}_{n})$ for $l \in [1..n]$. They all are
$\mathcal{G}$-basis wedges as $\forall \vect{x} \in \mathcal{G},
\forall l \in [1..n], \frac{1}{\Delta_{\mathcal{F}_{l}}^{0}} 
\Delta_{\mathcal{F}_{l}}^{k}(x) = \pm \frac{1}{\Delta_{\mathcal{F}}^{0}}
\Delta_{\mathcal{F}}^{k}(x)\in \mathcal{R}$ as $\mathcal{R}$ is a
group. We can then add vectors to this mask, taking care of keeping an
organization in $\mathcal{G}$-basis wedges. 

\begin{remark}
\label{rem:shortestPath}
Given a chamfer mask $\mathcal{C} = \{ (\vect{v}_{k}, w_{k})_{k \in
[1..n]} \in \mathcal{G} \times \mathcal{R}\}$ which is organized in
$\mathcal{G}$-basis wedges, and given any point $\point{p}$ lying in a
wedge $\llangle \vect{v}_{i_{1}}, ..., \vect{v}_{i_{n}} \rrangle$ of
$\mathcal{C}$, there exist a path  $\mathcal{P}_{\point{p}} = \sum_{k=1}^{n}
\alpha_{k} \vect{v}_{k}$ with $\forall k \in [1..n]$ (i.e. the point
$\point{p}$ can by reached by a linear combination of only $n$ vectors
among the $m$ vectors of the mask $\mathcal{C}$ with coefficients in
$\mathcal{R}^{+}$).
However, the final weighted distance at this point may not be a linear
combination of the $n$ weights corresponding to {\em these}  $n$ vectors in
the chamfer mask.

For example, in $\left(\mathbb{Z}^{2}, \mathbb{Z}\right)$, let us
consider the mask $\mathcal{C}$ containing the weighted vectors
{\scriptsize $\left(\vect{v}_{1}, w_{1} \right) = \left( \vect{(1,0)},
2\right)$, $\left(\vect{v}_{2}, w_{2} \right) = \left( \vect{(2,1)},
5\right)$, $\left(\vect{v}_{3}, w_{3} \right) = \left( \vect{(1,1)},
1\right)$} and their symmetric vectors. 
The families {\scriptsize $\mathcal{F}_{1} = \left( \vect{v}_{1}, \vect{v}_{2}
\right)$} and {\scriptsize $\mathcal{F}_{2} = \left( \vect{v}_{2}, \vect{v}_{3}
\right)$}, generating the wedges 
{\scriptsize $\mathcal{S}_{1} = \llangle \vect{v}_{1}, \vect{v}_{2}
\rrangle$}
 and
{\scriptsize $\mathcal{S}_{2} = \llangle \vect{v}_{2}, \vect{v}_{3}
\rrangle$} respectively, are a basis of $(\mathbb{Z}^{2}, \mathbb{Z})$. 
Indeed,
$\Delta_{\mathcal{F}_{1}}^{0} = \left| \begin{array}{c c} 1 & 2\\ 0 & 1\\
\end{array}\right| = 1$ and $\forall \vect{x} \in \mathbb{Z}^2$,
$x^{1}$, $x^{2} \in \mathbb{Z}$ and: 
\begin{displaymath}
\frac{1}{\Delta_{\mathcal{F}_{1}}^{0}} \times
\Delta_{\mathcal{F}_{1}}^{1}(\vect{x}) = \frac{1}{1} \times \left|
\begin{array}{c c} x^{1} & 2\\ x^{2} & 1\\ \end{array}\right| = 1 \cdot x^{1}
- 2 \cdot x^{2} \in \mathbb{Z}
\end{displaymath}
\begin{displaymath}
\frac{1}{\Delta_{\mathcal{F}_{1}}^{0}} \times
\Delta_{\mathcal{F}_{1}}^{2}(\vect{x}) = \frac{1}{1} \times \left|
\begin{array}{c c}  1 & x^{1}\\ 0 & x^{2}\\ \end{array}\right| = 1 \cdot x^{2}
- 0 \cdot x^{1} \in \mathbb{Z}.
\end{displaymath}
In the same way for $\mathcal{F}_{2}$, we have:
$\Delta_{\mathcal{F}_{2}}^{0} = \left| \begin{array}{c c} 2 & 1\\ 1 & 1\\
\end{array}\right| = 1$ and $\forall \vect{x} \in \mathbb{Z}$,
$x^{1}$, $x^{2} \in \mathbb{Z}$ and: 
\begin{displaymath}
\frac{1}{\Delta_{\mathcal{F}_{2}}^{0}} \times
\Delta_{\mathcal{F}_{2}}^{1}(\vect{x}) = \frac{1}{1} \times \left|
\begin{array}{c c} x^{1} & 1\\ x^{2} & 1\\ \end{array}\right| = 1 \cdot x^{1}
- 1 \cdot x^{2} \in \mathbb{Z}
\end{displaymath}
\begin{displaymath}
\frac{1}{\Delta_{\mathcal{F}_{2}}^{0}} \times
\Delta_{\mathcal{F}_{2}}^{2}(\vect{x}) = \frac{1}{1} \times \left|
\begin{array}{c c}  2 & x^{1}\\ 1 & x^{2}\\ \end{array}\right| = 2 \cdot x^{2}
- 1 \cdot x^{1} \in \mathbb{Z}
\end{displaymath}

\begin{figure}[!ht]
\begin{center}
\includegraphics[width=0.8\linewidth]{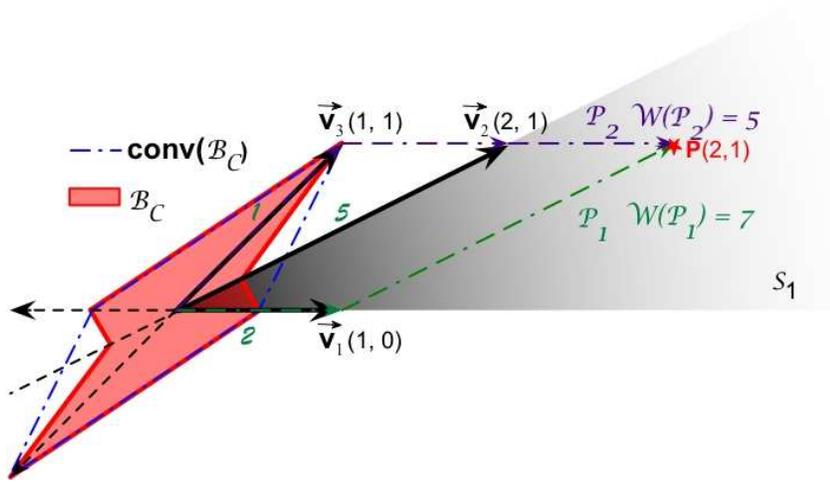}
\caption{\label{fig:shortestPath} Example where the weighted distance
does not depend on $\mathcal{G}$-basis sector vectors.}
\end{center}
\end{figure}
However, as illustrated in Figure \ref{fig:shortestPath} the point
$\point{p}= (2,1)$ lying in the 
$\mathcal{G}$-basis-sector $\mathcal{S}_{1}$ can be reached only by the
path $\mathcal{P}_{1} = 1 \cdot \vect{v}_{1} + 1 \cdot
\vect{v}_{2} = 1 \cdot \vect{v}_{2} + 1 \cdot \vect{v}_{1} $ containing
only vectors of $\mathcal{F}_{1}$ and $\mathcal{W}(\mathcal{P}_{1}) =
w_{1} + w_{2} = 2 + 5 = 7$. 
But $\point{p}$ can also be reached by the following path
$\mathcal{P}_{2} = 2 \cdot \vect{v}_{1} + 1 \cdot \vect{v}_{3} = 1
\cdot \vect{v}_{3} + 2 \cdot \vect{v}_{1} $ belonging neither to the
sector $\mathcal{S}_{1}$ nor to the sector $\mathcal{S}_{2}$ with a
cost $\mathcal{W}(\mathcal{P}_{2}) = 2 \times w_{1} + w_{3} = 2 \times 2
+ 1 = 5$.
As the weighted distance of a point (with respect to the origin) is
the minimum cost of all paths allowed by the mask, we have
$d_{\mathcal{C}}(\point{p}) \leq 5$ and thus
$d_{\mathcal{C}}(\point{p}) \neq \mathcal{W}(\mathcal{P}_{1})$.
\end{remark}

To avoid the situation mentioned in Remark \ref{rem:shortestPath}, we
add restrictions to the mask weights. These restrictions rely on the
fact that the polytope formed by the chamfer mask vectors normalized
by their weights is convex.

\begin{definition}[Normalized chamfer mask polytope] 
We call the polytope of $\mathbb{R}^{n}$ whose faces are the
$n-1$-dimensional pyramids formed by the $n$ vectors of each wedge of
a chamfer mask $\mathcal{M}_{C} = \{ (\vect{v}_{k}, w_{k})_{k \in
[1..m]} \in \mathcal{G} \times \mathcal{R}\}$
normalized by their weights, i.e. $\point{\widetilde{v}}_{i_{k}} =
\frac{1}{w_{i_{k}}} \cdot \point{v}_{i_{k}}\textrm{, } k \in [1..n]$
for each wedge $\llangle \vect{v}_{i_{1}}, ..., \vect{v}_{i_{n}}
\rrangle$ of $\mathcal{C}$ the {\em normalized chamfer mask polytope},
denoted $\mathcal{B}_{\mathcal{C}}$ .
Note that as $(\mathcal{G}, \mathcal{R}, +, \cdot)$ is a module but
not a vector space, these points may not be in $\mathcal{G}$.
\end{definition}

\begin{lemma}
\label{lemma:wedge:convex}
If the normalized polytope $\mathcal{B}_{\mathcal{C}}$ is convex, the
weighted distance of any point $\point{p}$ lying inside a wedge
$\llangle \vect{v}_{i_{1}}, ... \vect{v}_{i_{n}} \rrangle $ of a
chamfer mask $\mathcal{C} = \{ (\vect{v}_{k}, w_{k})_{k \in [1..n]}
\in \mathcal{G} \times \mathcal{R} \}$ depends only on the weights
$w_{i_{1}}, ..., w_{i_{n}}$.
\end{lemma}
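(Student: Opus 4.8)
The plan is to show that any shortest path from $O$ to a point $\point{p}$ in the wedge $\llangle \vect{v}_{i_1},\dots,\vect{v}_{i_n}\rrangle$ can be rewritten, without increasing its cost, as a path that uses only the vectors $\vect{v}_{i_1},\dots,\vect{v}_{i_n}$. Once that is established, Lemma~\ref{lemma:value} applies: since the wedge is a $\mathcal{G}$-basis-sector, $\point{p}$ is written uniquely as $\sum_k \lambda_k \vect{v}_{i_k}$ with $\lambda_k \in \mathcal{R}^+$ (the $\lambda_k$ given by the $\Delta$-ratios), so the minimal cost equals $\sum_k \lambda_k w_{i_k}$, which depends only on $w_{i_1},\dots,w_{i_n}$.

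The key step is the rewriting. Let $\mathcal{P} = \sum_{k=1}^m \alpha_k \vect{v}_k$ be an optimal path reaching $\point{p}$. Normalize: to each term $\alpha_k \vect{v}_k$ with $\alpha_k > 0$ and $\vect{v}_k \notin \{\vect{v}_{i_1},\dots,\vect{v}_{i_n}\}$, associate the normalized mask point $\point{\widetilde v}_k = \frac{1}{w_k}\vect{v}_k$, a vertex of $\mathcal{B}_{\mathcal{C}}$. Because $\mathcal{B}_{\mathcal{C}}$ is convex and its facets are exactly the pyramids spanned by the normalized wedge vectors, the ray from $O$ through $\point{\widetilde v}_k$ crosses the facet corresponding to the wedge that contains $\vect{v}_k$, so there is a wedge $\llangle \vect{w}_1,\dots,\vect{w}_n\rrangle$ and nonnegative coefficients $\mu_j$ with $\point{\widetilde v}_k = \sum_j \mu_j \point{\widetilde w}_j$ and $\sum_j \mu_j \le 1$ (the inequality is where convexity enters: the vertex of the polytope lies on or outside the facet plane cut by the ray, hence the barycentric-type coefficients sum to at most one). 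Unpacking the normalization, this reads $\vect{v}_k = \sum_j \beta_j \vect{w}_j$ with $\beta_j = \mu_j w_{w_j}/1 \ge 0$ and, crucially, $\sum_j \beta_j w_{w_j} = \sum_j \mu_j \le w_k$. Thus replacing the single step $\vect{v}_k$ by the combination $\sum_j \beta_j \vect{w}_j$ gives another path to the same endpoint with cost not larger than before. Iterating, we push all "foreign" vectors down onto wedge vectors; a termination/convergence argument (e.g. the total cost is bounded below by $d_{\mathcal{C}}(\point{p})$ and each replacement does not increase it, combined with the fact that after finitely many reductions the support stabilizes inside a single wedge because $\point{p}$ itself lies in $\llangle \vect{v}_{i_1},\dots,\vect{v}_{i_n}\rrangle$) yields an optimal path supported on $\vect{v}_{i_1},\dots,\vect{v}_{i_n}$ alone.

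There are two places where care is needed. First, the barycentric coefficients $\mu_j$ obtained from the convex polytope live in $\mathbb{R}$, not $\mathcal{R}$; this is fine because at this stage we are only bounding the cost of a real-coefficient combination, and the integrality of the \emph{final} coefficients $\lambda_k$ for $\point{p}$ is recovered at the end from Lemma~\ref{lemma:value}, not from the intermediate reductions. Second, one must make sure the reduction process actually terminates rather than cycling among wedges; the clean way is to argue directly at the level of real coefficients — take the (real) decomposition $\point{p} = \sum_k \lambda_k \vect{v}_{i_k}$, $\lambda_k \ge 0$, in the wedge, and show $d_{\mathcal{C}}(\point{p}) \ge \sum_k \lambda_k w_{i_k}$ by verifying that the linear functional $\phi$ on $\mathbb{R}^n$ with $\phi(\vect{v}_{i_k}) = w_{i_k}$ satisfies $\phi(\vect{v}_\ell) \le w_\ell$ for \emph{every} mask vector $\vect{v}_\ell$ — which is precisely the statement that $\point{\widetilde v}_\ell = \frac1{w_\ell}\vect{v}_\ell$ lies on the correct side of the facet hyperplane $\{\phi = 1\}$, i.e. that $\mathcal{B}_{\mathcal{C}}$ is convex and this facet is a supporting hyperplane. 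Then for any path $\sum_\ell \alpha_\ell \vect{v}_\ell = \point{p}$ we get $\mathcal{W} = \sum_\ell \alpha_\ell w_\ell \ge \sum_\ell \alpha_\ell \phi(\vect{v}_\ell) = \phi(\point{p}) = \sum_k \lambda_k w_{i_k}$, while the obvious path $\sum_k \lambda_k \vect{v}_{i_k}$ achieves this bound. The main obstacle is thus the clean identification of the supporting-hyperplane property with the convexity hypothesis on $\mathcal{B}_{\mathcal{C}}$, and handling the boundary cases where $\point{p}$ or some $\vect{v}_\ell$ lies on a shared facet of two adjacent wedges.
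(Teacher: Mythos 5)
Your final argument is correct, and it takes a genuinely different route from the paper. The paper's proof is ``primal'': it writes the wedge-only path $\mathcal{P}$ and an arbitrary path $\mathcal{Q}$ as $\mathcal{W}(\mathcal{P})\cdot\vect{u}_{\mathcal{P}}$ and $\mathcal{W}(\mathcal{Q})\cdot\vect{u}_{\mathcal{Q}}$, where $\vect{u}_{\mathcal{P}}$ is a convex combination of the normalized wedge vectors (hence on the boundary face of $\mathcal{B}_{\mathcal{C}}$) and $\vect{u}_{\mathcal{Q}}$ is a convex combination of all normalized mask vectors (hence inside the convex polytope), and then compares lengths along the common ray through $\point{p}$ to get $\mathcal{W}(\mathcal{P})\leq\mathcal{W}(\mathcal{Q})$. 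You instead produce a ``dual'' certificate: the linear functional $\phi$ with $\phi(\vect{v}_{i_k})=w_{i_k}$, whose level set $\{\phi=1\}$ contains the facet of the wedge; convexity of $\mathcal{B}_{\mathcal{C}}$ (with the origin inside, by symmetry and positivity of the weights) makes this a supporting hyperplane, so $\phi(\vect{v}_{\ell})\leq w_{\ell}$ for every mask vector, giving $\mathcal{W}(\mathcal{Q})\geq\phi(\point{p})$ for any path with nonnegative coefficients, while the wedge path (whose coefficients are in $\mathcal{R}^{+}$ by Lemma~\ref{lemma:value}, since the wedge is a $\mathcal{G}$-basis) attains $\phi(\point{p})$. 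Your route buys a cleaner lower bound and yields the explicit value $d_{\mathcal{C}}(\point{p})=\phi(\point{p})=\frac{1}{\Delta^{0}_{\mathcal{F}}}\sum_k\Delta^{k}_{\mathcal{F}}(\point{p})\,w_{i_k}$ directly, essentially absorbing Theorem~\ref{theorem:value}; the paper's route is more geometric and avoids introducing the dual functional, at the cost of the star-shaped ``boundary versus interior along a ray'' step. Note that your first, step-by-step rewriting sketch is dispensable (and its bookkeeping of the $\beta_j$, $\mu_j$ is slightly garbled, with the termination issue you yourself flag); the supporting-hyperplane argument you settle on is self-contained and is the one to keep.
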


\begin{proof}
  This proof can be found in \cite{nacken:jmiv:1994} for
  $(\mathbb{Z}^{2}, \mathbb{Z})$. \\
  If $\point{p} = \point{O}$, the proof is obvious.
  Let a point $\point{p} \neq \point{O}$, $\point{p} \in \llangle
  \vect{v}_{i_{1}}, ..., \vect{v}_{i_{n}} \rrangle$ be given. As $\mathcal{F} =
  (\vect{v}_{i_{1}}, ..., \vect{v}_{i_{n}})$ is a $\mathcal{G}$-basis,
  there exists a path $\mathcal{P} = \sum_{k=1}^{n} \alpha_{k}
  \vect{v}_{i_{k}}$ from $\point{O}$ to $\point{p}$ containing only
  vectors of $\mathcal{F}$ having positive coefficients ($\forall k
  \in [1..n], \alpha_{k} \geq 0$). The cost of this path is
  $\mathcal{W}(\mathcal{P}) = \sum_{k=1}^{n} \alpha_{k} \times w_{i_{k}}$. We
  can write $\mathcal{P}$ as 
  \begin{displaymath}
    \mathcal{P}  =  \left( \sum_{l=1}^{n}
      \alpha_{l} \times w_{i_{l}} \right) \times \sum_{k=1}^{n} \frac{\alpha_{k}
      \times w_{i_{k}}}{\sum_{l=1}^{n} \alpha_{l} \times w_{i_{l}}} \times
    \frac{1}{w_{i_{k}}} \cdot \vect{v}_{i_{k}}
      =  \mathcal{W}(\mathcal{P}) \cdot \vect{u}_{\mathcal{P}}
   \end{displaymath}
   with 
   \begin{displaymath}
     \vect{u}_{\mathcal{P}} = \sum_{k=1}^{n} \frac{\alpha_{k}
      \times w_{i_{k}}}{\sum_{l=1}^{n} \alpha_{l} \times w_{i_{l}}}
      \cdot \point{\widetilde{v}}_{i_{k}}.
   \end{displaymath}
   Since $\vect{u}_{\mathcal{P}}$ is a convex combination of the $n$
   normalized vectors
   of $\mathcal{F} \colon \forall k \in [1..n]$, $0 \leq  \frac{\alpha_{k}
     \times w_{i_{k}}}{\sum_{k=1}^{n} \alpha_{k} \times w_{i_{k}}}
   \leq 1$ and $\sum_{k=1}^{n} \frac{\alpha_{k}
     \times w_{i_{k}}}{\sum_{k=1}^{n} \alpha_{k} \times w_{i_{k}}} =
   1$, $\vect{u}_{\mathcal{P}}$ lies on
   $\mathcal{B}_{\mathcal{C}}$. Moreover, as the faces of
   $\mathcal{B}_{\mathcal{C}}$ are convex ($n-1$-dimensional polytope
   with $n$ vertices) $\vect{u}_{\mathcal{P}}$ also lies on the face
   formed by the family 
   $\mathcal{F}$ of $\mathcal{B}_{\mathcal{C}}$, i.e. on the
   boundary of $\mathcal{B}_{\mathcal{C}}$.\\
   Let us now consider another path $\mathcal{Q} = \sum_{k=1}^{m}
   \beta_{k} \vect{v}_{k}$ from $\point{O}$ to $\point{p}$ containing
   arbitrary vectors of $\mathcal{C}$. As $\mathcal{C}$ is symmetric,
   we can take $\forall k \in [1..n], \beta_{k} \geq 0$ without loss of
   generality. Then $\mathcal{W}(\mathcal{Q}) \geq
   \mathcal{W}(\mathcal{P})$. Indeed, 
  \begin{displaymath}
    \mathcal{Q}  =  \left( \sum_{k=1}^{m}
      \beta_{k} w_{{k}} \right) \cdot \sum_{k=1}^{m} \frac{\beta_{k}
      \times w_{k}}{\sum_{k=1}^{m} \beta_{k} \times w_{k}} \times
    \frac{1}{w_{k}} \cdot \vect{v}_{k}
     =  \mathcal{W}(\mathcal{Q}) \cdot \vect{u}_{\mathcal{Q}}
   \end{displaymath}
   with 
   \begin{displaymath}
     \vect{u}_{\mathcal{Q}} = \sum_{k=1}^{m} \frac{\beta_{k}
      \times w_{k}}{\sum_{k=1}^{m} \beta_{k} \times w_{k}}
      \cdot \point{\widetilde{v}}_{k}.
   \end{displaymath}
   $\vect{u}_{\mathcal{Q}}$ is a convex combination of $m$ normalized
   vectors of $\mathcal{C}$, and lies thus within the convex polytope
   $\mathcal{B}_{\mathcal{C}}$. 
   Moreover, we have $\vect{Op} = \mathcal{P} =
   \mathcal{W}(\mathcal{P}) \cdot \vect{u}_{\mathcal{P}} =
   \mathcal{W}(\mathcal{Q}) \cdot \vect{u}_{\mathcal{Q}} =
   \mathcal{Q}$ with $\vect{u}_{\mathcal{P}}$ and
   $\vect{u}_{\mathcal{Q}}$ having the same direction ($\vect{Op}$) 
   with positive coefficients.
   $\mathcal{W}_{\mathcal{P}} \left\| \vect{u}_{\mathcal{P}} \right\|
   = \mathcal{W}_{\mathcal{Q}} \left\| \vect{u}_{\mathcal{Q}}
   \right\|$. As $\vect{u}_{\mathcal{Q}}$ lies within
   $\mathcal{B}_{\mathcal{C}}$ and $\vect{u}_{\mathcal{P}}$ lies on
   the boundary of $\mathcal{B}_{\mathcal{C}}$, $\left\|
   \vect{u}_{\mathcal{P}} \right\| \geq \left\| \vect{u}_{\mathcal{Q}}
   \right\| $ (as $\mathcal{B}_{\mathcal{C}}$ is centered in
   $\point{O}$, if $\left\|\vect{u}_{\mathcal{Q}} \right\| > \left\|
   \vect{u}_{\mathcal{P}} \right\|$, $\point{u}_{\mathcal{Q}}$ would
   be farther from the origin than a point of the boundary, and thus
   outside $\mathcal{B}_{\mathcal{C}}$)    
   thus $\mathcal{W}(\mathcal{P}) \leq
   \mathcal{W}(\mathcal{Q})$ and $d_{\mathcal{C}}(\point{p}) =
   \mathcal{W}(\mathcal{P})$. 
\end{proof}
Figure \ref{fig:convex:proof} shows an example of vectors
$\vect{u}_{\mathcal{P}}$ and $\vect{u}_{\mathcal{Q}}$ of a mask
$\mathcal{C} \in \left(\mathbb{Z}^{2}, \mathbb{Z} \right)$. 

\begin{figure}[ht!]
  \begin{center}
    \includegraphics[width=0.9\linewidth]{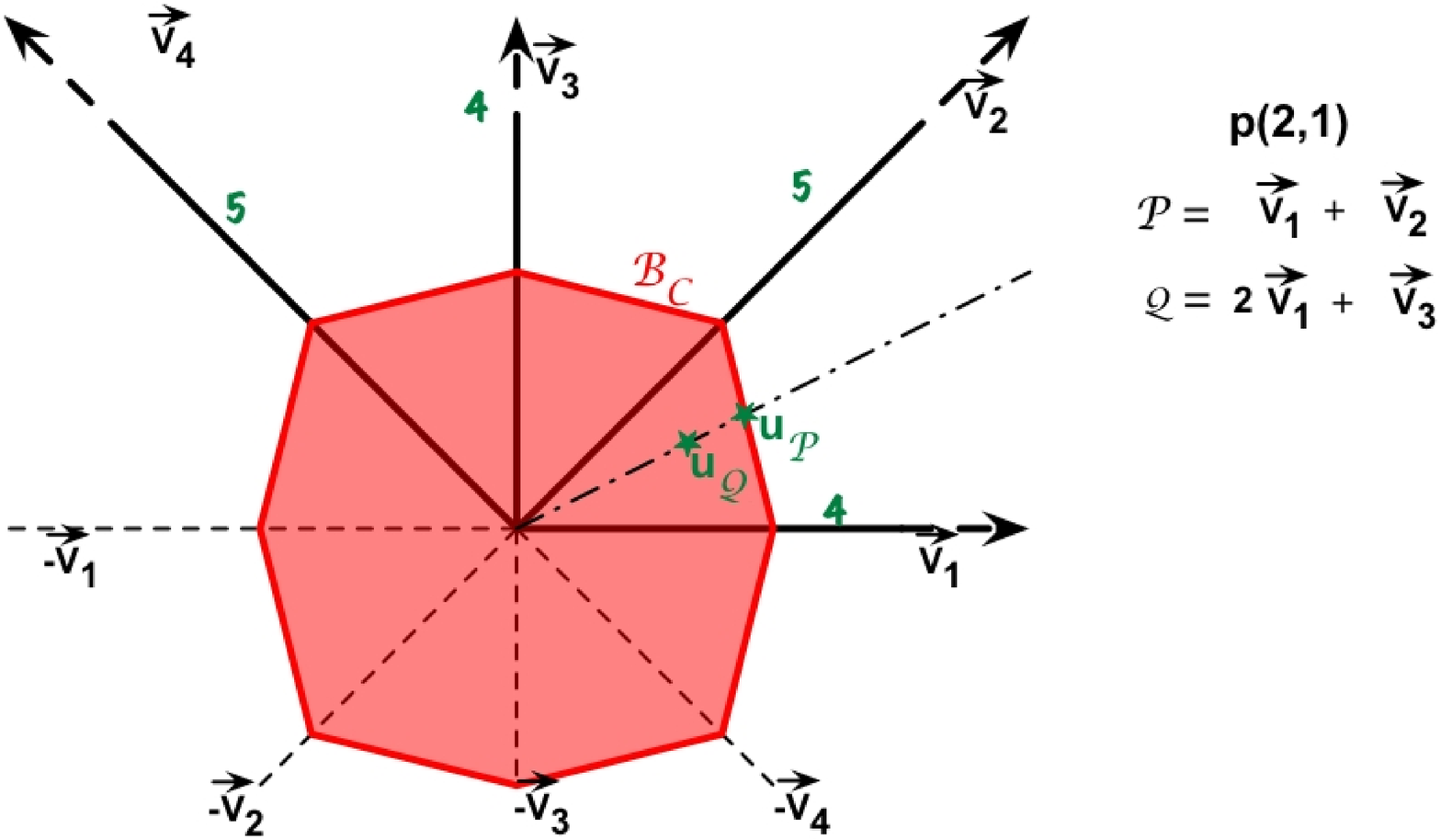}
\caption{\label{fig:convex:proof} Example where
$\mathcal{B}_{\mathcal{C}}$ is convex.   
We consider the vectors $\point{v}_{1}(1,0) \textrm{,  }
\point{v}_{2}(1,1) \textrm{, } \point{v}_{3}(0,1) \textrm{, }
\point{v}_{4}(-1,1)$ and the mask $\mathcal{C} = \left\{
(\point{v}_{1}, 4)\right.$, $(\point{v}_{2}, 5)$, $(\point{v}_{3}, 4)$,
$(-\point{v}_{1}, 4)$, $(-\point{v}_{2}, 5)$, $\left. (-\point{v}_{3}, 4)
\right\}$.
 We consider the path $\mathcal{P} = \point{v}_{1} +
   \point{v}_{2}$ and $\mathcal{Q} = 2 \cdot \point{v}_{1} +
   \point{v}_{3}$ from $\point{O}$ to $\point{p}(2, 1)$.
   We then have $ \| \point{u}_{\mathcal{Q}} \| \leq \|
   \point{u}_{\mathcal{P}} \|$.}
\end{center}
\end{figure}

\begin{corollary}
\label{corollary:convexHull}
If the vertices of each face of the convex hull
$\textrm{\textnormal{conv}}(\mathcal{B}_{\mathcal{C}})$ of
$\mathcal{B}_{\mathcal{C}}$ are normalized vectors corresponding to
$\mathcal{G}$-basis sectors of $\mathcal{C}$, then the vectors of
$\mathcal{C}$ whose corresponding normalized vectors do not lie on the
convex hull of $\mathcal{B}_{\mathcal{C}}$ are not used to compute the
final weighted distance. 
\end{corollary}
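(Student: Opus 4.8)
The plan is to reduce the statement to Lemma~\ref{lemma:wedge:convex} by passing to the sub-mask consisting only of the ``surviving'' vectors. Set
$\mathcal{C}' = \{ (\vect{v}_k, w_k) \in \mathcal{C} : \point{\widetilde{v}}_k \textrm{ lies on } \partial\,\textnormal{conv}(\mathcal{B}_{\mathcal{C}}) \}$,
i.e. the vectors of $\mathcal{C}$ whose normalized vectors are vertices of a face of $\textnormal{conv}(\mathcal{B}_{\mathcal{C}})$. First I would check that $\mathcal{C}'$ is again a chamfer mask in the sense of Definition~\ref{def:chamfer-mask}: the weights stay positive; the symmetry of $\mathcal{C}$ forces $\mathcal{B}_{\mathcal{C}}$, and hence $\textnormal{conv}(\mathcal{B}_{\mathcal{C}})$, to be centrally symmetric about $\point{O}$, so $\mathcal{C}'$ is symmetric; and the hypothesis (the vertices of each face of the convex hull are normalized vectors of $\mathcal{G}$-basis sectors of $\mathcal{C}$) means exactly that the cones over those faces organize $\mathcal{C}'$ into $\mathcal{G}$-basis wedges, so in particular $\mathcal{C}'$ contains a basis of $(\mathcal{G},\mathcal{R})$. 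By construction the normalized chamfer mask polytope of $\mathcal{C}'$ is precisely $\textnormal{conv}(\mathcal{B}_{\mathcal{C}})$, which is convex; and since $\mathcal{C}'$ contains a basis together with its negatives, $\textnormal{conv}(\mathcal{B}_{\mathcal{C}})$ contains $\point{O}$ in its interior, hence the cones spanned from $\point{O}$ by its faces cover $\mathbb{R}^{n}$, so every $\point{p}\in\mathcal{G}$ lies in some wedge of $\mathcal{C}'$.

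Next I would apply Lemma~\ref{lemma:wedge:convex} to $\mathcal{C}'$: for any $\point{p}$ lying in a wedge $\llangle \vect{v}_{i_1},\dots,\vect{v}_{i_n}\rrangle$ of $\mathcal{C}'$, the value $d_{\mathcal{C}'}(\point{p})$ depends only on $w_{i_1},\dots,w_{i_n}$, all of which are weights of vectors of $\mathcal{C}'$. It then remains to show that $d_{\mathcal{C}}(\point{p}) = d_{\mathcal{C}'}(\point{p})$ for every $\point{p}$. The inequality $d_{\mathcal{C}}(\point{p}) \le d_{\mathcal{C}'}(\point{p})$ is immediate, since $\mathcal{C}' \subseteq \mathcal{C}$ makes every $\mathcal{C}'$-path a $\mathcal{C}$-path with the same cost. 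For the reverse, I would re-run the estimate from the proof of Lemma~\ref{lemma:wedge:convex}: let $\mathcal{P}$ be the optimal $\mathcal{C}'$-path for $\point{p}$ built inside the wedge of $\mathcal{C}'$ containing $\point{p}$, and let $\mathcal{Q}=\sum_k \beta_k\vect{v}_k$ ($\beta_k\ge 0$, using symmetry) be any $\mathcal{C}$-path to $\point{p}$. Writing $\mathcal{P}=\mathcal{W}(\mathcal{P})\cdot\vect{u}_{\mathcal{P}}$ and $\mathcal{Q}=\mathcal{W}(\mathcal{Q})\cdot\vect{u}_{\mathcal{Q}}$ as in that proof, $\vect{u}_{\mathcal{P}}$ lies on $\partial\,\textnormal{conv}(\mathcal{B}_{\mathcal{C}})$ (a convex combination of the vertices of one face), while $\vect{u}_{\mathcal{Q}}$ is a convex combination of points of $\mathcal{B}_{\mathcal{C}}$ and hence lies in $\textnormal{conv}(\mathcal{B}_{\mathcal{C}})$; since both are positive multiples of the common direction $\vect{Op}$ and $\textnormal{conv}(\mathcal{B}_{\mathcal{C}})$ is convex and centred at $\point{O}$, we get $\|\vect{u}_{\mathcal{P}}\| \ge \|\vect{u}_{\mathcal{Q}}\|$, hence $\mathcal{W}(\mathcal{P}) \le \mathcal{W}(\mathcal{Q})$. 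Minimizing over $\mathcal{Q}$ yields $d_{\mathcal{C}}(\point{p}) \ge \mathcal{W}(\mathcal{P}) = d_{\mathcal{C}'}(\point{p})$.

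Combining the two inequalities gives $d_{\mathcal{C}} \equiv d_{\mathcal{C}'}$, and the right-hand side is computed using only the weights attached to vectors of $\mathcal{C}'$; equivalently, the vectors of $\mathcal{C}\setminus\mathcal{C}'$ — those whose normalized vectors do not lie on $\textnormal{conv}(\mathcal{B}_{\mathcal{C}})$ — may be deleted from $\mathcal{C}$ without altering any distance value, which is the assertion. I expect the main obstacle to be bookkeeping rather than a new idea: one must verify carefully that $\mathcal{C}'$ genuinely inherits the property of being organized in $\mathcal{G}$-basis wedges, which is exactly where the hypothesis on the faces of the convex hull is needed (without it the pathology of Remark~\ref{rem:shortestPath} could reappear for $\mathcal{C}'$), and one must note that a point lying on a face shared by two wedges causes no ambiguity, since both adjacent wedges then assign it the same value.
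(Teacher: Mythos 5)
Your proposal is correct, and it is driven by the same geometric fact as the paper's proof --- that a normalized vector lying strictly inside $\textrm{conv}(\mathcal{B}_{\mathcal{C}})$ is dominated, along its own ray from $\point{O}$, by a convex combination of the vertices of the hull face that the ray pierces --- but you package it differently. The paper argues \emph{locally}: it fixes one interior vector $\vect{v}_l$, locates the $\mathcal{G}$-basis sector whose normalized vectors form the relevant face, produces a path $\mathcal{P}$ with $\mathcal{P}=\vect{v}_l$ and $\mathcal{W}(\mathcal{P})<w_l$, and concludes that any path using $\vect{v}_l$ can be strictly improved by substitution, so $\vect{v}_l$ never appears in a minimal path. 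You argue \emph{globally}: you delete all interior vectors at once to form the sub-mask $\mathcal{C}'$, check that $\mathcal{C}'$ is still a restricted chamfer mask whose normalized polytope is exactly $\textrm{conv}(\mathcal{B}_{\mathcal{C}})$, and prove $d_{\mathcal{C}}\equiv d_{\mathcal{C}'}$ by sandwiching, the nontrivial inequality being a rerun of the convexity estimate from Lemma~\ref{lemma:wedge:convex} against arbitrary $\mathcal{C}$-paths. Your route costs more bookkeeping (the verification that $\mathcal{C}'$ inherits symmetry, a basis, and an organization into $\mathcal{G}$-basis wedges covering $\mathcal{G}$), but it buys a cleaner and slightly stronger conclusion --- the distance function is literally unchanged under deletion of the interior vectors --- and it makes explicit exactly where the hypothesis that the hull faces correspond to $\mathcal{G}$-basis sectors is consumed, namely in guaranteeing that $\mathcal{C}'$ does not fall back into the pathology of Remark~\ref{rem:shortestPath}. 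The paper's substitution argument is more economical but leaves that dependence implicit. One small point worth keeping in your write-up: when a face of the hull has more than $n$ vertices, the correspondence ``face $\leftrightarrow$ single $\mathcal{G}$-basis sector'' needs a triangulation of the face, which the corollary's hypothesis is implicitly assuming away; this affects both proofs equally.
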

\begin{proof}
The proof can be found in \cite{remy:phd:2001,thiel:hdr:2001} for
vectorial spaces. 
Suppose there exists a vector $\vect{v}_{l} \in \mathcal{C}$ such
that $\vect{\widetilde{v}}_{l} = \frac{1}{w_{l}} \vect{v}_{l}$ does
not lie on the convex hull of $\mathcal{B}_{\mathcal{C}}$
(i.e. $\vect{\widetilde{v}}_{l}$ lies within
$\mathcal{B}_{\mathcal{C}}$). Let us
consider the $\mathcal{G}$-basis sector $\llangle \vect{v}_{i_{1}},
..., \vect{v}_{i_{n}} \rrangle$ in which $\vect{v}_{l}$ lies and whose
corresponding normalized vectors form a face $\mathcal{F}$ of
$\textrm{conv}(\mathcal{B}_{\mathcal{C}})$. There exists a path $\mathcal{P} =
\sum_{k=1}^{n} \alpha_{k} \vect{v}_{i_{k}} = \vect{v}_{l}$ from
$\point{O}$ to $\point{v}_{l}$ and with $\mathcal{W}(\mathcal{P})
< w_{l}$. Indeed, $\vect{v}_{l} = w_{l} \vect{\widetilde{v}}_{l} =
\mathcal{W}(\mathcal{P}) \vect{u}_{\mathcal{P}}$ with
$\vect{u}_{\mathcal{P}} = \sum_{k=1}^{n} \frac{\alpha_{k}
w_{i_{k}}}{\sum_{k=1}^{n} \alpha_{k} w_{i_{k}}}
\vect{\widetilde{v}}_{i_{k}}$ a convex combination of $n$ vectors of
$\mathcal{F}$ lying within $\mathcal{F}$. We thus have $\left\|
\vect{\widetilde{v}}_{l} \right\| < \left\| \vect{u}_{\mathcal{P}}
\right\|$ and $w_{l} > \mathcal{W}(\mathcal{P})$. Finally, as a
weighted distance is the minimum of the costs of all possible path, in
any path containing $\vect{v}_{l}$, $\vect{v}_{l}$ will be replaced by
$\mathcal{P}$ whose cost is lower.
Note that this is what happens in Remark \ref{rem:shortestPath}.
\end{proof}

\begin{remark}
\label{rem:convexHull}
If there exist faces of $\textrm{conv}(\mathcal{B}_{\mathcal{C}})$ formed by
normalized vectors whose corresponding mask vectors are not
$\mathcal{G}$-basis, the vectors of $\mathcal{C}$ which does not lie on
 $\textrm{conv}(\mathcal{B}_{\mathcal{C}})$ may be used, but this leads to
a final weighted distance which may not be homogeneous along some
directions, and we also may not be able to forecast the weighted
distance inside a wedge with the only vectors generating the wedge. 

For example, if we consider the mask $\mathcal{C}$  with the weighted
vectors: $\vect{v}_{1} = ({\scriptstyle \vect{(1,0)}}, 3)$,
$\vect{v}_{2} = ({\scriptstyle \vect{(1,1)}}, 2)$, $\vect{v}_{3} =
({\scriptstyle \vect{(0,1)}}, 3)$, $\vect{v}_{4} = ({\scriptstyle
\vect{(-1,1)}}, 2)$
and their symmetric vectors, as shown in Figure \ref{fig:nonConvex},
$d_{\mathcal{C}}(\point{O}, \point{(0, 2)}) = 4 \neq 2 \times
d_{\mathcal{C}}(\point{O}, \point{(0, 1)}) = 2 \times 3$. Moreover,
$d_{\mathcal{C}}(\point{O}, \point{(0,2)})$ does not only depend on
$\vect{v}_{2}$ and $\vect{v}_{3}$ or only on $\vect{v}_{3}$ and
$\vect{v}_{4}$.  
Note that the vectors $\vect{v}_{2}$ and $\vect{v}_{4}$
generating the corresponding face of
$\textrm{conv}(\mathcal{B}_{\mathcal{C}})$ do not form a
$\mathbb{Z}^{2}$ basis ($\det(\vect{v}_{1}, \vect{v}_{2}) = 2 \neq \pm
1$). 
\end{remark}

\begin{figure}[!ht]
\begin{center}
\includegraphics[width=0.7\linewidth]{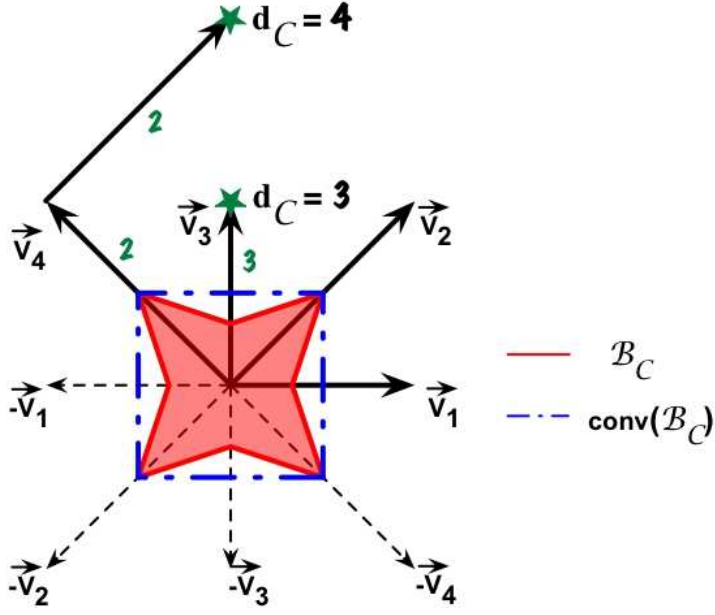}
\caption{\label{fig:nonConvex} When $\textrm{conv}(\mathcal{B}_{\mathcal{C}})
\neq \mathcal{B}_{\mathcal{C}}$ the weighted distance may not be
homogeneous.}
\end{center}
\end{figure}

In the following, we consider only chamfer masks whose normalized
polytope is convex. Indeed, if this is not the case, the mask may be
redundant (Corollary~\ref{corollary:convexHull}), or even if not, we
may not be able to forecast the final weighted distance, 
(Remark~\ref{rem:convexHull}). We can note that this condition implies
that collinear vectors (which are note opposite vectors) are removed
from the mask.

Considering previous lemmas and remarks, we can re-define a chamfer
mask with stronger conditions as follows: 

\begin{definition}[Chamfer mask (restricted)]
\label{def:chamfer-mask:restricted}
A {\em Chamfer Mask} $\mathcal{C}_\rho$ is a finite set of weighted vectors 
$\{ (\vect{v}_{k}, w_{k})_{k \in [1..m]} \in \mathcal{G} \times
\mathcal{R} \}$ which satisfies the following properties: 
\begin{eqnarray}
(\textrm{positive weights}) & \forall k \textrm{, } & w_k > 0 \textrm{ and } \vect{v}_{k} \neq 0 \label{maskCond:positiveWeights}\\
(\textrm{symmetry}) & (\vect{v}, w) \in \mathcal{C}_\rho & \Longrightarrow
( - \vect{v} , w ) \in \mathcal{C}_\rho \label{maskCond:symmetry}\\ 
(\textrm{Organized in }\mathcal{G} \textrm{-basis wedges}) & 
\left. \begin{array}{c}
  \forall \point{p} \in \mathcal{G}\\ 
  {\scriptstyle \exists \llangle \vect{v}_{i_{1}}, ...,
  \vect{v}_{i_{n}} \rrangle \in \mathcal{C}_\rho} \end{array} \right\}& 
\point{p} = \sum_{k=1}^{n} \alpha_{k} \vect{v}_{i_{k}} \label{maskCond:wedges}\\
(\textrm{Convex normalized polytope}) & & \mathcal{B}_{\mathcal{C}_\rho} =
\textrm{\textnormal{conv}}(\mathcal{B}_{\mathcal{C}_\rho}) \label{maskCond:convex}
\end{eqnarray}  
\end{definition}

\begin{theorem}
  \label{theorem:value}
  Given a chamfer mask $\mathcal{C}_\rho = \left\{ (\vect{v}_{k}, w_{k})_{k
  \in [1..n]} \in \mathcal{G} \times \mathcal{R} \right\}$, defined as
in Definition \ref{def:chamfer-mask:restricted}, the
  weighted distance of any point $\point{p}$ lying in a
  wedge $\llangle \vect{v}_{i_{1}}, ... \vect{v}_{i_{n}} \rrangle$
  can be expressed by:
  \begin{equation}
    d_{\mathcal{C}_\rho}(\point{p}) = \frac{1}{\Delta_{\mathcal{F}}^{0}}
    \times \sum_{k=1}^{n} \Delta_{\mathcal{F}}^{k}(\point{p}) \times w_{i_{k}}
  \end{equation}
\end{theorem}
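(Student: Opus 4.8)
The plan is to combine Lemma~\ref{lemma:wedge:convex} with the Cramer's-rule identity from the proof of Lemma~\ref{lemma:value}; the theorem is essentially a repackaging of these two results. Write $\mathcal{F} = (\vect{v}_{i_{1}}, \ldots, \vect{v}_{i_{n}})$ for the family generating the wedge containing $\point{p}$. Since $\mathcal{C}_\rho$ satisfies condition~(\ref{maskCond:convex}), its normalized polytope is convex, so Lemma~\ref{lemma:wedge:convex} applies. Its proof shows more than the bare statement: for $\point{p}$ in the wedge, $d_{\mathcal{C}_\rho}(\point{p}) = \mathcal{W}(\mathcal{P})$ where $\mathcal{P} = \sum_{k=1}^{n} \alpha_{k} \vect{v}_{i_{k}}$ is \emph{the} path built only from the $n$ wedge vectors with $\alpha_{k} \in \mathcal{R}^{+}$. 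Such a path exists because condition~(\ref{maskCond:wedges}) forces the wedge to be a $\mathcal{G}$-basis-sector, which (as noted just after the definition of $\mathcal{G}$-basis-sector) coincides exactly with $\{\point{p} : \vect{p} = \sum_{k} \lambda_{k} \vect{v}_{i_{k}},\ \lambda_{k} \in \mathcal{R}^{+}\}$; and it is unique because $\mathcal{F}$, being $n$ linearly independent vectors of $\mathbb{R}^{n}$, is a basis of the vector space $(\mathbb{R}^{n}, \mathbb{R})$, so the decomposition of $\point{p}$ over $\mathcal{F}$ is already unique over $\mathbb{R}$, a fortiori over $\mathcal{R}$.

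Next I would make the coefficients explicit. Exactly as in the proof of Lemma~\ref{lemma:value}, writing $\point{p} = \mathcal{M}_{\mathcal{F}}\,(\alpha_{1}, \ldots, \alpha_{n})^{T}$ and applying Cramer's rule (legitimate since $\Delta_{\mathcal{F}}^{0} = \det(\vect{v}_{i_{1}}, \ldots, \vect{v}_{i_{n}}) \neq 0$ by independence of $\mathcal{F}$) gives $\alpha_{k} = \Delta_{\mathcal{F}}^{k}(\point{p}) / \Delta_{\mathcal{F}}^{0}$ for every $k \in [1..n]$. Substituting into the cost from Definition~\ref{def:chamfer-cost} yields
\begin{displaymath}
d_{\mathcal{C}_\rho}(\point{p}) = \mathcal{W}(\mathcal{P}) = \sum_{k=1}^{n} \alpha_{k} w_{i_{k}} = \sum_{k=1}^{n} \frac{\Delta_{\mathcal{F}}^{k}(\point{p})}{\Delta_{\mathcal{F}}^{0}}\, w_{i_{k}} = \frac{1}{\Delta_{\mathcal{F}}^{0}} \sum_{k=1}^{n} \Delta_{\mathcal{F}}^{k}(\point{p})\, w_{i_{k}},
\end{displaymath}
which is the asserted formula; the degenerate case $\point{p} = \point{O}$ is immediate since $\Delta_{\mathcal{F}}^{k}(\point{O}) = 0$ for all $k$ and both sides vanish.

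The only point requiring care is the appeal to Lemma~\ref{lemma:wedge:convex}: one must check that what that lemma's proof actually delivers is the sharp equality $d_{\mathcal{C}_\rho}(\point{p}) = \mathcal{W}(\mathcal{P})$ with $\mathcal{P}$ the wedge path, and not merely the qualitative statement ``depends only on $w_{i_{1}}, \ldots, w_{i_{n}}$''. Granting that, everything else is the routine Cramer's-rule bookkeeping already carried out in Lemma~\ref{lemma:value}, so I do not anticipate a genuine obstacle.
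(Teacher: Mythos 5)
Your proposal is correct and follows essentially the same route as the paper's own proof: condition~(\ref{maskCond:wedges}) gives the wedge path with coefficients in $\mathcal{R}^{+}$, the Cramer's-rule computation from the proof of Lemma~\ref{lemma:value} identifies $\alpha_{k} = \Delta_{\mathcal{F}}^{k}(\point{p})/\Delta_{\mathcal{F}}^{0}$, and the proof of Lemma~\ref{lemma:wedge:convex} supplies the minimality of that path's cost. Your added remarks on uniqueness of the decomposition and the case $\point{p}=\point{O}$ are harmless refinements of the same argument.
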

\begin{proof}
This formula was given for $(\mathbb{Z}^{3}, \mathbb{Z})$ in
\cite{fouard:ivc:2005} without the entire proof.
Let $\point{p}$ be a point of $\mathcal{G}$ lying in the wedge
$W = \llangle \vect{v}_{i_{1}}, ... \vect{v}_{i_{n}} \rrangle$ of
$\mathcal{C}_\rho$. As $W$ is a basis of $(\mathcal{G}, \mathcal{R})$
(condition \ref{maskCond:wedges} of Definition
\ref{def:chamfer-mask:restricted}) there exists a path $\mathcal{P}$
from $\point{O}$ to $\point{p}$ and $\alpha_{1}, ..., \alpha_{k} \in
\mathcal{R}^{+}$ such that $\mathcal{P} = \vect{Op} = \sum_{k=1}^{m}
\alpha_{k} \vect{v}_{i_{k}}$. The proof of Lemma~\ref{lemma:value}
gives $\forall k \in [1..n], \alpha_{k} =
\frac{1}{\Delta_{\mathcal{F}}^{0}} \times
\Delta_{\mathcal{F}}^{k}(\point{p})$ and the proof of Lemma
\ref{lemma:wedge:convex} gives that the weight of $\mathcal{P}$ is
minimal. Thus 
$d_{\mathcal{C}_\rho}(\point{p}) = \mathcal{W}(\mathcal{P}) =
\frac{1}{\Delta_{\mathcal{F}}^{0}} \times \sum_{k=1}^{n}
\Delta_{\mathcal{F}}^{k}(\point{p}) \times w_{i_{k}}$. 
\end{proof}

\begin{theorem}
A weighted distance computed with a chamfer mask\\ 
$\mathcal{C}_\rho= \{ (\vect{v}_{k}, w_{k})_{k \in [1..m]} \in
\mathcal{G} \times \mathcal{R}\}$ as defined in Definition
\ref{def:chamfer-mask:restricted}  is a {\em norm} on $(\mathcal{G},
\mathcal{R})$. 
\end{theorem}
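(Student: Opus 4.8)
The plan is to prove, in order, that $d_{\mathcal{C}_\rho}$ is a distance, then that it is a metric, and finally that it is positively homogeneous; the three axioms plus the triangular inequality together with homogeneity are exactly what Definition~2.4 (norm) requires. Throughout I use the translation invariance established in Section~\ref{sec:wdp} to work with $d_{\mathcal{C}_\rho}(\point{p})=d_{\mathcal{C}_\rho}(\point{O},\point{p})$ and with paths starting at the origin. Only the homogeneity step has any real content; it rests on the closed form of Theorem~\ref{theorem:value} (hence, indirectly, on the convexity hypothesis~(\ref{maskCond:convex}) via Lemma~\ref{lemma:wedge:convex}), whereas the distance and metric axioms are pure path bookkeeping that never uses convexity.

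\emph{Distance and metric.} Positivity is clear, a path cost $\sum_k\alpha_k w_k$ being a sum of products of nonnegative elements of $\mathcal{R}$. For definiteness the only delicate direction is $d_{\mathcal{C}_\rho}(\point{p},\point{q})=0\Rightarrow\point{p}=\point{q}$: a path realising the minimum has $\sum_k\alpha_k w_k=0$ with every $\alpha_k\geq 0$ and, by~(\ref{maskCond:positiveWeights}), every $w_k>0$, which forces $\alpha_k=0$ for all $k$, hence $\vect{pq}=\sum_k\alpha_k\vect{v}_k=0$, i.e.\ $\point{p}=\point{q}$. Symmetry follows from~(\ref{maskCond:symmetry}): replacing each $\vect{v}_k$ by $-\vect{v}_k$ (same weight) turns a path from $\point{p}$ to $\point{q}$ into one from $\point{q}$ to $\point{p}$ of equal cost, so $d_{\mathcal{C}_\rho}(\point{p},\point{q})=d_{\mathcal{C}_\rho}(\point{q},\point{p})$; in particular $d_{\mathcal{C}_\rho}(-\point{p})=d_{\mathcal{C}_\rho}(\point{p})$. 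The triangular inequality is just concatenation of a path from $\point{p}$ to $\point{r}$ with a path from $\point{r}$ to $\point{q}$. Finally, Theorem~\ref{theorem:value} shows the defining minimum is attained by a wedge path of cost $\frac{1}{\Delta_{\mathcal{F}}^{0}}\sum_k\Delta_{\mathcal{F}}^{k}(\point{p})\,w_{i_k}$, and each $\frac{1}{\Delta_{\mathcal{F}}^{0}}\Delta_{\mathcal{F}}^{k}(\point{p})\in\mathcal{R}$ by Lemma~\ref{lemma:value}, so $d_{\mathcal{C}_\rho}$ is genuinely $\mathcal{R}$-valued, as a distance on $(\mathcal{G},\mathcal{R})$ must be.

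\emph{Positive homogeneity.} Fix $\point{p}\in\mathcal{G}$ and $\alpha\in\mathcal{R}$; since $\mathcal{R}$ is a subring of $\mathbb{R}$, $|\alpha|\in\{\alpha,-\alpha\}\subseteq\mathcal{R}$, so $|\alpha|\cdot\point{p}\in\mathcal{G}$, and by translation invariance it suffices to prove $d_{\mathcal{C}_\rho}(\alpha\cdot\point{p})=|\alpha|\,d_{\mathcal{C}_\rho}(\point{p})$. The cases $\point{p}=\point{O}$ or $\alpha=0$ are trivial. Assume first $\alpha\in\mathcal{R}^{+}$, $\alpha\neq 0$: if $\point{p}$ lies in a wedge $\llangle\vect{v}_{i_1},\dots,\vect{v}_{i_n}\rrangle$, say $\vect{p}=\sum_k\lambda_k\vect{v}_{i_k}$ with real $\lambda_k\geq 0$, then $\alpha\point{p}=\sum_k(\alpha\lambda_k)\vect{v}_{i_k}$ with $\alpha\lambda_k\geq 0$, so $\alpha\point{p}$ lies in the same $\mathbb{R}$-sector and, being in $\mathcal{G}$, in the same wedge. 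Applying Theorem~\ref{theorem:value} to $\point{p}$ and to $\alpha\point{p}$, and using that $\vect{x}\mapsto\Delta_{\mathcal{F}}^{k}(\vect{x})$ is linear (a determinant is linear in the column holding $\vect{x}$), so $\Delta_{\mathcal{F}}^{k}(\alpha\point{p})=\alpha\,\Delta_{\mathcal{F}}^{k}(\point{p})$, gives
\begin{displaymath}
d_{\mathcal{C}_\rho}(\alpha\point{p})=\frac{1}{\Delta_{\mathcal{F}}^{0}}\sum_{k=1}^{n}\Delta_{\mathcal{F}}^{k}(\alpha\point{p})\,w_{i_k}=\alpha\,\frac{1}{\Delta_{\mathcal{F}}^{0}}\sum_{k=1}^{n}\Delta_{\mathcal{F}}^{k}(\point{p})\,w_{i_k}=\alpha\,d_{\mathcal{C}_\rho}(\point{p})=|\alpha|\,d_{\mathcal{C}_\rho}(\point{p}).
\end{displaymath}
For $\alpha<0$, write $\alpha\point{p}=-(|\alpha|\cdot\point{p})$ and combine the case just treated with $d_{\mathcal{C}_\rho}(-\point{q})=d_{\mathcal{C}_\rho}(\point{q})$ from the symmetry step. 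Hence $d_{\mathcal{C}_\rho}$ is positively homogeneous, and therefore a norm on $(\mathcal{G},\mathcal{R})$.

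\emph{Where the difficulty sits.} Everything but homogeneity is routine manipulation of paths. Within homogeneity, the non-obvious direction is $d_{\mathcal{C}_\rho}(\alpha\point{p})\geq\alpha\,d_{\mathcal{C}_\rho}(\point{p})$ for $\alpha\in\mathcal{R}^{+}$: scaling a path to $\point{p}$ by $\alpha$ at once yields $d_{\mathcal{C}_\rho}(\alpha\point{p})\leq\alpha\,d_{\mathcal{C}_\rho}(\point{p})$, but one cannot in general divide a path to $\alpha\point{p}$ by $\alpha$, since $1/\alpha\notin\mathcal{R}$. This is precisely why Theorem~\ref{theorem:value} is needed, as it evaluates both distances explicitly; the one subtlety there is that $\alpha\point{p}$ stays in the \emph{same} wedge as $\point{p}$ (true because wedges are cut out by \emph{real} nonnegative combinations, not by $\mathcal{R}^{+}$-combinations; cf.\ Remark~\ref{remark:R-G-sectors}), after which multilinearity of the determinant closes the argument.
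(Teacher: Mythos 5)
Your proof is correct and follows essentially the same route as the paper's: path bookkeeping for positivity, definiteness, symmetry, and the triangular inequality (the paper phrases the latter two as contradictions, but the content is identical concatenation/reflection of paths), and then Theorem~\ref{theorem:value} together with multilinearity of the determinant and the observation that $\alpha\cdot\point{p}$ remains in the wedge of $\point{p}$ for the homogeneity step, with the negative case reduced to symmetry. Your added remarks — that the distance is genuinely $\mathcal{R}$-valued by Lemma~\ref{lemma:value}, and that the direction $d_{\mathcal{C}_\rho}(\alpha\point{p})\geq\alpha\,d_{\mathcal{C}_\rho}(\point{p})$ is the one that forces the use of the closed formula since $1/\alpha\notin\mathcal{R}$ — are accurate and not in the paper, but do not change the argument.
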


\begin{proof}
Here, we have to show that $d_{\mathcal{C}_\rho}$ is definite, positive and
symmetric and satisfies the triangular inequality and the positive
homogeneity properties. 
By definition of a weighted distance (Definition
\ref{def:weighted-distance}), given any points $\point{p} \in
\mathcal{G}$, there exist $\alpha_{1}$, $\alpha_{2}$,
..., $\alpha_{m}$ $\in \mathcal{R}$ such that
$d_{\mathcal{C}_\rho}(\point{O}, \point{p}) = d_{\mathcal{C}_\rho}(\point{p}) =
\sum_{k=1}^{n} \alpha_{k} w_{k}$ and $\mathcal{P}_{\point{p}} =
\vect{Op} = \point{p} = \sum_{k=1}^{m} \alpha_{k} \vect{v}_{k}$ (cf. 
Definitions \ref{def:chamfer-path}, \ref{def:chamfer-cost} and
\ref{def:weighted-distance}). 
\begin{enumerate}
\item {\em Positivity} (Needs conditions
  \ref{maskCond:positiveWeights} and \ref{maskCond:symmetry} of
  Definition \ref{def:chamfer-mask:restricted})\\
  By condition \ref{maskCond:symmetry} of Definition
  \ref{def:chamfer-mask:restricted} we can choose $\alpha_{k} \geq 0$
  for all $k \in [1..n]$. By Condition \ref{maskCond:positiveWeights}
  we also have $w_{k} \geq 0$ for all $k \in [1..n]$. We thus have
  $d_{\mathcal{C}_\rho}(\point{p}) \geq 0$.
  
\item {\em Definitivity} (Needs conditions
  \ref{maskCond:positiveWeights} and \ref{maskCond:symmetry} of
  Definition \ref{def:chamfer-mask:restricted})\\ 
  If $\point{p} = \point{O}$, then the path $\mathcal{P} =
  \sum_{k=0}^{m} 0 \cdot \vect{v}_{k}$ is a path from $\point{O}$ to
  $\point{p}$ and its cost is  $\mathcal{W}(\mathcal{P}) =
  \sum_{k=0}^{m} 0 \times w_{k} = 0$. Thus,
  $d_{\mathcal{C}_\rho}(\point{p}) = \min(\mathcal{W}(\vect{p})) \leq
  0$. By positivity, we also have $d_{\mathcal{C}_\rho}(\point{p})
  \geq 0$. It follows that $d_{\mathcal{C}_\rho}(\point{p}) = 0$.\\
  Conversely, if $d_{\mathcal{C}_\rho}(\point{p}) = 0$, as
  $\forall k \in [1..m]$, $w_{k} > 0$, we have $\forall k \in
  [1..m]$, $\alpha_{k} = 0$ and $\vect{p} = \sum_{k=1}^{m}
  0 \cdot \vect{v}_{k} = \vect{0}$. Thus $\point{p} = \point{O}$.

\item {\em Symmetry} (Needs the condition \ref{maskCond:symmetry} of
  Definition \ref{def:chamfer-mask:restricted})\\
  By condition \ref{maskCond:symmetry} of Definition
  \ref{def:chamfer-mask:restricted}, we have that 
  \begin{equation}\label{alpha-form}
\mathcal{P}_{\vect{pO}} = \sum_{k=1}^{m} \alpha_{k}
  (-\vect{v}_{k})
\end{equation}
 is a path from $\point{p}$ to $\point{O}$ and
  $\mathcal{W}(\mathcal{P}_{\vect{pO}}) =
  \mathcal{W}(\mathcal{P}_{\vect{Op}})$. Moreover,
  this cost is minimal. Indeed, let us consider a path
  $\mathcal{Q}_{\vect{pO}} = \sum_{k=1}^{m} \beta_{k} \vect{v}_{k}$ such
  that $\mathcal{W}(\mathcal{Q}_{\vect{pO}}) <
  \mathcal{W}(\mathcal{P}_{\vect{pO}})$. Then $\mathcal{Q}_{\vect{Op}}
  = \sum_{k=1}^{m} \beta_{k} \cdot (- \vect{v}_{k})$ is a path from
  $\point{O}$ to $\point{p}$ and $\mathcal{W}(\mathcal{Q}_{\vect{Op}}) <
  \mathcal{W}(\mathcal{P}_{\vect{Op}}) = d_{\mathcal{C}_\rho}(\point{p})$
  which is impossible by definition of a weighted
  distance (Definition \ref{def:weighted-distance}).
  Thus $d_{\mathcal{C}_\rho}(\point{p}, \point{O}) =
  \mathcal{W}(\mathcal{P}_{\vect{pO}}) =
  d_{\mathcal{C}_\rho}(\point{p})$.
  
\item {\em Triangular inequality} (Needs conditions
  \ref{maskCond:positiveWeights} and \ref{maskCond:symmetry} of
  Definition \ref{def:chamfer-mask:restricted})\\ 
  The proof can be found in \cite{verwer:phd:1991} for
  $(\mathbb{Z}^{2}, \mathbb{Z}, +, \cdot)$ and in
  \cite{kiselman:cviu:1996} (Corollary 3.4) for a general module. 
  We want $\forall \point{p}, \point{q} \in \mathcal{G}$,
  $d_{\mathcal{C}_\rho}(\point{p},\point{q}) \leq
  d_{\mathcal{C}_\rho}(\point{p}) + d_{\mathcal{C}_\rho}(\point{q})$. 
  Let $\mathcal{P}_{\point{q}} = \sum_{k=1}^{m} \beta_{k} \cdot
  \vect{v}_{k}$ be the minimum cost path between $\point{O}$ and
  $\point{q}$ (i.e. $d_{\mathcal{C}_\rho}(\point{q}) = \sum_{k=1}^{m} \beta_{k}
  w_{k}$), and by Eq.~(\ref{alpha-form}), $\mathcal{P}_{-\point{p}} =
  \sum_{k=1}^{m} \alpha_{k} 
  (-\vect{v}_{k})$. As the mask is symmetric,
  $\mathcal{P}_{-\point{p}}$ can be written $\mathcal{P}_{-\point{p}}
  = \sum_{k=1}^{m} \gamma_{k} \vect{v}_{k}$ for some $\gamma_k$ and
  $\mathcal{W}(\mathcal{P}_{-\point{p}}) =
  \mathcal{W}(\mathcal{P}_{\point{p}})$. 
  Let us suppose $d_{\mathcal{C}_\rho}(p,q) > d_{\mathcal{C}_\rho}(p) +
  d_{\mathcal{C}_\rho}(q)$. The path $\mathcal{P}_{\point{pq}} =
  \mathcal{P}_{-\point{p}} + \mathcal{P}_{\point{q}} = \sum_{k=1}^{m}
  (\gamma_{k} + \beta_{k}) \vect{v}_{k} = \vect{pO} + \vect{Oq} =
  \vect{pq}$ is a path from $\point{p}$ to $\point{q}$. As
  $(\mathcal{R}, +)$ is an Abelian group,
  $\mathcal{W}(\mathcal{P}_{\vect{pq}}) =
  \mathcal{W}(\mathcal{P}_{-\point{p}}) +
  \mathcal{W}(\mathcal{P}_{\point{q}}) = d_{\mathcal{C}_\rho}(\point{p}) +
  d_{\mathcal{C}_\rho}(\point{q}) < d_{\mathcal{C}_\rho}(\point{p}, \point{q})$
  which is impossible by definition of a weighted
  distance (Definition \ref{def:weighted-distance}). 
  By contradiction, we have
  $d_{\mathcal{C}_\rho}(\point{p}, \point{q}) \leq
  d_{\mathcal{C}_\rho}(\point{p}) + d_{\mathcal{C}_\rho}(\point{q})$.\\
\begin{remark}
Note that a decomposition in $\mathcal{G}$-basis wedges is not needed
for this condition. The only condition needed is to be able to extract
a basis among all mask vectors. This is the case, for example, for
masks containing only vectors corresponding to knight
displacements. Indeed, each wedge of this mask is not a
$\mathbb{Z}$-basis (see Figure~\ref{fig:R-G-sectors} and
Remark~2.2. %\ref{remark:R-G-sectors}). 
However, this mask induces a metric,
\cite{das:PR:1988}. 
\end{remark} 
\begin{remark}
The triangular inequality does not depend on the choice of the
weights. 
\end{remark}

\item {\em Positive homogeneity} (Needs conditions
  \ref{maskCond:positiveWeights}, \ref{maskCond:symmetry},
  \ref{maskCond:wedges} and \ref{maskCond:convex} of
  Definition \ref{def:chamfer-mask:restricted})\\ 
  The proof can be found in \cite{remy:iwcia:2000} for
  $(\mathbb{Z}^{3}, \mathbb{Z}, +, \cdot)$ and in
  \cite{thiel:hdr:2001} for a general module.
  Let be $\lambda \in \mathcal{R}$.
  Let $W = \llangle \vect{v}_{i_{1}}, ..., \vect{v}_{i_{n}} \rrangle$
  be the wedges of $\mathcal{C}_\rho$ in which $\point{p}$ lies.
  By Theorem~\ref{theorem:value}, we have
  $d_{\mathcal{C}_\rho}(\point{p}) = \frac{1}{\Delta_{\mathcal{F}}^{0}}
  \times \sum_{k=1}^{n} \Delta_{\mathcal{F}}^{k}(\point{p}) \times
  w_{i_{k}}$.
  
  If $\lambda \geq 0$, the point $\lambda \cdot \point{p}$ also lies
  in the wedge $W$ (it is a point of $\mathcal{G}$, and $\lambda \point{p} =
  \sum_{k=1}^{n} \lambda \times \alpha_{k} \vect{v}_{i_{k}}$ with
  $\forall k \in [1..n], \lambda \times \alpha_{i_{k}} \in
  \mathcal{R}^{+}$). 
  By Theorem~\ref{theorem:value} we have: 
  \begin{eqnarray*}
    d_{\mathcal{C}_\rho}(\lambda \cdot \point{p}) & = &
    \frac{1}{\Delta_{\mathcal{F}}^{0}} \times \sum_{k=1}^{n}
    \Delta_{\mathcal{F}}^{k}(\lambda \cdot \point{p}) \times
    w_{i_{k}}\\
     & = & 
    \frac{1}{\Delta_{\mathcal{F}}^{0}} \times \sum_{k=1}^{n}
    \det \left(\vect{v}_{i_{1}}, ... \vect{v}_{i_{k-1}}, \lambda \cdot
    \vect{p}, \vect{v}_{i_{k+1}}, ... \vect{v}_{i_{n}} \right) \times
    w_{i_{k}} \\
     & = & 
    \frac{1}{\Delta_{\mathcal{F}}^{0}} \times \sum_{k=1}^{n} \lambda
    \times 
    \det \left(\vect{v}_{i_{1}}, ... \vect{v}_{i_{k-1}},
    \vect{p}, \vect{v}_{i_{k+1}}, ... \vect{v}_{i_{n}} \right) \times
    w_{i_{k}} \\
    & = & 
    \lambda \times \frac{1}{\Delta_{\mathcal{F}}^{0}} \times \sum_{k=1}^{n}
    \Delta_{\mathcal{F}}^{k}(\point{p}) \times w_{i_{k}} \\
    & = & 
    \lambda \times d_{\mathcal{C}_\rho}(\point{p})
  \end{eqnarray*}

  If $\lambda < 0$, $d_{\mathcal{C}_\rho}(\lambda \cdot \point{p}) =
  d_{\mathcal{C}_\rho}(-\lambda \cdot \point{p})$ with $-\lambda > 0$
  (symmetry property) and thus, $d_{\mathcal{C}_\rho}(\lambda \cdot
  \point{p}) = -\lambda \times d_{\mathcal{C}_\rho}(p) = |\lambda| \times
  d_{\mathcal{C}_\rho}(\point{p})$. $\blacksquare$
\end{enumerate}
\hspace{-\linewidth}
\end{proof}

\subsection{Weight optimization \label{sec:coeffsopt}}
 In addition to their metric and norm properties, weighted distances
 can be made more invariant to rotation.
 As a weighted distance is obtained by computing the smallest weight of
 several paths between two points, the first improvement to obtain a
 weighted distance with high rotational invariance is to
 allow a larger number of allowed directions for the paths. This means
 increasing precision by increasing the number of weighted vectors of
 the chamfer mask.

 To increase accuracy, another way is to choose suitable weights for
 mask vectors. This more challenging issue as been often addressed in
 the literature. The first optimal chamfer weights computation
 was performed for a 2-D $3\times3$ mask in a square grid
 \cite{borgefors:cvgip:1984}.
 Then authors computed optimal weights with different optimality criteria
 \cite{verwer:prl:1991,fouard:ivc:2005}, 
 for larger masks
 \cite{borgefors:cvgip:1986,verwer:prl:1991,svensson:cviu:2002} and
 for anisotropic grid
 \cite{coquin:prl:1998,mangin:espc:1994,sintorn:prl:2004}.  
 Authors of \cite{sintorn:prl:2004,fouard:ivc:2005,malandain:RR:2005} 
 proposed an automatic computation of
 optimal chamfer weights for rectangular grids.

 Observe that computing the distance transform and finding optimal
 weights is not directly related to the problem of estimating the
 length of straight lines in a discrete image
 \cite{bresenham:CGA:1996}. For optimal weights for such estimations,
 see \cite{dorst:PAMI:1986}. 

 In all of the previous papers, the computation of optimal chamfer weights
 is performed the same way:
 \begin{enumerate}
   \item First, a chamfer mask is built and decomposed in wedges. 

   \item Then, the final weighted distance from the origin to an
         arbitrary point of the grid is expressed. The variables
         corresponding to the mask weights are unknown, but
         variables corresponding to vector coordinates are known.
         In the general case, given a chamfer mask defined as
         Definition \ref{def:chamfer-mask:restricted}, and a point
         $\point{p} \in \mathcal{G}$ lying in the wedge $\mathcal{F} =
         \llangle \vect{v}_{i_{1}}, \vect{v}_{i_{2}}, ...
         \vect{v}_{i_{n}} \rrangle$ the value of the weighted distance
         at this point is 
          \begin{displaymath}
            d_{\mathcal{C}_\rho}(\point{p}) = \frac{1}{\Delta_{\mathcal{F}}^{0}}
            \times \sum_{k=1}^{n} \Delta_{\mathcal{F}}^{k}(\point{p})
            \times w_{i_{k}} 
          \end{displaymath}

   \item In the same way, the Euclidean distance from the origin to
         this point $\point{p}$ is expressed. In the general case,
         given a grid which is a sub-module of $\mathbb{R}^{n}$ with
         an elongation of $s_{1}, s_{2}, ..., s_{n}$ in each canonical
         direction, the Euclidean distance between the origin and the
         point $\point{p}$ can be expressed in the following way:
         \begin{displaymath}
           d_{E}(\point{p}) = \sqrt{\sum_{i=1}^{n} (s_{i} p^{i})^{2}}
         \end{displaymath}
         
   \item After these three steps, the error between the weighed
         distance and the Euclidean one can be expressed for any point
         $\point{p} \in \mathcal{G}$. This error can be either
         absolute (the difference between these two values)
         \cite{borgefors:cvgip:1986} or
         relative (the difference is divided by the Euclidean
         distance) \cite{verwer:prl:1991}.
         The error can be expressed as follows:
         \begin{eqnarray*}
           E(\point{p}) & = & d_{\mathcal{C}_\rho}(\point{p}) -
           d_{E}(\point{p})\\ 
           & = & \frac{1}{\Delta_{\mathcal{F}}^{0}} \times
           \left (\sum_{k=1}^{n}
             \Delta_{\mathcal{F}}^{k}(\point{p}) \times
             w_{i_{k}} \right) - \sqrt{\sum_{i=1}^{n} (s_{i}
             p^{i})^{2}}    
         \end{eqnarray*}
         The general relative error can be expressed as follows:
         \begin{eqnarray*}
           \label{eq:Erel}
           E_{rel}(\point{p}) & = & \frac{d_{\mathcal{C}_\rho}(\point{p}) -
           d_{E}(\point{p})}{d_{E}(\point{p})}\\ 
         & = & \frac{\sum_{k=1}^{n}
           \Delta_{\mathcal{F}}^{k}(\point{p}) \times
           w_{i_{k}}}{{\Delta_{\mathcal{F}}^{0}} \times
           \sqrt{\sum_{i=1}^{n} (s_{i} p^{i})^{2}}} - 1    
       \end{eqnarray*}
       
   \item The previous errors are $n$-dimensional functions of the
         coordinates $p^{1}, p^{2}$, ... $p^{n}$ of $\point{p}$. To
         reduce the number of these variables and be able to find
         extrema, the maximal error is computed either on a hyperplane
         or on a sphere. For example, in three dimensions, the error can be
         computed on a plane $X = T, Y = T \textrm{ or } Z = T$
         \cite{borgefors:cvgip:1986,fouard:ivc:2005} or on the sphere
         of radius $T$ \cite{malandain:RR:2005}. This error function
         is continuous on a compact set (the $n-1$ pyramid formed by
         the $n$ vectors of each wedge). Thus it is bounded and
         attains its bounds. These bounds can be located either at the
         vertices of the pyramid or inside (including other bounds
         such as edges) the pyramid.

   \item Computing the derivatives of the error function gives the
         point $\point{p_{max}}$ at which the error is maximum (due to
         the sign of the derivatives) inside
         the wedge. In this way, the maximum error within the wedge
         $E_{max} = E(\point{p_{max}})$ can be obtained (if $\point{p_{max}}$
         lies within the wedge).

   \item The other extrema (minima) $E_{1}, E_{2}, ..., E_{n}$ are obtained for
         the $n$ vectors delimiting the wedge (the vertices of the
         $n-1$ pyramid).  
         When computing the error on a sphere of radius $R$, in the
         general case, the extrema can be expressed in the following
         way ($\forall l \in [1..n]$): 
         \begin{displaymath}
           E^{l} = \left( w_{l} - ||\vect{v}_{l}|| \right)
           \textrm{ and }
           E_{rel}^{l} = \left( \frac{w_{l}}{||\vect{v}_{l}||} - 1 \right)
         \end{displaymath}
         with $||\vect{v}_{l}|| = \sqrt{\sum_{k=1}^{n} (s_{k}
         v_{l}^{k})^{2}}$ being the Euclidean norm of the vector
         $\vect{v}_{l}$ expressed in the world coordinate.
         
   \item Minimizing the maximum error leads to computing optimal real
         weights with the following equation: $E_{max} = - E_{1} =
         - E_{2} = ... = - E_{n}$.

   \item A depth-first search in an integer weights tree taking
         the error into account can lead to optimal integer weights set
         for a given mask \cite{fouard:ivc:2005,malandain:RR:2005}.
 \end{enumerate}

We have generalized weighted distance properties found in the
literature to modules. These properties are true for the well-known
cubic grid, but can also be applied to other grids such as the FCC and
BCC grids as we will see in Section~\ref{sec:bccfcc}. In applications,
an efficient algorithm for computing distance transforms is is
needed. In the following section, we discuss such an algorithm: the
two-scan chamfer algorithm. We prove that the algorithm produces
correct result for images on general point lattices.

%%%%%%%%%%%%%%%%%%%%%%%%%%%%%%%%%%%%%%%%%%%%%%%%%%%%%%%%%%%%%%%%
%%                                                            %%
%%  HOW TO COMPUTE DISTANCE MAPS                              %%
%%             USING THE CHAMFER ALGORITHM                    %%
%%                                                            %%
%%%%%%%%%%%%%%%%%%%%%%%%%%%%%%%%%%%%%%%%%%%%%%%%%%%%%%%%%%%%%%%%
\section{How to compute distance maps using the chamfer algorithm
\label{sec:chamferAlgo}} 

There are basically three families of algorithms for computing weighted
distance transforms -- bucket-sort (also known as wave-front
propagation), parallel, and sequential algorithms. Initially in the
bucket-sort algorithm \cite{dorst:ESPC:1986}, the border points of the
object are stored in a list. These 
points are updated with the distance to the background. The distances are
propagated by removing the updated points from the list and adding the
neighbors of these points to the list. This is iterated until the
list is empty. The
parallel algorithm \cite{rosenfeld:pr:1968} is the most intuitive
one. Given the original image, a new image is obtained by applying the
chamfer mask simultaneously to each point of the image giving the
minimum distance value at each of these points. This process is
applied to the new image. By applying the procedure iteratively until
stability, the distance map is obtained (a proof using infimal
convolution is found in \cite{kiselman:cviu:1996}). 

Rosenfeld \& Pfaltz \cite{rosenfeld:acm:1966} showed that this is
equivalent to two sequential scans of the image for grids in
$\mathbb{Z}^2$ and a $3 \times 3$ mask. Sequential means that we use
previously updated values of the same image to obtain new updated
values. Two advantages are that a two-scan algorithm is enough to
construct a distance map in any dimension and that the complexity is
known ($O (M)$, where 
$M$ is the number of points in  the image). The worst case complexity
of the parallel algorithm is $O(M^2)$ -- $M$ iterations of operations
on $M$ points. This result is now generalized and proved to be correct
on any grid for which all grid point coordinates can be expressed by a
basis.

In this section, $\mathcal{S}$ is a \textit{finite} subset of
$\mathcal{G}$ and $\mathcal{R}=\mathbb{Z}$.  
The cardinality of $\mathcal{S}$ is denoted $M=\textnormal{card}(\mathcal{S})$. 
Observe that the chamfer weights are in $\mathcal{R}^+=\mathbb{N}$
(Definition~\ref{def:chamfer-mask:restricted},
condition~\ref{maskCond:positiveWeights}) and that the weighted
distance have values in $\mathcal{R}^+$
(Definition~\ref{def:weighted-distance}). 

\begin{definition}
Given $\point{a}=(a_1,a_2,\dots,a_n)$, where $a_1,a_2,\dots,a_n\in
\mathcal{R}$ and $\sigma \in \mathbb{R}$. Let 
$$\mathcal{T}^\point{a}_{\sigma,\square}=\left\{\point{p} \in \mathcal{G}:
a_1 p^1 + a_2 p^2 + \dots + a_n p^n \square \sigma \right\},$$ 
where $\square$ is one of the relations $<,\leq,=,\geq,$ or $>$.
\end{definition}

For example, $\mathcal{T}^\point{a}_{\sigma,=}$ is a hyperplane and
$\mathcal{T}^\point{a}_{\sigma,<}$, $\mathcal{T}^\point{a}_{\sigma,\leq}$,
$\mathcal{T}^\point{a}_{\sigma,>}$, and $\mathcal{T}^\point{a}_{\sigma,\geq}$
are half-spaces separated by the hyperplane
$\mathcal{T}^\point{a}_{\sigma,=}$.

\begin{definition}[Scanning masks]
\label{def:scanningMasks}
Let a chamfer mask $\mathcal{C}$ be given. Let also $a_1,a_2,\dots,a_n
\in \mathcal{R}$ defining the hyperplane $\mathcal{T}^\point{a}_{0,=}$
such that $\mathcal{T}^\point{a}_{0,=} \cap \mathcal{C} = \emptyset$
be given.

The {\em scanning masks with respect to $\mathcal{C}$} are defined as
$$\mathcal{C}_1=\left\{ \left(\vect{v}_k, w_k \right) \in
\mathcal{C}: \vect{v}_k \in \mathcal{T}^\point{a}_{0,<}\right\}$$ 
 $$\mathcal{C}_2=\left\{ \left(\vect{v}_k, w_k \right) \in
 \mathcal{C}: \vect{v}_k \in \mathcal{T}^\point{a}_{0,>}
 \right\}$$
\end{definition} 
In the following, the notation $\vect{v}_i \in \mathcal{C}_l$ and
$w_i \in \mathcal{C}_l$, $l \in \{1,2\}$, means that the pair
$\left(\vect{v}_i, w_i \right) \in \mathcal{C}_l$ will be used. 

\begin{definition}[Scanning order]
A {\em scanning order} is an ordering of the
$M=\textnormal{card}(\mathcal{S})$ points in $\mathcal{S}$, denoted
$\point{p}_1, \point{p}_2, \dots, \point{p}_M$.
%, in which the propagation of local distances is performed. 
\end{definition}
For a scanning mask to propagate distances efficiently, it is
important that, in each step of the propagation, the values at the
points in $\mathcal{S}$ from which the mask can propagate distances
have already been updated. This is guaranteed if each point that can
be reached by the scanning mask either has already been visited or is
outside the image.  

\begin{definition}[Mask supporting a scanning order]
Let $\point{p}_1, \point{p}_2, \dots, \point{p}_M$ be a scanning order
and $\mathcal{C}_l$ a scanning mask. The scanning mask $\mathcal{C}_l$
{\em supports the scanning order} if $$\forall \point{p}_i, \forall
\vect{v}_j \in \mathcal{C}_l, \exists
i'<i:\left(\point{p}_{i'}=\point{p}_i+\vect{v}_j \textrm{ or }
\point{p}_i+\vect{v}_j \notin \mathcal{S} \right).$$ 
\end{definition}

\begin{proposition}
Given a scanning mask $\mathcal{C}_l$ and an image $\mathcal{S}$,
there is a scanning order such that $\mathcal{C}_l$ supports the
scanning order. 
\end{proposition}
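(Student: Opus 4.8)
\emph{Proof plan.} The plan is to exhibit an explicit scanning order obtained by sorting the points of $\mathcal{S}$ by the value of the linear functional that defines the separating hyperplane of Definition~\ref{def:scanningMasks}. Since the two cases are symmetric — replacing $\point{a}$ by $-\point{a}$ leaves the hyperplane $\mathcal{T}^\point{a}_{0,=}$ unchanged but interchanges $\mathcal{T}^\point{a}_{0,<}$ and $\mathcal{T}^\point{a}_{0,>}$, hence interchanges $\mathcal{C}_1$ and $\mathcal{C}_2$ — it suffices to treat $\mathcal{C}_l = \mathcal{C}_1$.

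First I would introduce, for $\point{p}\in\mathcal{G}$, the quantity $\phi(\point{p}) = a_1 p^1 + a_2 p^2 + \dots + a_n p^n \in \mathbb{R}$, so that $\mathcal{T}^\point{a}_{0,<} = \{\point{p} : \phi(\point{p}) < 0\}$ and $\phi$ is additive, $\phi(\point{p}+\point{q}) = \phi(\point{p}) + \phi(\point{q})$. Because $\mathcal{S}$ is finite, I can enumerate its $M$ points as $\point{p}_1, \point{p}_2, \dots, \point{p}_M$ so that $\phi(\point{p}_1) \leq \phi(\point{p}_2) \leq \dots \leq \phi(\point{p}_M)$, breaking ties among points on a common level set of $\phi$ arbitrarily. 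This is the candidate scanning order.

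Then I would check the supporting condition directly. Fix $\point{p}_i \in \mathcal{S}$ and $\vect{v}_j \in \mathcal{C}_1$, and suppose $\point{p}_i + \vect{v}_j \in \mathcal{S}$, say $\point{p}_i + \vect{v}_j = \point{p}_{i'}$. By additivity of $\phi$ and the fact that $\vect{v}_j \in \mathcal{T}^\point{a}_{0,<}$, i.e. $\phi(\vect{v}_j) < 0$, we get $\phi(\point{p}_{i'}) = \phi(\point{p}_i) + \phi(\vect{v}_j) < \phi(\point{p}_i)$. Since the enumeration is non-decreasing in $\phi$, a point with strictly smaller $\phi$-value has strictly smaller index, so $i' < i$. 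Thus for every $\point{p}_i$ and every $\vect{v}_j \in \mathcal{C}_1$ either $\point{p}_i + \vect{v}_j \notin \mathcal{S}$, or $\point{p}_i + \vect{v}_j = \point{p}_{i'}$ with $i' < i$ — which is exactly the definition of $\mathcal{C}_1$ supporting the scanning order.

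There is no serious obstacle here; the one point that needs care is that the inequality $\phi(\vect{v}_j) < 0$ is \emph{strict}, which is precisely where the hypothesis $\mathcal{T}^\point{a}_{0,=} \cap \mathcal{C} = \emptyset$ from Definition~\ref{def:scanningMasks} is used: it guarantees that no mask vector lies on the hyperplane $\phi = 0$, so the conclusion $i' < i$ survives an arbitrary tie-breaking rule among points with the same $\phi$-value. Were mask vectors on the hyperplane allowed, two such points could be forced into either order and the argument would fail.
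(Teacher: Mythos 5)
Your proof is correct and follows essentially the same route as the paper: both order the points of $\mathcal{S}$ non-decreasingly by the linear functional $\phi(\point{p}) = a_1 p^1 + \dots + a_n p^n$ (the paper phrases this as a greedy selection of a remaining point of minimal $\phi$-value, you as an explicit sort with arbitrary tie-breaking) and then use additivity of $\phi$ together with $\phi(\vect{v}_j) < 0$ for $\vect{v}_j \in \mathcal{C}_1$ to conclude the support condition. The only tiny quibble is attribution: the strictness $\phi(\vect{v}_j)<0$ already follows from $\mathcal{C}_1 \subset \mathcal{T}^\point{a}_{0,<}$ by definition, while the hypothesis $\mathcal{T}^\point{a}_{0,=}\cap\mathcal{C}=\emptyset$ chiefly guarantees that no mask vector is lost when splitting $\mathcal{C}$ into $\mathcal{C}_1$ and $\mathcal{C}_2$.
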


\begin{proof}
Let $\mathcal{T}^\point{a}_{0,=}$ such that
$\mathcal{T}^\point{a}_{0,=} \cap \mathcal{C} = \emptyset$ be the
hyperplane defining $\mathcal{C}_1$ and $\mathcal{C}_2$ (we consider
$\mathcal{C}_1$ here, the proof for $\mathcal{C}_2$ is similar). Now
we consider two sets, $\mathcal{S} \setminus \mathcal{V}$ and
$\overline{\mathcal{S}} \cup \mathcal{V}$, where
$\overline{\mathcal{S}}$ is the complement of $\mathcal{S}$ and the
elements in $\mathcal{V}$ are the already visited points in
$\mathcal{S}$. Let $\mathcal{V}= \{\point{p}_1,
\point{p}_2,\dots,\point{p}_N\}$,
$N<\textnormal{card}(\mathcal{S})$. For every $\point{p} \in
\mathcal{S}$, there is a $\sigma_{\point{p}}\in \mathbb{R}$
($\sigma_{\point{p}}=a_1 p^1 + a_2 p^2 + \dots + a_n p^n$) such that
$\point{p} \in \mathcal{T}^\point{a}_{\sigma_{\point{p}},=}$. Since
$\mathcal{S} \setminus \mathcal{V}$ is finite, there is a point
$\point{p} \in \mathcal{S} \setminus \mathcal{V}$ with corresponding
$\sigma_{\point{p}}$ such that $\point{p} = (\mathcal{S} \setminus
\mathcal{V}) \cap \mathcal{T}^\point{a}_{\sigma_{\point{p}},=}$,
$\mathcal{S} \setminus \mathcal{V}= (\mathcal{S} \setminus
\mathcal{V}) \cap \mathcal{T}^\point{a}_{\sigma_{\point{p}},\geq}$, and
$\emptyset= (\mathcal{S} \setminus \mathcal{V}) \cap
\mathcal{T}^\point{a}_{\sigma_{\point{p}},<}$. Thus, $\forall \vect{v} \in
\mathcal{C}_1, a_1 \cdot p^1 + a_2 \cdot p^2 + \dots + a_n\cdot
p^n=\sigma_{\point{p}}$ and $a_1 \cdot v^1 + a_2 \cdot v^2 + \dots +
a_n\cdot v^n<0$, so $a_1 \cdot (p^1+v^1) + a_2 \cdot (p^2+v^2) + \dots
+ a_n\cdot (p^n+v^n) < \sigma_{\point{p}}$ and thus $\point{p}+\vect{v} \in
\mathcal{T}^\point{a}_{\sigma_{\point{p}},<}$ which has an empty
intersection with $\mathcal{S} \setminus \mathcal{V}$, so
$\point{p}+\vect{v} \notin (\mathcal{S} \setminus \mathcal{V})$. Let
this $\point{p}$ be $\point{p}_{N+1}$, assign it to $\mathcal{V}$ and
repeat until $\mathcal{S}=\mathcal{V}$. 
\end{proof}

\begin{definition}[Chamfer algorithm]\label{chamfer_alg}
Let the scanning masks $\mathcal{C}_1$ and $\mathcal{C}_2$ 
%=\left\{left(\vect{v}_k,w_k\right) \in \mathcal{G} \times \mathcal{R}, 1\leq k \leq m\right\}
and scanning orders $\point{p}_1, \point{p}_2, \dots, \point{p}_M$ and
$\point{p}_M, \point{p}_{M-1}, \dots, \point{p}_1$ such that the masks
support these scanning orders be given. 
\begin{itemize}
\item Initially, $\forall \point{p} \in X, f(\point{p})\leftarrow
      +\infty$ and  $\forall \point{p} \in \overline{X}, f(\point{p})
      \leftarrow 0$. 
\item The image is scanned two times using the two scanning orders.
\item For each visited point $\point{p}_i$ in scan $l$,
      $$f(\point{p}_i)\leftarrow \min\left(\min_{\left(\vect{v}_j,
      w_j\right)\in \mathcal{C}_l: \left(\point{p}_i+\vect{v}_j
      \right)\in \mathcal{S}} \left(w_j +
      f(\point{p}_i+\vect{v}_j) \right), f(\point{p}_i) \right).$$ 

\end{itemize}
\end{definition}

\begin{remark}
Usually, in the first scan $f(\point{p}_i)$ is omitted, as
$f(\point{p}_i)$ is only computed for $X$, so we know
$f(\point{}_i)=\infty$. 
\end{remark}

Now we prove that, in a distance map $DM_X$, for any point
$\point{p}'$ on a path defining the distance between $\point{p}\in X$
and its closest background point $\point{q} \in \overline{X}$, the
value in the distance map at point $\point{p}'$ is given by
$DM_X(\point{p}')=d(\point{q},\point{p}')$. In other words, the
closest point in the background from $\point{p}'$ is $\point{q}$. 

\begin{lemma}\label{chamfer_lemma}
Let $\point{p}\in X$ and $\point{q}\in \overline{X}$ be such that
$d(\point{p},\point{q})=d(\point{p}, \overline{X})$ and
$\mathcal{I}=\left\{ i:\left(\vect{v}_i,w_i\right)\in
\mathcal{C}\right\}$. There is a set $\{\alpha_i\in\mathcal{R}\}$ such
that the point $\point{p}$ can be written $\displaystyle
\point{p}=\point{q}+\sum_{i\in \mathcal{I}} \alpha_i \vect{v}_i$ and
$\displaystyle d(\point{p}, \point{q})=\sum_{i\in \mathcal{I}}
\alpha_i w_i$. 

For any set $\left\{ \beta_i \in \mathcal{R}: 0\leq\beta_i \leq
\alpha_i \right\}$, if $\point{p}'\displaystyle = \point{q}+\sum_{i\in
\mathcal{I}} \beta_i \vect{v}_i\in \mathcal{S}$, then $\displaystyle
d(\point{p}', \overline{X}) = d(\point{p}',\point{q})=\sum_{i\in
\mathcal{I}} \beta_i w_i$. 
\end{lemma}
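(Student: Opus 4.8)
The plan is to first establish the decomposition of $\point{p}$ in terms of chamfer mask vectors by using the wedge structure, then to prove the second (substantive) claim by a two-sided inequality argument, exploiting translation invariance of the weighted distance together with the triangular inequality and the minimality of paths.

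For the first claim, I would recall that by Theorem~\ref{theorem:value} (combined with translation invariance, Corollary~2.1), the weighted distance $d_{\mathcal{C}}(\point{p},\point{q})=d_{\mathcal{C}}(\point{q}-\point{p})$ (or $d_{\mathcal{C}}(\point{p}-\point{q})$ by symmetry) is realized by a minimum-cost path. Since $\mathcal{C}=\mathcal{C}_\rho$ is organized in $\mathcal{G}$-basis wedges, the vector $\point{p}-\point{q}$ lies in some wedge $\llangle \vect{v}_{i_1},\dots,\vect{v}_{i_n}\rrangle$, and by Lemma~\ref{lemma:wedge:convex} (convex normalized polytope) the minimum-cost path uses only the $n$ vectors of that wedge with nonnegative coefficients $\alpha_{i_k}\in\mathcal{R}^+$; extending by $\alpha_i=0$ for the remaining indices gives the claimed representation with $d(\point{p},\point{q})=\sum_{i\in\mathcal{I}}\alpha_i w_i$.

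For the second claim, fix $\point{p}'=\point{q}+\sum_{i\in\mathcal{I}}\beta_i\vect{v}_i\in\mathcal{S}$ with $0\leq\beta_i\leq\alpha_i$. One direction is easy: the partial sum $\sum_{i\in\mathcal{I}}\beta_i\vect{v}_i$ is itself a path from $\point{q}$ to $\point{p}'$ with all coefficients nonnegative, so $d(\point{p}',\point{q})\leq\sum_{i\in\mathcal{I}}\beta_i w_i$, and since $\point{q}\in\overline{X}$ this gives $d(\point{p}',\overline{X})\leq\sum_{i\in\mathcal{I}}\beta_i w_i$. For the reverse inequality, write $\point{p}=\point{p}'+\sum_{i\in\mathcal{I}}(\alpha_i-\beta_i)\vect{v}_i$, which exhibits a path from $\point{p}'$ to $\point{p}$ of cost $\sum_{i\in\mathcal{I}}(\alpha_i-\beta_i)w_i$, so $d(\point{p},\point{p}')\leq\sum_{i\in\mathcal{I}}(\alpha_i-\beta_i)w_i$. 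Now suppose for contradiction that $d(\point{p}',\overline{X})<\sum_{i\in\mathcal{I}}\beta_i w_i$, say $d(\point{p}',\point{r})<\sum_{i\in\mathcal{I}}\beta_i w_i$ for some $\point{r}\in\overline{X}$. Then by the triangular inequality
\begin{displaymath}
d(\point{p},\point{r})\leq d(\point{p},\point{p}')+d(\point{p}',\point{r})<\sum_{i\in\mathcal{I}}(\alpha_i-\beta_i)w_i+\sum_{i\in\mathcal{I}}\beta_i w_i=\sum_{i\in\mathcal{I}}\alpha_i w_i=d(\point{p},\point{q})=d(\point{p},\overline{X}),
\end{displaymath}
contradicting that $d(\point{p},\overline{X})$ is the infimum over background points. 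Hence $d(\point{p}',\overline{X})=\sum_{i\in\mathcal{I}}\beta_i w_i$, and combined with the easy direction and the fact that $\point{q}$ attains it, $d(\point{p}',\overline{X})=d(\point{p}',\point{q})=\sum_{i\in\mathcal{I}}\beta_i w_i$.

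The main obstacle is the first claim: I must be careful that the minimum-cost path from $\point{q}$ to $\point{p}$ genuinely uses only wedge vectors with nonnegative coefficients. This is exactly the content of Lemma~\ref{lemma:wedge:convex} and the proof of Theorem~\ref{theorem:value}, so it can be invoked directly, but one should note that $\point{p}'$ need not lie in the same wedge as $\point{p}$ — however, this does not matter, because the second claim only needs the existence of \emph{some} nonnegative-coefficient path (the partial sum), not that it be the optimal wedge path for $\point{p}'$; optimality of that partial path is then forced by the triangular-inequality argument above. A secondary subtlety is ensuring all intermediate points of the paths used are irrelevant: the definition of a path (Definition~\ref{def:chamfer-path}) only constrains the endpoints and the multiset of mask vectors, so no lattice-membership conditions on intermediate points are required — only $\point{p}'\in\mathcal{S}$ is assumed, and that is all that is used.
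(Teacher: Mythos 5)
Your proposal is correct and follows essentially the same route as the paper: the key step --- bounding $d(\point{p}',\point{p})$ by the cost $\sum_{i}(\alpha_i-\beta_i)w_i$ of the complementary partial path and then using the triangular inequality to contradict the minimality of $d(\point{p},\overline{X})$ --- is exactly the paper's argument, with your witness $\point{r}$ playing the role of its $\point{q}'$. The only (cosmetic) difference is that you justify the first claim via Theorem~\ref{theorem:value} and Lemma~\ref{lemma:wedge:convex}, whereas the paper takes it directly from the definition of the weighted distance as a minimum-cost path with nonnegative coefficients.
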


\begin{proof}
Let $d\left( \point{p}',\overline{X} \right)=K \neq \sum \beta_i
w_i$. Since $\sum \beta_i w_i$ is the length of a path
between $\point{p}'=\point{q}+\sum \beta_i \vect{v}_i$ and
$\point{q}$, the weighted distance can not be larger than this. Assume
that $K<\sum \beta_i w_i$. Then there is a $\point{q}' \in
\overline{X}$ such that 
\begin{eqnarray}
K=d\left(\point{q}',\point{p}'\right) & < & d\left(\point{q},
\point{p}'\right) \textrm{ and}\nonumber \\ 
d\left(\point{q}',\point{p}\right) & \geq &
d\left(\point{q},\point{p}\right)=d\left(\point{p},\overline{X}\right)
\textrm{.}\nonumber 
\end{eqnarray}
Now,
\begin{displaymath}
d(\point{q},\point{p})=\sum \alpha_i w_i =\sum \beta_i w_i
+\sum (\alpha_i-\beta_i) w_i >d\left(\point{q}',\point{p}'\right)
+ \sum (\alpha_i-\beta_i) w_i. 
\end{displaymath}
Since $\sum (\alpha_i-\beta_i) w_i$ is the length of one path (we
cannot assume that it is a shortest path) between the points
$\point{p}'=\point{q}+\sum \beta_i \vect{v}_i$ and
$\point{p}=\point{q}+\sum \alpha_i \vect{v}_i$, it follows that $\sum
(\alpha_i-\beta_i) w_i \geq d\left(\point{p}',\point{p}\right)$
and thus 
\begin{displaymath}
d\left(\point{q}',\point{p}'\right) + \sum (\alpha_i-\beta_i)
w_i\geq d\left(\point{q}',\point{p}'\right) +
d\left(\point{p}',\point{p}\right) \geq d(\point{q}',\point{p}), 
\end{displaymath}
which contradicts $d\left(\point{q}',\point{p}\right) \geq
d\left(\point{q},\point{p}\right)$. 
\end{proof}

To assure that the propagation does not depend on points outside the
image, the definition of border points \textit{or} of wedge preserving
images defined below can be used. 

\begin{definition}[Border point]\label{def:borderpoint}
Given a chamfer mask $\mathcal{C}$, a point $\point{p} \in
\mathcal{S}$ is {\em a border point} if $$\exists \vect{v} \in
\mathcal{C}: \point{p}+\vect{v} \notin \mathcal{S}.$$ 
\end{definition}

\begin{lemma}\label{border_lemma}
If for all border points $\point{r} \in \mathcal{S}, \point{r} \in
\overline{X}$, then for all $\point{p}\in X ,\point{q} \in
\overline{X}$ such that
$d(\point{p},\overline{X})=d(\point{p},\point{q})$, all points in any
shortest path between $\point{p}$ and $\point{q}$ are in
$\mathcal{S}$. 
\end{lemma}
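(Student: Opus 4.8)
The plan is to argue by contradiction with the help of Lemma~\ref{chamfer_lemma}. Assume the hypothesis holds but that, for some $\point{p}\in X$ and $\point{q}\in\overline{X}$ with $d(\point{p},\overline{X})=d(\point{p},\point{q})$, some shortest path between $\point{p}$ and $\point{q}$ passes through a point that is not in $\mathcal{S}$. Fix such a shortest path $\mathcal{P}$ and an ordering of its vectors, and let $\point{p}=\point{x}_0,\point{x}_1,\dots,\point{x}_l=\point{q}$ be the points visited, so that $\point{x}_t=\point{x}_{t-1}+\vect{v}_{j_t}$ with $\vect{v}_{j_t}\in\mathcal{C}$ for every $t$ and $\mathcal{W}(\mathcal{P})=\sum_{t=1}^{l}w_{j_t}=d(\point{p},\point{q})$. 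Let $t^{\ast}$ be the smallest index with $\point{x}_{t^{\ast}}\notin\mathcal{S}$. Since $\point{x}_0=\point{p}\in\mathcal{S}$ we get $t^{\ast}\ge 1$, since $\point{x}_l=\point{q}\in\mathcal{S}$ we get $t^{\ast}\le l-1$, and by minimality $\point{x}_{t^{\ast}-1}\in\mathcal{S}$.

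First I would observe that $\point{x}_{t^{\ast}-1}$ is a border point: it lies in $\mathcal{S}$ and $\point{x}_{t^{\ast}-1}+\vect{v}_{j_{t^{\ast}}}=\point{x}_{t^{\ast}}\notin\mathcal{S}$ with $\vect{v}_{j_{t^{\ast}}}\in\mathcal{C}$, so Definition~\ref{def:borderpoint} applies. The hypothesis of the lemma then gives $\point{x}_{t^{\ast}-1}\in\overline{X}$; since a distance is non-negative and vanishes exactly on the diagonal, taking $\point{x}_{t^{\ast}-1}$ itself as its nearest background point shows $d(\point{x}_{t^{\ast}-1},\overline{X})=0$.

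On the other hand I would compute $d(\point{x}_{t^{\ast}-1},\overline{X})$ via Lemma~\ref{chamfer_lemma}. Writing $\mathcal{I}=\{i:(\vect{v}_i,w_i)\in\mathcal{C}\}$ and using the symmetry of $\mathcal{C}$ (each $-\vect{v}_{j_s}$ again lies in $\mathcal{C}$, with weight $w_{j_s}$), the whole path yields $\point{p}=\point{q}+\sum_{i\in\mathcal{I}}\alpha_i\vect{v}_i$ with $\sum_{i\in\mathcal{I}}\alpha_i w_i=\sum_{t=1}^{l}w_{j_t}=d(\point{p},\point{q})=d(\point{p},\overline{X})$ --- exactly the kind of decomposition Lemma~\ref{chamfer_lemma} admits --- while the tail of $\mathcal{P}$ yields $\point{x}_{t^{\ast}-1}=\point{q}+\sum_{i\in\mathcal{I}}\beta_i\vect{v}_i$ with $0\le\beta_i\le\alpha_i$ and $\sum_{i\in\mathcal{I}}\beta_i w_i=\sum_{s=t^{\ast}}^{l}w_{j_s}$. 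As $\point{x}_{t^{\ast}-1}\in\mathcal{S}$, Lemma~\ref{chamfer_lemma} gives $d(\point{x}_{t^{\ast}-1},\overline{X})=\sum_{s=t^{\ast}}^{l}w_{j_s}$. But $t^{\ast}\le l-1$ makes this a non-empty sum of strictly positive weights (Definition~\ref{def:chamfer-mask:restricted}, condition~\ref{maskCond:positiveWeights}), hence $d(\point{x}_{t^{\ast}-1},\overline{X})>0$, contradicting the previous paragraph. Thus no such index $t^{\ast}$ exists, i.e.\ every point of every shortest path between $\point{p}$ and $\point{q}$ lies in $\mathcal{S}$.

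I expect the only delicate point to be the invocation of Lemma~\ref{chamfer_lemma} in the third paragraph: one must check that the coefficients coming from the \emph{tail} of the chosen path genuinely satisfy $0\le\beta_i\le\alpha_i$ for a decomposition $\{\alpha_i\}$ that the lemma accepts, which is why the symmetry of the mask and the fact that $\mathcal{P}$ is a \emph{shortest} path (so its vector multiset realizes $d(\point{p},\overline{X})$, not merely some path cost) both enter. Everything else is an elementary contradiction once $\point{x}_{t^{\ast}-1}$ has been identified as a border point.
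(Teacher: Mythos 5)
Your proof is correct. Both you and the paper hinge on the same key observation: if a shortest path from $\point{p}$ to $\point{q}$ ever leaves $\mathcal{S}$, then the last point still in $\mathcal{S}$ before the first exit is a border point (Definition~\ref{def:borderpoint}), hence by hypothesis a background point. Where you diverge is in how the contradiction is then extracted. The paper's proof is self-contained and more elementary: that border point $\point{p}''$ lies on a proper prefix of the shortest path, so $d(\point{p},\point{p}'')$ is bounded by the prefix cost, which is strictly less than $d(\point{p},\point{q})$ because the omitted weights are positive; since $\point{p}''\in\overline{X}$, this directly contradicts $d(\point{p},\overline{X})=d(\point{p},\point{q})$. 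You instead route the contradiction through Lemma~\ref{chamfer_lemma}: the border point $\point{x}_{t^{\ast}-1}$, being a partial sum of the shortest path's vectors with coefficients $0\leq\beta_i\leq\alpha_i$ and lying in $\mathcal{S}$, must by that lemma satisfy $d(\point{x}_{t^{\ast}-1},\overline{X})=\sum_i\beta_i w_i>0$, while its membership in $\overline{X}$ forces that distance to be $0$. Your route is valid --- Lemma~\ref{chamfer_lemma} precedes this one, and its proof does work for any cost-realizing decomposition $\{\alpha_i\}$, the delicate point you correctly flag --- but it is heavier machinery than needed, since the paper obtains the same contradiction from the path prefix alone. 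In compensation, your version is more explicit about the combinatorics (the ordering of the steps, the first exit index $t^{\ast}$, and the verification of the hypotheses of Lemma~\ref{chamfer_lemma}), whereas the paper's two-sentence argument leaves both the strictness of $d(\point{p},\point{p}')<d(\point{p},\point{q})$ and the existence of a border point on the path implicit.
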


\begin{proof}
Assume that a point $\point{p}'$ in a shortest path between
$\point{p}$ and $\point{q}$ is not in $\mathcal{S}$. Then, since all
border points are in the background, there is a background grid point
in all paths between $\point{p}$ and $\point{p}'$. Since
$d(\point{p},\point{p}')<d(\point{p},\point{q})$, it follows that
there must be a border point $\point{p}'' \in \overline{X}$ such that
$d(\point{p},\point{p}'')<d(\point{p},\point{p}')<d(\point{p},\point{q})$
which contradicts the assumption $d(\point{p},\point{q})=d(\point{p},
\overline{X})$. 
\end{proof}

\begin{definition}[Wedge-preserving half-space]\label{def:wedge_pres}
Given a chamfer mask $\mathcal{C}$ and $\sigma \in \mathbb{R}$,
$a_1,a_2,\dots,a_n \in \mathcal{R}$ such that $\exists \point{p} \in
\mathcal{S} : \point{p} \in \mathcal{T}^\point{a}_{\sigma,=}$. The
half-spaces $\mathcal{T}^\point{a}_{\sigma,\geq}$ and
$\mathcal{T}^\point{a}_{\sigma,\leq}$ are {\em wedge-preserving } if
$$\forall \point{p} \in \mathcal{T}^\point{a}_{\sigma,=}, \forall
W,\textrm{ either } \left( \forall \vect{v} \in W, \point{p}+\vect{v}
\in \mathcal{T}^\point{a}_{\sigma,\leq}\right) \textrm{ or } \left( \forall
\vect{v} \in W, \point{p}+\vect{v} \in \mathcal{T}^\point{a}_{\sigma,\geq}
\right),$$ 
where $W$ denotes wedges in $\mathcal{C}$.
\end{definition}

\begin{definition}[Wedge-preserving image]\label{wedge_pres_halfspace}
Given a chamfer mask $\mathcal{C}$, the image $\mathcal{S}$ is {\em
wedge-preserving} if it is the intersection of wedge-preserving
half-spaces. 
\end{definition}

\begin{lemma}\label{lem_wedge_equival}
If an image $\mathcal{S}$ is wedge-preserving, then for all
$\point{p},\point{q} \in \mathcal{S}$, all points in any shortest path
between $\point{p}$ and $\point{q}$ are in $\mathcal{S}$. 
\end{lemma}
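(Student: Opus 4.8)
The plan is to reduce the statement to a computation inside a single wedge. First I would record the structural fact that a shortest path can be taken inside one wedge. Since $\point{q}-\point{p}$ lies in the $\mathcal{G}$-sector of some wedge $W=\llangle\vect{v}_{i_1},\dots,\vect{v}_{i_n}\rrangle$ of $\mathcal{C}$, the proof of Lemma~\ref{lemma:wedge:convex} shows that the path $\sum_{k=1}^{n}\alpha_k\vect{v}_{i_k}=\vect{pq}$ ($\alpha_k\in\mathcal{R}^+$) built only from the generators of $W$ has minimal cost; moreover, by condition~\ref{maskCond:convex} every normalized mask vector is a vertex of $\mathcal{B}_{\mathcal{C}}$, so any competing path of the same minimal cost must have its normalized direction vector on the boundary face of $\mathcal{B}_{\mathcal{C}}$ cut by the ray through $\vect{pq}$, hence (after using the symmetry condition~\ref{maskCond:symmetry} to get non-negative coefficients) it too uses only the vectors $\vect{v}_{i_1},\dots,\vect{v}_{i_n}$. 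Thus without loss of generality the given shortest path through $\point{p}'$ has this form, and $\point{p}'$, being a prefix sum along it, satisfies $\point{p}'=\point{p}+\sum_{k=1}^{n}\beta_k\vect{v}_{i_k}$ with $\beta_k\in\mathcal{R}^+$, $0\le\beta_k\le\alpha_k$ — so in particular $\point{p}'\in\mathcal{G}$ — and also $\point{q}-\point{p}'=\sum_{k=1}^{n}(\alpha_k-\beta_k)\vect{v}_{i_k}$ with $\alpha_k-\beta_k\ge0$.

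Next I would use wedge-preservation of $\mathcal{S}$. Write $\mathcal{S}=\bigcap_{j}H_j$ with each $H_j$ one of a wedge-preserving half-space pair, $H_j=\mathcal{T}^{\point{a}_j}_{\sigma_j,\square_j}$, $\square_j\in\{\ge,\le\}$, and check $\point{p}'\in H_j$ for every $j$. Fix $j$, put $\ell(\point{x})=a_{j,1}x^1+\dots+a_{j,n}x^n$, and assume without loss of generality $\square_j$ is $\ge$ (the other case is identical with the inequalities reversed). From $\point{p},\point{q}\in\mathcal{S}\subseteq H_j$ we get $\ell(\point{p})\ge\sigma_j$ and $\ell(\point{q})\ge\sigma_j$. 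The content of Definition~\ref{def:wedge_pres}, applied to the wedge $W$, is precisely that either $\ell(\vect{v}_{i_k})\ge0$ for all $k$, or $\ell(\vect{v}_{i_k})\le0$ for all $k$ (for $\point{r}$ on the hyperplane $\mathcal{T}^{\point{a}_j}_{\sigma_j,=}$ one has $\point{r}+\vect{v}\in\mathcal{T}^{\point{a}_j}_{\sigma_j,\le}$ iff $\ell(\vect{v})\le0$, so the alternative in the definition is independent of $\point{r}$). In the first case $\ell(\point{p}')=\ell(\point{p})+\sum_k\beta_k\ell(\vect{v}_{i_k})\ge\ell(\point{p})\ge\sigma_j$; in the second case $\ell(\point{p}')=\ell(\point{q})-\sum_k(\alpha_k-\beta_k)\ell(\vect{v}_{i_k})\ge\ell(\point{q})\ge\sigma_j$. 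Either way $\point{p}'\in H_j$, and since $j$ was arbitrary, $\point{p}'\in\mathcal{S}$.

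The step I expect to be the main obstacle is the first one: ``a shortest path lies in a single wedge'' is not stated verbatim earlier, and making it airtight requires pushing a little on the convexity argument of Lemma~\ref{lemma:wedge:convex} — one has to rule out that a minimal-cost path uses a mask vector whose normalized version lies in the relative interior of $\mathcal{B}_{\mathcal{C}}$ (impossible by condition~\ref{maskCond:convex}) or simultaneously uses vectors from two ``opposite'' faces. A minor bookkeeping point, worth a sentence, is that when $\vect{pq}$ lies on a face shared by several wedges the wedge $W$ can still be chosen so that every vector occurring in the path is one of its $n$ generators; after that, the remainder is the routine two-case linear estimate above, and one could equally well argue by splitting the shortest path at $\point{p}'$ into two shortest subpaths instead of using prefix sums.
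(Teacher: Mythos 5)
Your proof is correct and rests on the same key observation as the paper's: Definition~\ref{def:wedge_pres} forces all generators of the relevant wedge to lie on the same side of each hyperplane defining $\mathcal{S}$, after which a linear estimate on the prefix sums of the path (using $\point{p}$ for the ``$\geq 0$'' case and $\point{q}$ for the ``$\leq 0$'' case) gives membership of $\point{p}'$ in every half-space. The paper phrases this as a contradiction starting from the first point where the path would leave $\mathcal{S}$, and --- like you --- takes for granted that a shortest path stays within a single wedge; your explicit justification of that reduction via Lemma~\ref{lemma:wedge:convex} and condition~\ref{maskCond:convex}, and your flagging of the shared-face bookkeeping, make the argument somewhat more careful than the published one, but the substance is identical.
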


\begin{proof}
Let $\mathcal{S}$ be wedge-preserving and let $\point{p},\point{q} \in
\mathcal{S}$ such that $\displaystyle \point{p}= \point{q}+ \sum_{i\in
\mathcal{I}} \alpha_i \vect{v}_i$ with
$\mathcal{I}=\left\{i:\vect{v}_i \in W\right\}$, where $W$ is a
wedge. Assume that there is a fixed $k \in \mathcal{I}$ and a set
$\{\gamma_i:0 \leq \gamma_i \leq \alpha_i \textrm{ and }
\gamma_k<\alpha_k\}$ such that $\displaystyle \point{p}'= \point{q}+
\sum_{i\in \mathcal{I}} \gamma_i \vect{v}_i \in \mathcal{S}$ but
$\point{p}'+\vect{v}_k \notin \mathcal{S}$. In other words, if there
is a point not in $\mathcal{S}$ in the shortest path between
$\point{p}$ and $\point{q}$, then some point $\point{p}'$ in
$\mathcal{S}$ has a neighbor $\point{p}'+\vect{v}_k$ that is not in
$\mathcal{S}$. Since $\mathcal{S}$ is wedge-preserving, there is a
plane defined by $\sigma,a_1,a_2,\dots,a_n \in \mathcal{R}$ such that
$\point{p}'\in \mathcal{T}^\point{a}_{\sigma,\leq}$ and
$\point{p}'+\vect{v}_k\in \mathcal{T}^\point{a}_{\sigma,>}$. Since 
$$a_1 {p'}^1 + a_2 {p'}^2 + \dots + a_n {p'}^n \leq \sigma \textrm{ and}$$
$$a_1 ({p'}^1+v_k^1) + a_2 ({p'}^2+v_k^2) + \dots + a_n ({p'}^n+v_k^n)
> \sigma \textrm{, we get}$$ 
$$a_1 v_k^1 + a_2 v_k^2 + \dots + a_n v_k^n > 0.$$ 
Taking any $\point{r} \in \mathcal{T}^\point{a}_{\sigma,=}$,
$\point{r}+\vect{v}_k \in \mathcal{T}^\point{a}_{\sigma,>}$, which by
Definition~\ref{wedge_pres_halfspace} implies that $\forall \point{r}
\in \mathcal{T}^\point{a}_{\sigma,=}, \forall \vect{v} \in W,
\point{r}+\vect{v} \in \mathcal{T}^\point{a}_{\sigma,\geq}$. This implies
that  
\begin{equation}\label{geqvect}
\forall \vect{v} \in W, \vect{v} \in \mathcal{T}^\point{a}_{0,\geq}.
\end{equation}
Now, $\point{p}=\displaystyle \point{p}'+\sum (\alpha_i - \gamma_i)
\vect{v}_{i} \in \mathcal{S}$, but 
$$a_1 ({p'}^1+\sum (\alpha_i - \gamma_i) v_{i}^1) + a_2 ({p'}^2+\sum
(\alpha_i - \gamma_i) v_{i}^2) + \dots + a_n ({p'}^n+\sum (\alpha_i -
\gamma_i) v_{i}^n)=$$ 
$$=a_1 \left( {p'}^1+\vect{v}_k^1 \right) + a_2
\left({p'}^2+\vect{v}_k^2 \right) + \dots + a_n
\left({p'}^n+\vect{v}_k^n \right) +$$ 
$$+\sum_{i \in \mathcal{I}, i \neq k} (\alpha_i - \gamma_i)\left(a_1
v_i^1 + a_2 v_i^2 + \dots + a_n v_i^n \right)
+(\alpha_k-\gamma_k-1)\left(a_1 v_k^1 + a_2 v_k^2 + \dots + a_n v_k^n
\right) >$$ 
$$>\sigma+\sum_{i \in \mathcal{I}, i \neq k} (\alpha_i - \gamma_i)0+(\alpha_k-\gamma_k-1)0=\sigma,$$
since $\point{p}'+\vect{v}_k\in \mathcal{T}^\point{a}_{\sigma,>}$,
$\alpha_i \geq \gamma_i$ if $i\neq k$ and $\alpha_k \geq \gamma_k+1$,
and using Eq.~\ref{geqvect}. 
We have $\displaystyle \point{p} \in \mathcal{T}^\point{a}_{\sigma,>}$,
i.e. $\displaystyle \point{p} \notin \mathcal{S}$. Contradiction. 
\end{proof}

\begin{remark}
As the image generally is given first, a mask can often be built with
preserved wedges and then cut into two scanning masks. Then the points
in the image can be ordered such that the masks support the scanning
directions and then the distance map can be computed using our
algorithm.\ \ 
\end{remark}

\begin{figure}[ht!]\label{chamf-alg-conditions}
\begin{center}
\subfigure[]{\includegraphics[height=0.2\linewidth]{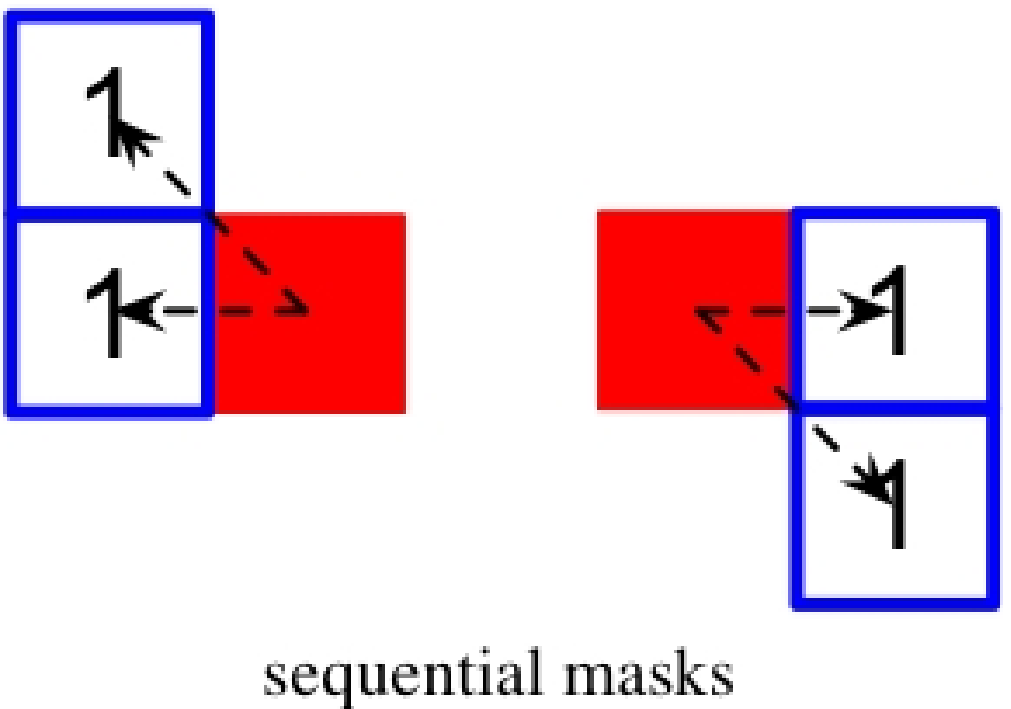}}
%\hfill
\subfigure[]{\includegraphics[height=0.28\linewidth]{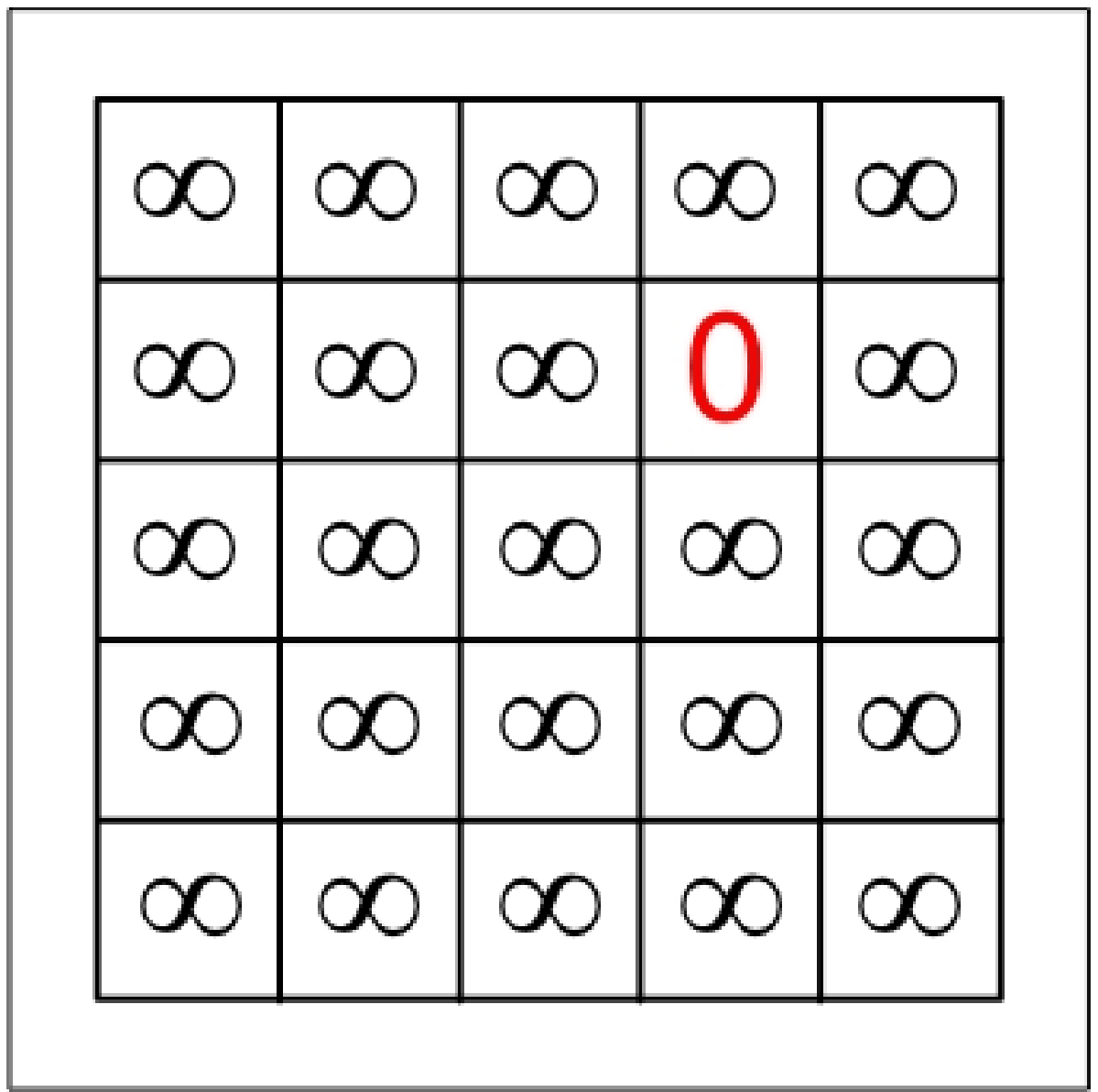}}
%\hfill
\subfigure[]{\includegraphics[height=0.28\linewidth]{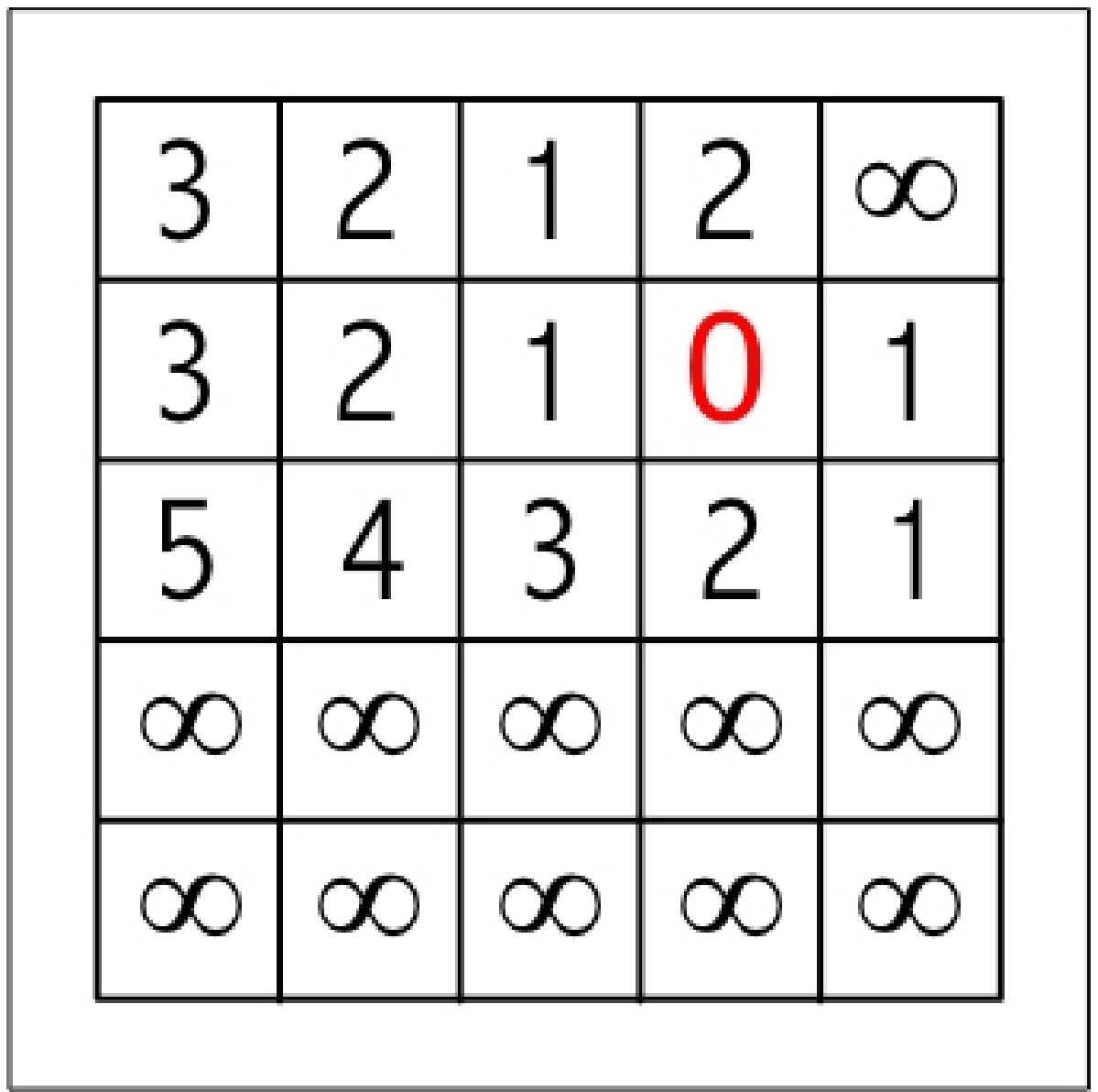}}
%\hfill
\subfigure[]{\includegraphics[height=0.28\linewidth]{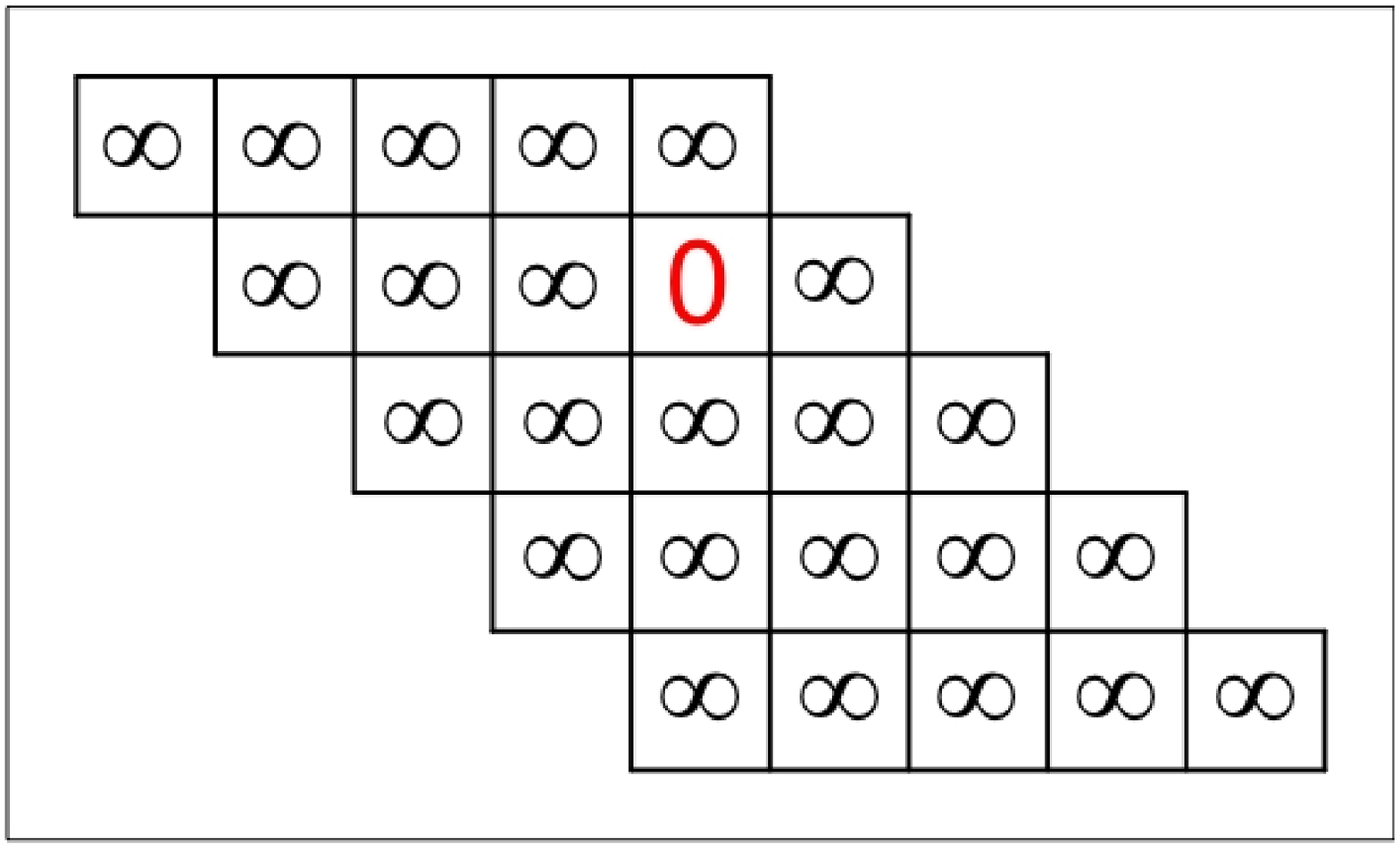}}
%\hfill
\subfigure[]{\includegraphics[height=0.28\linewidth]{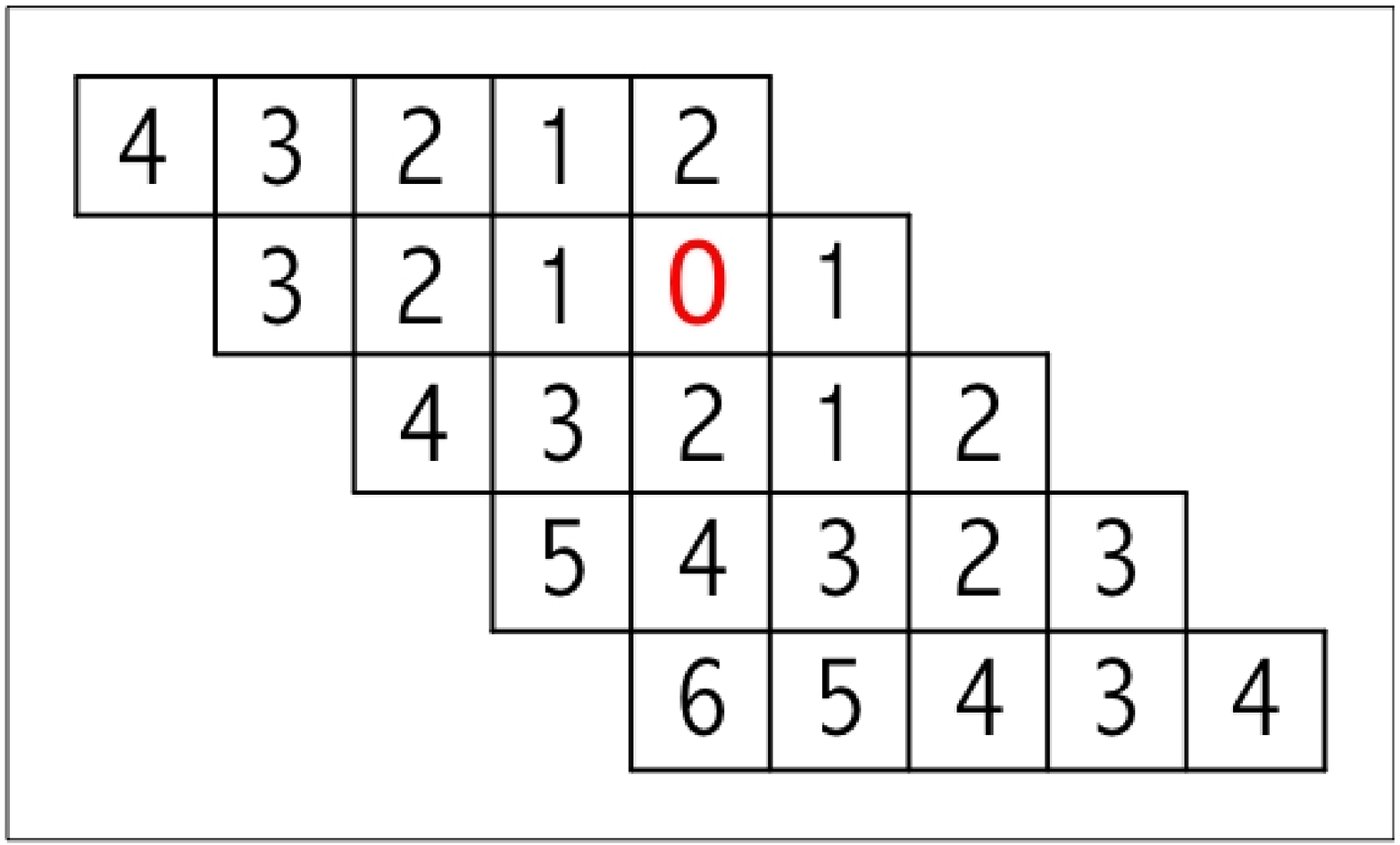}}
%\hfill
\subfigure[]{\includegraphics[height=0.28\linewidth]{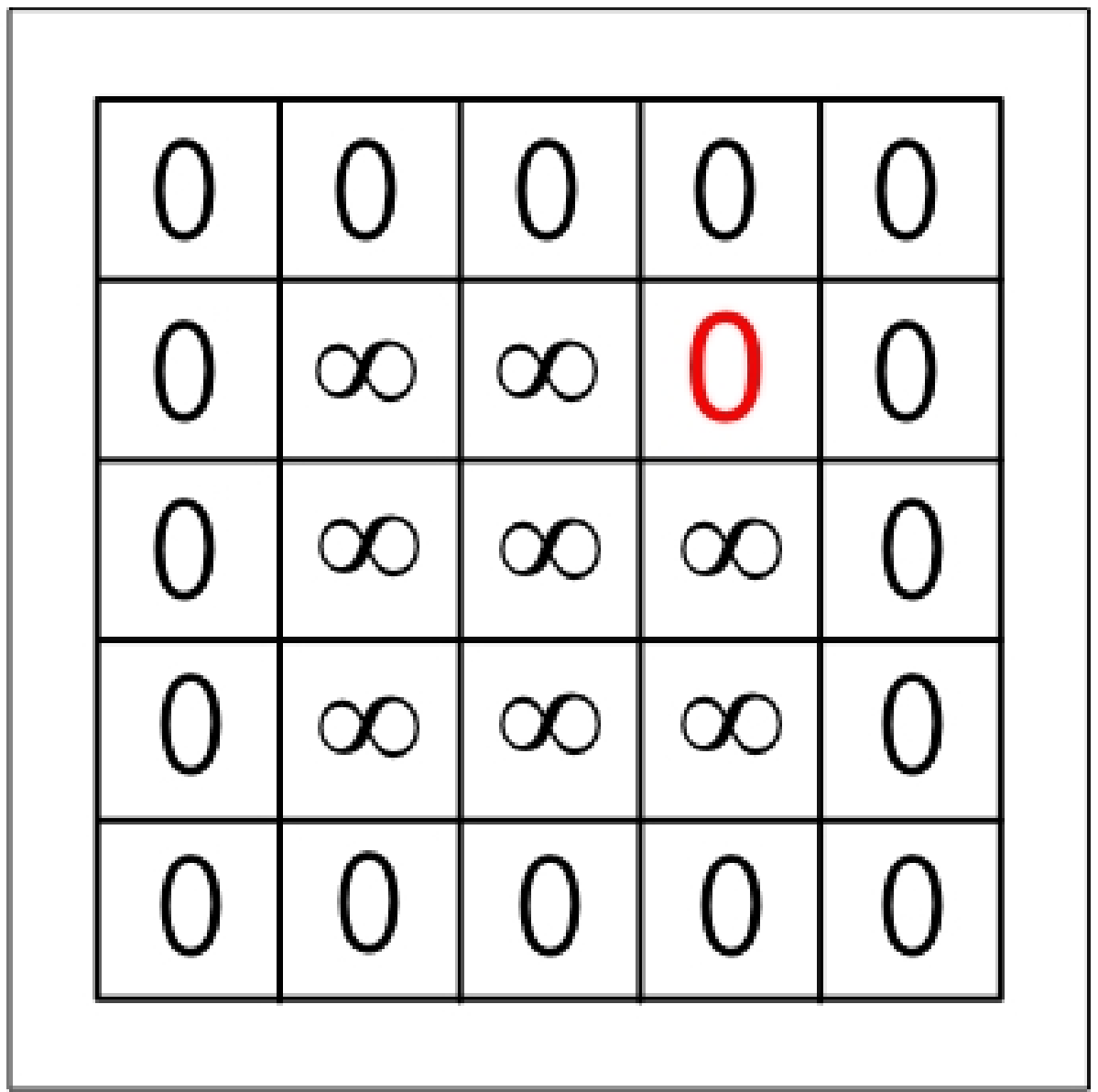}}
%\hfill
\subfigure[]{\includegraphics[height=0.28\linewidth]{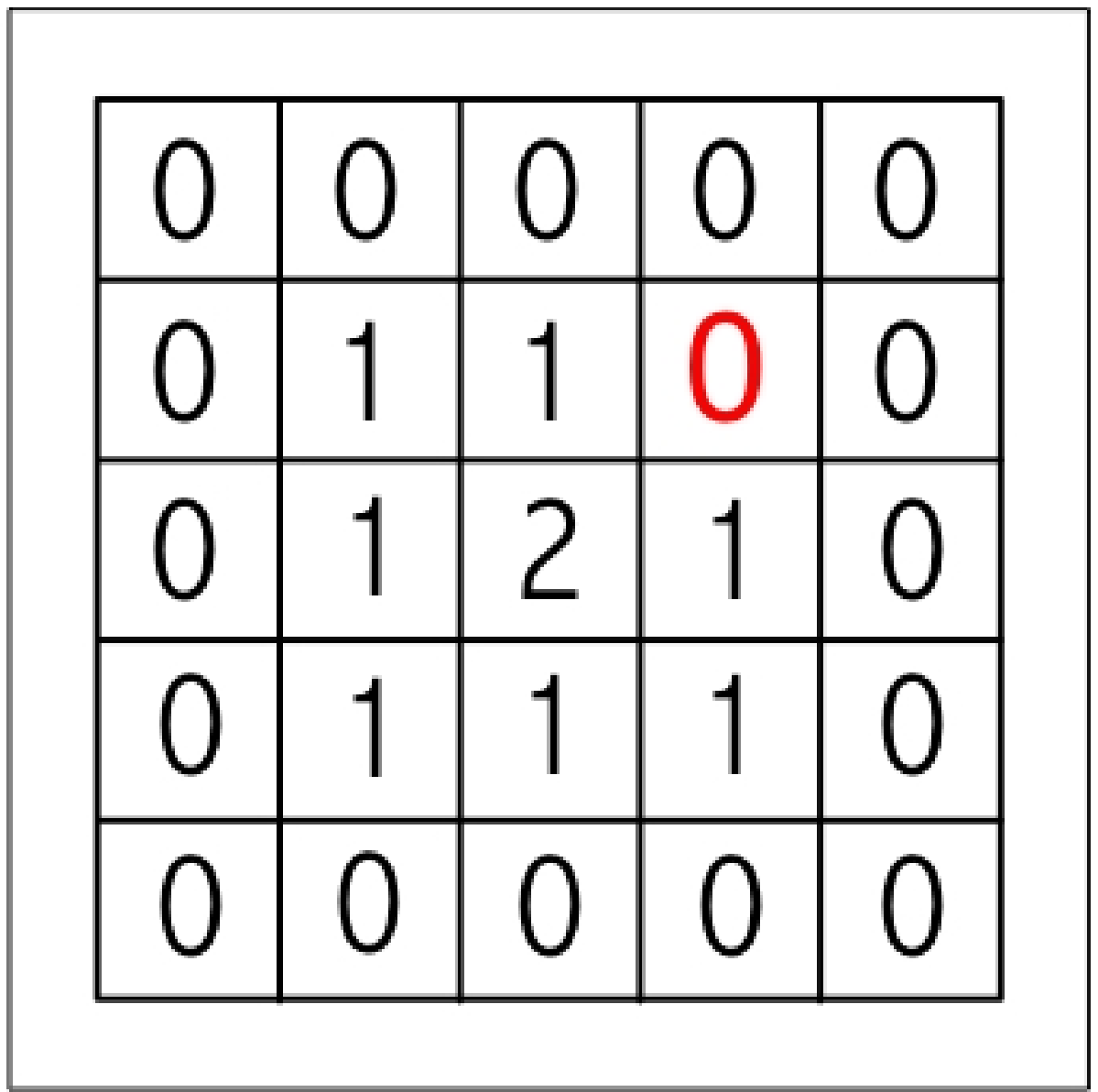}}
%\hfill
\caption{Consider the mask
$\mathcal{C}=\{((-1,0),1),((-1,1),1),((1,0),1),((1,-1),1)\}$ (a).
The image in (b) is not wedge-preserving when using this mask.  In
(c), the result of the chamfer algorithm using
$\mathcal{C}_1=\{((-1,0),1),((-1,1),1)\}$ and
$\mathcal{C}_2=\{((1,0),1),((1,-1),1)\}$ is shown. By considering the 
wedge-preserving image in (d), the correct distance map (shown in (e))
is produced. By assigning all border points to the background, the
image in (f) is achieved and the chamfer algorithm produces the
correct result shown in (g).}
\end{center}
\end{figure}

\begin{theorem}\label{chamf-alg-theorem}
If either
\begin{itemize}
\item for all border points $\point{p} \in \mathcal{S}, \point{p} \in \overline{X}$ or
\item $\mathcal{S}$ is wedge-preserving,
\end{itemize} then the chamfer algorithm in
Definition~\ref{chamfer_alg} produces distance maps as defined in
Definition~\ref{def:dist-map}. 
\end{theorem}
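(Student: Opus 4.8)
The plan is to show that when the algorithm terminates the array $f$ coincides with $DM_X$, by establishing a \emph{lower bound} $f(\point{p}) \geq d_{\mathcal{C}}(\point{p},\overline{X})$ that is maintained throughout the execution, together with an \emph{upper bound} that is reached at every foreground point by the end of the second scan. Since $f$ is initialised to $0$ on $\overline{X}$ and to $+\infty$ on $X$, the two bounds force $f = DM_X$ everywhere.

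For the lower bound I would argue by induction on the elementary updates that, at every stage, $f(\point{p})$ is either $+\infty$ or the cost of an actual path from $\point{p}$ to $\overline{X}$, and that $f(\point{p}) \geq d_{\mathcal{C}}(\point{p},\overline{X})$ for all $\point{p}\in\mathcal{S}$. The initialisation is clear, and an update $f(\point{p}_i) \leftarrow \min\bigl(f(\point{p}_i),\, w_j + f(\point{p}_i+\vect{v}_j)\bigr)$ preserves it because $d_{\mathcal{C}}(\point{p}_i,\overline{X}) \leq w_j + d_{\mathcal{C}}(\point{p}_i+\vect{v}_j,\overline{X}) \leq w_j + f(\point{p}_i+\vect{v}_j)$ by the triangle inequality (valid since $d_{\mathcal{C}}$ is a metric and $\vect{v}_j$ is a single mask step of cost $w_j$). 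In particular $f$ never drops below $0$ on $\overline{X}$, so there $f = DM_X = 0$.

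For the upper bound, fix $\point{p}\in X$ and $\point{q}\in\overline{X}$ with $d_{\mathcal{C}}(\point{p},\point{q}) = d_{\mathcal{C}}(\point{p},\overline{X})$. By Lemma~\ref{chamfer_lemma} write $\point{p} = \point{q} + \sum_i \alpha_i \vect{v}_i$ with $\alpha_i \geq 0$ and $\sum_i \alpha_i w_i = d_{\mathcal{C}}(\point{p},\point{q})$; since the hyperplane defining the scanning masks meets $\mathcal{C}$ trivially, every $\vect{v}_i$ used lies in $\mathcal{C}_1\cup\mathcal{C}_2$. Under either hypothesis of the theorem — all border points in $\overline{X}$, via Lemma~\ref{border_lemma}, or $\mathcal{S}$ wedge-preserving, via Lemma~\ref{lem_wedge_equival} — every point $\point{q} + \sum_i \beta_i \vect{v}_i$ with $0\leq\beta_i\leq\alpha_i$ is in $\mathcal{S}$. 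I would then realise this shortest path by a sequence $\point{q}=\point{r}_0,\point{r}_1,\dots,\point{r}_L=\point{p}$ of single mask steps $\point{r}_{k+1} = \point{r}_k + \vect{u}_k$ ordered so that all steps with $\vect{u}_k\in\mathcal{C}_2$ come first and all steps with $\vect{u}_k\in\mathcal{C}_1$ come last; by the remark above all $\point{r}_k\in\mathcal{S}$, and by Lemma~\ref{chamfer_lemma} $d_{\mathcal{C}}(\point{r}_k,\point{q})$ equals the partial sum of weights $\sum_{j<k}(\text{weight of }\vect{u}_j)$. Now run the induction through the algorithm: in scan~$1$ (mask $\mathcal{C}_1$), a step $\point{r}_k\to\point{r}_{k+1}$ with $\vect{u}_k\in\mathcal{C}_2$ uses the vector $-\vect{u}_k\in\mathcal{C}_1$, and since $\point{r}_{k+1} + (-\vect{u}_k) = \point{r}_k\in\mathcal{S}$ and $\mathcal{C}_1$ supports the scanning order, $\point{r}_k$ is processed before $\point{r}_{k+1}$, so $f(\point{r}_k)$ already holds its final scan-$1$ value when $\point{r}_{k+1}$ is visited. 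Induction on $k$ then gives $f(\point{r}_k) \leq d_{\mathcal{C}}(\point{r}_k,\point{q})$ after scan~$1$ along the $\mathcal{C}_2$-prefix; the identical argument in scan~$2$ (mask $\mathcal{C}_2$, using $-\vect{u}_k\in\mathcal{C}_2$ along the $\mathcal{C}_1$-suffix, and the fact that scan~$2$ only decreases $f$) carries the bound to $\point{r}_L$, yielding $f(\point{p})\leq d_{\mathcal{C}}(\point{p},\point{q}) = d_{\mathcal{C}}(\point{p},\overline{X})$. With the lower bound this gives $f(\point{p}) = DM_X(\point{p})$.

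I expect the main obstacle to be the interface between the two scans: one must reorder a fixed shortest path so that its $\mathcal{C}_2$-steps precede its $\mathcal{C}_1$-steps while keeping every intermediate vertex inside $\mathcal{S}$ — so that the scanning masks can actually ``see'' it — and inside the regime where Lemma~\ref{chamfer_lemma} forces the partial weight sums to equal the true distances. Pinning down that this is exactly what hypotheses (a)/(b) and Lemmas~\ref{border_lemma}/\ref{lem_wedge_equival} are needed for, rather than some naive single-scan invariant, is the conceptual core; the remaining bookkeeping (the support condition guaranteeing the right processing order within each scan, and the monotonicity of $f$ across scans) is routine.
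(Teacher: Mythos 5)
Your proof is correct and follows essentially the same route as the paper: decompose a minimal path from $\point{q}$ to $\point{p}$ into the steps readable by the two scanning masks, use Lemma~\ref{chamfer_lemma} for the intermediate distance values and Lemmas~\ref{border_lemma}/\ref{lem_wedge_equival} to keep all intermediate points in $\mathcal{S}$, and let the two scans propagate the two halves of the path in turn. Your explicit lower-bound invariant ($f \geq d_{\mathcal{C}}(\cdot,\overline{X})$ throughout the execution) and your careful handling of the orientation convention (a displacement $\vect{u}\in\mathcal{C}_2$ is read in scan~1 via the mask vector $-\vect{u}\in\mathcal{C}_1$) are welcome refinements that the paper's proof leaves implicit.
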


\begin{proof}
Let $\point{p}\in X$ and $\point{q}\in \overline{X}$ be such that
$d(\point{p},\point{q})=d(\point{p}, \overline{X})$ and
$\mathcal{I}=\left\{ i:\left(\vect{v}_i,w_i\right)\in
\mathcal{C}\right\}$. There is a set $\{\alpha_i\in\mathcal{R}\}$ such
that the point $\point{p}$ can be written $\displaystyle
\point{p}=\point{q}+\sum_{i\in \mathcal{I}} \alpha_i \vect{v}_i$ and
$\displaystyle d(\point{p}, \point{q})=\sum_{i\in \mathcal{I}}
\alpha_i w_i$. 

Let $f$ be the image after the first scan. By
Lemma~\ref{chamfer_lemma}, using that the masks support the scanning
order, the local distances are propagated in the following way. Also,
by Lemma~\ref{border_lemma} and Lemma~\ref{lem_wedge_equival}, all
points below are in $\mathcal{S}$. 
\begin{eqnarray}
f(\point{q}+\vect{v}_{k_1})=& w_{k_1} & \textrm{ for any $k_1$ such that}\nonumber\\
                            &         &\vect{v}_{k_1} \in
                            \mathcal{C}_1 \textrm{, and }
                            \alpha_{k_1}\geq 1\nonumber\\ 
f(\point{q}+\vect{v}_{k_1}+\vect{v}_{k_2})=& w_{k_1} + w_{k_2}
&\textrm{ for any $k_1$, $k_2$ such that } \nonumber\\
 & & \vect{v}_{k_1},\vect{v}_{k_2} \in \mathcal{C}_1 \nonumber\\ 
 & & \textrm{ and } \alpha_{k_1},\alpha_{k_2} \geq 1 \nonumber\\
 & & (\alpha_{k_1} \geq 2 \textrm{ if } k_1 = k_2),  \nonumber\\ 
&\vdots\nonumber\\
\displaystyle
f\left(\point{q}+\sum_{k_i:\vect{v}_{k_i}\in\mathcal{C}_1}
\alpha_{k_i} \vect{v}_{k_i} \right) = &\displaystyle
\sum_{k_i:w_{k_i}\in\mathcal{C}_1} \alpha_{k_i} w_{k_i}.\nonumber 
\end{eqnarray}
Let $g$ be the image after the second scan. Using the notation
$\displaystyle
\point{q}_N=\point{q}+\sum_{k_i:\vect{v}_{k_i}\in\mathcal{C}_1}
\alpha_{k_i} \vect{v}_{k_i}$, we get 
\begin{eqnarray}
g(\point{q}_N+\vect{v}_{l_1})=& g(\point{q}_N)+w_{l_1} &\textrm{ for
any $l_1$ such that }\nonumber\\
 & & \vect{v}_{l_1} \in \mathcal{C}_2 \textrm{, and }
\alpha_{l_1}\geq 1\nonumber\\ 
g(\point{q}_N+\vect{v}_{l_1}+\vect{v}_{l_2})=& g(\point{q}_N)+w_{l_1}
+ w_{l_2} &\textrm{ for any $l_1$, $l_2$ such}\nonumber\\
 & & \textrm{that }\vect{v}_{l_1},\vect{v}_{l_2} \in \mathcal{C}_2\nonumber\\
 & & \textrm{and } \alpha_{l_1},\alpha_{l_2} \geq 1\nonumber\\
 & & (\alpha_{l_1}\geq 2 \textrm{ if } l_1 = l_2)\nonumber\\ 
&\vdots\nonumber\\ 
\displaystyle
g\left(\point{q}_N+\sum_{l_i:\vect{v}_{l_i}\in\mathcal{C}_1}
\alpha_{l_i} \vect{v}_{l_i} \right) = &g(\point{q}_N)+\displaystyle
\sum_{l_i:w_{l_i}\in\mathcal{C}_2} \alpha_{l_i} w_{l_i}= & \sum
\alpha_i w_i.\nonumber 
\end{eqnarray}
This proves that all steps of a shortest path between $\point{p}$ and
$\point{q}$ have been processed. Indeed, after the second scan, the
value of the point $\point{p}$ is $g(\point{p})=d(\point{p},
\point{q})=\sum \alpha_i w_i$. 
\end{proof}

\begin{remark}
In Figure~\ref{chamf-alg-conditions}, the need for the conditions in
Theorem~\ref{chamf-alg-theorem} are shown in example images. 
\end{remark}

%%%%%%%%%%%%%%%%%%%%%%%%%%%%%%%%%%%%%%%%%%%%%%%%%%%%%%%%%%%%%%%%
%%                                                            %%
%%         BEST WEIGHTS FOR BCC AND FCC GRIDS                 %%
%%                                                            %%
%%%%%%%%%%%%%%%%%%%%%%%%%%%%%%%%%%%%%%%%%%%%%%%%%%%%%%%%%%%%%%%%
\section{Best weights for the BCC and FCC grids}\label{sec:bccfcc}

The BCC grid ($\mathbb{B}$) and the FCC grid ($\mathbb{F}$) are
defined as follows: 
\begin{displaymath}
\mathbb{B} = \{ \point{p} (x, y, z) \in \mathbb{Z}^{3} \textrm{ and }
x \equiv y \equiv z \pmod{2} \} \textrm{ and}
\end{displaymath}
\begin{displaymath}
\mathbb{F} =\{\point{p}(x,y,z) \in \mathbb{Z}^3 \textrm{ and } x+y+z
\equiv 0 \pmod {2}\}. 
\end{displaymath}
A voxel is defined as the Voronoi region of a point $\point{p}$ in a
grid. Figure~\ref{neighborhoods}~(a) shows a voxel of a BCC grid. A
BCC voxel has two kinds of face-neighbors (but no edge- or
vertex-neighbors), which results in the 8-neighborhood
(Figure~\ref{neighborhoods}~(b)) and the 14-neighborhood
(Figure~\ref{neighborhoods}~(c)). On the FCC grid, each voxel (see
Figure~\ref{neighborhoods}~(d)) has 12 face-neighbors and 6
vertex-neighbors. The resulting 12- and 18-neighborhoods are shown in
(Figure~\ref{neighborhoods}~(e)) and (Figure~\ref{neighborhoods}~(f)),
respectively. 

\begin{figure}[ht!]
\begin{center}
\subfigure[]{\includegraphics[height=0.25\linewidth]{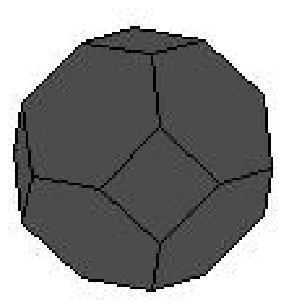}}
%\hfill
\subfigure[]{\includegraphics[height=0.25\linewidth]{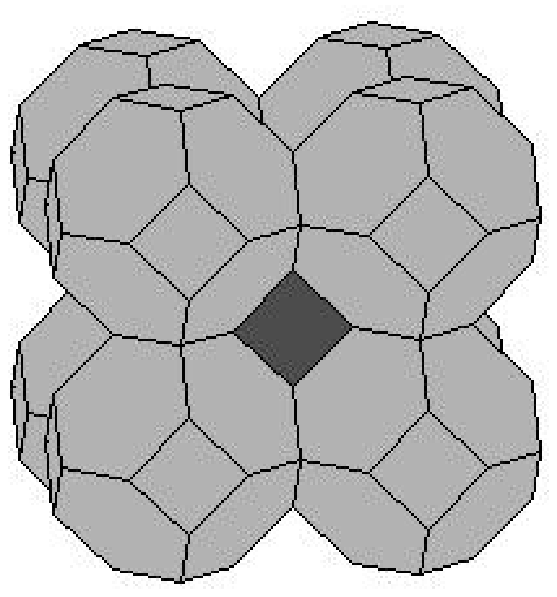}}
%\hfill
\subfigure[]{\includegraphics[height=0.25\linewidth]{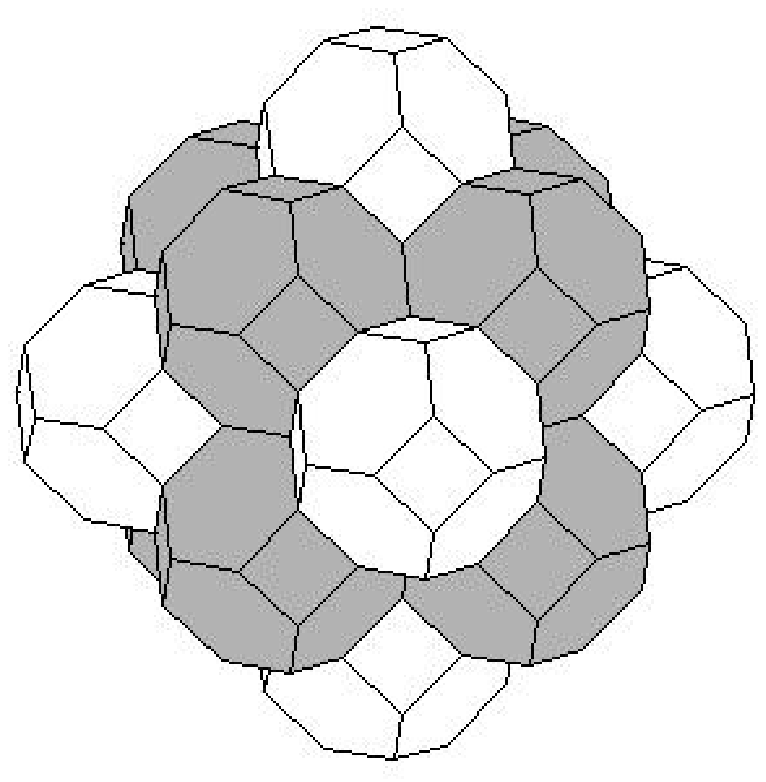}}
%\hfill
\subfigure[]{\includegraphics[height=0.25\linewidth]{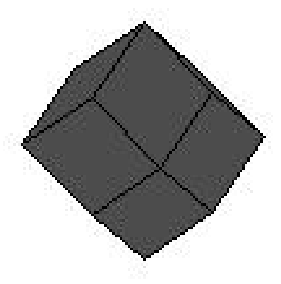}}
%\hfill
\subfigure[]{\includegraphics[height=0.25\linewidth]{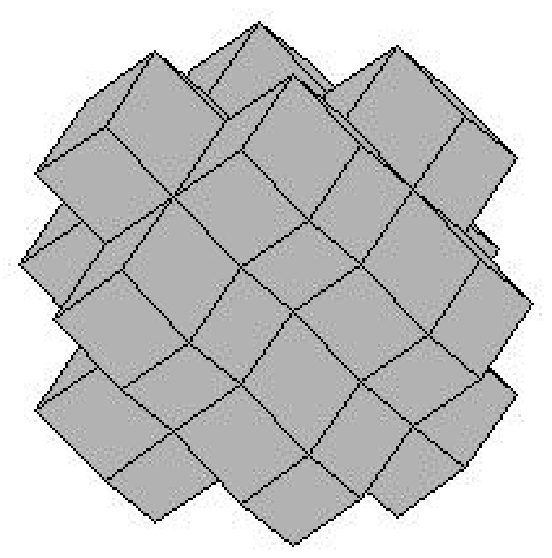}}
%\hfill
\subfigure[]{\includegraphics[height=0.25\linewidth]{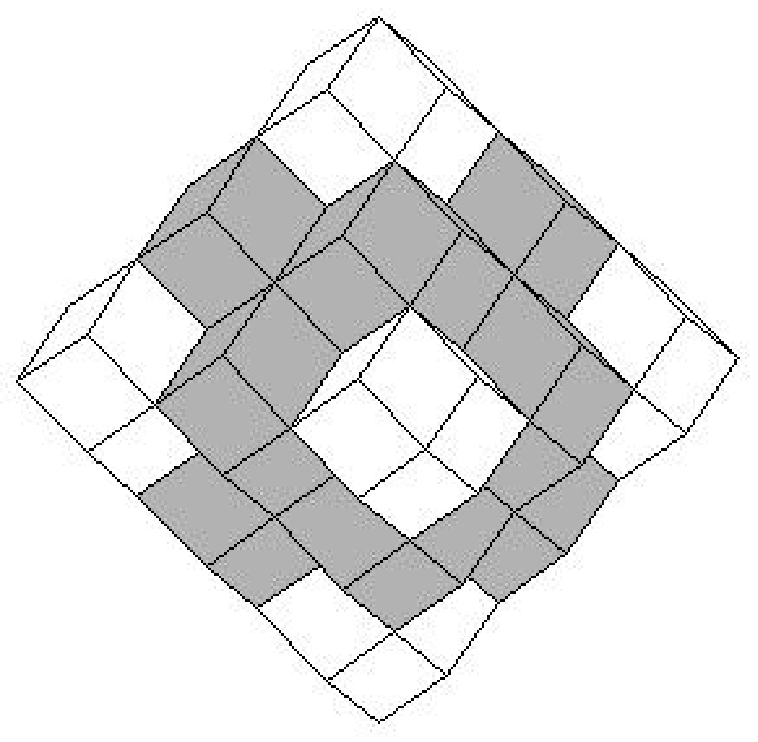}}
%\hfill
\caption{Different neighborhoods for the BCC grid ((a)-(c)) and the
FCC grid ((d)-(f)).\label{neighborhoods}} 
\end{center}
\end{figure}

Note that the high number of face neighbors 
(12 for the FCC grid and 14 for the BCC grid)
implies these grids are more compact than the cubic grid, which has
only $6$ face neighbors. 

\subsection{Results of the previous sections applied to BCC
grids}\label{sec:bcc_application}  
To apply the results of
Sections~\ref{sec:wdp}~and~\ref{sec:chamferAlgo} to the BCC grid, we
should check that: 
\begin{itemize}
  \item $\mathbb{B}$ is a sub-module of $\mathbb{R}^{3}$
  \item Masks decomposed in $\mathbb{B}$-basis wedges can be created
  \item A scanning order can be defined on the BCC lattice.
\end{itemize}

\begin{lemma}\label{lemma:bccsubmodule}
Let $ \mathbb{B}^{n} = \{ \point{p} : p^{1}, p^{2}, ..., p^{n} \in
\mathbb{Z} \textrm{ and } p^{1} \equiv p^{2} \equiv ... \equiv p^{n}
\pmod{2} \}$ , then 
  $\left( \mathbb{B}^{n}, \mathbb{Z}, +, \times \right)$ is a
  sub-module of $\mathbb{R}^{n}$.
\end{lemma}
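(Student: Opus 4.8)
The plan is to exhibit $\mathbb{B}^{n}$ as a subset of the module $(\mathbb{Z}^{n},\mathbb{Z},+,\times)$ — itself a sub-module of $\mathbb{R}^{n}$ — that is stable under the group law, under negation, and under the external multiplication by integers. Once this stability is established, the four module identities of Definition~2.1 (identity, associativity, scalar distributivity, vectorial distributivity) hold on $\mathbb{B}^{n}$ automatically, since on $\mathbb{B}^{n}$ the operations $+$ and $\cdot$ are simply the restrictions of those on $\mathbb{Z}^{n}$, where the identities are already known.

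First I would restate the defining condition in a shape that makes the stability checks transparent: a point $\point{p}=(p^{1},\dots,p^{n})$ with $p^{1},\dots,p^{n}\in\mathbb{Z}$ lies in $\mathbb{B}^{n}$ if and only if $p^{i}\equiv p^{j}\pmod 2$ for all $i,j$, equivalently $p^{i}-p^{1}\in 2\mathbb{Z}$ for every $i$. With this in hand the checks are routine. The origin $\point{O}=(0,\dots,0)$ satisfies $0\equiv\cdots\equiv 0\pmod 2$, so $\mathbb{B}^{n}$ is nonempty and contains the neutral element. If $\point{p},\point{q}\in\mathbb{B}^{n}$, then $p^{i}+q^{i}\in\mathbb{Z}$, and adding the congruences $p^{i}\equiv p^{j}$ and $q^{i}\equiv q^{j}$ modulo $2$ gives $p^{i}+q^{i}\equiv p^{j}+q^{j}\pmod 2$, so $\point{p}+\point{q}\in\mathbb{B}^{n}$; likewise $-p^{i}\equiv -p^{j}\pmod 2$ shows $-\point{p}\in\mathbb{B}^{n}$, hence $(\mathbb{B}^{n},+)$ is a subgroup of $(\mathbb{Z}^{n},+)$, in particular an Abelian group. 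Finally, for $\alpha\in\mathbb{Z}$ and $\point{p}\in\mathbb{B}^{n}$ we have $\alpha p^{i}\in\mathbb{Z}$, and multiplying $p^{i}\equiv p^{j}\pmod 2$ by $\alpha$ yields $\alpha p^{i}\equiv\alpha p^{j}\pmod 2$, so $\alpha\cdot\point{p}\in\mathbb{B}^{n}$; this is closure under the external law.

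Combining these observations, $(\mathbb{B}^{n},\mathbb{Z},+,\cdot)$ inherits every module axiom from $(\mathbb{Z}^{n},\mathbb{Z},+,\times)$, so it is a $\mathbb{Z}$-module, and since $\mathbb{B}^{n}\subset\mathbb{Z}^{n}\subset\mathbb{R}^{n}$ with the restricted operations, it is a sub-module of $\mathbb{R}^{n}$. There is no genuine obstacle here; the only point needing a little care is the parity bookkeeping, which is precisely why isolating the reformulation of the membership condition in the first step is worthwhile. If one also wishes to record that $\mathbb{B}^{n}$ has full rank $n$, it suffices to note that $(1,1,\dots,1)$ together with $2\,\point{e}_{2},\dots,2\,\point{e}_{n}$ (where $\point{e}_{i}$ is the $i$-th canonical basis vector) are $n$ independent vectors of $\mathbb{B}^{n}$, since the corresponding matrix is lower triangular with nonzero diagonal.
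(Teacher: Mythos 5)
Your proof is correct and follows essentially the same route as the paper's: both verify that $\mathbb{B}^{n}$ is closed under addition, negation, and the external multiplication by integers (the paper writes $p^{k}=p^{1}+2\alpha_{k}$ where you use the congruence $p^{i}\equiv p^{j}\pmod 2$, which is the same bookkeeping), and conclude that the module axioms are inherited from $\mathbb{Z}^{n}$. Your additional remarks on non-emptiness and on exhibiting $n$ independent vectors are harmless extras not present in the paper.
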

\begin{proof}
  Let us prove first that $\mathbb{B}^{n}$ is an Abelian group.
  Given $\point{p}$, $\point{q} \in \mathbb{B}^{n}$, there exist 
  $\alpha_{2}, ..., \alpha_{n}$, and $\beta_{2}, ..., \beta_{n} \in
  \mathbb{Z}$ 
  such that $\forall k \in [2..n]\ p^{k} = p^{1} + 2 \alpha_{k}$ and
  $q^{k} = q^{1} + 2 \beta_{k}$. The point $ \point{r} = \point{p} +
  \point{q} = \point{q} + \point{p}$ also belongs to
  $\mathbb{B}^{n}$. Indeed, 
  $\forall k \in [2..n]$, $r^{k} = p^{k} + q^{k} =
  p^{1} + 2 \alpha_{k} + q^{1} + 2 \beta_{k} = r^{1} + 2 \gamma_{k}$
  with $\gamma_{k} = \alpha_{k} + \beta_{k} \in \mathbb{Z}$.   
  Moreover, $-\point{p} \in \mathbb{B}^{n}$ as $-p^{k} = -p^{1} + 2
  (-\alpha_{k})$ with $-\alpha_{k} \in \mathbb{Z}$. 
  We now have to prove that $\forall \lambda \in \mathbb{Z}$, $\forall
  \point{p} \in \mathbb{B}^{n}$, $\lambda \cdot \point{p} \in
  \mathbb{B}^{n}$. Indeed, $\forall k \in [1..n], \lambda p^{k} \in
  \mathbb{Z}$ and $\forall k \in [2..n] \lambda p^{k} = \lambda p^{1}
 + 2 (\lambda \alpha_{k}) \textrm{ with } (\lambda \alpha_{k}) \in
   \mathbb{Z}$. 
\end{proof}
This result is true for all $n \in \mathbb{N}\setminus \{0\}$ and
obviously for $n = 3$.

\subsubsection{Chamfer mask geometry}\label{sec:bcc_mask_geometry}
\begin{lemma}
\label{lemma:bccbasis}
  A family $\mathcal{F} = \left( \vect{v}_{1}, \vect{v}_{2},
    \vect{v}_{3} \right)$ is a basis of $\mathbb{B}$ iff
  \begin{displaymath}
    \Delta_{\mathcal{F}}^{0} = \det(\vect{v}_{1}, \vect{v}_{2},
    \vect{v}_{3}) = \pm 4.
  \end{displaymath}
\end{lemma}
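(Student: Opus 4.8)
The plan is to reduce both implications to Lemma~\ref{lemma:value}, which says that a family $\mathcal{F}$ of vectors of a sub-module of $\mathbb{R}^{n}$ is a basis iff all the Cramer quotients $\Delta_{\mathcal{F}}^{k}(\vect{x})/\Delta_{\mathcal{F}}^{0}$ lie in $\mathcal{R}$ (here $\mathcal{R}=\mathbb{Z}$). Two ingredients are needed. \textbf{(i)} I would exhibit an explicit basis of $\mathbb{B}$, namely $\mathcal{B}_{0}=\left((2,0,0),(0,2,0),(1,1,1)\right)$: each of these vectors is in $\mathbb{B}$, and for any $\point{p}(x,y,z)\in\mathbb{B}$ the system $\point{p}=a(2,0,0)+b(0,2,0)+c(1,1,1)$ has the unique solution $c=z$, $b=(y-z)/2$, $a=(x-z)/2$, which is integral exactly because $x\equiv y\equiv z\pmod 2$; moreover $\det(\mathcal{B}_{0})=4$. \textbf{(ii)} I would prove the key arithmetic fact: \emph{for any three vectors $\vect{a},\vect{b},\vect{c}\in\mathbb{B}$ one has $\det(\vect{a},\vect{b},\vect{c})\equiv 0\pmod 4$}. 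For this, use the decomposition $\mathbb{B}=\mathbb{Z}\cdot\vect{u}_{0}+2\mathbb{Z}^{3}$ with $\vect{u}_{0}=(1,1,1)$ (indeed $\vect{a}-a^{1}\vect{u}_{0}=(0,a^{2}-a^{1},a^{3}-a^{1})\in 2\mathbb{Z}^{3}$), write $\vect{a}=a^{1}\vect{u}_{0}+2\vect{a}'$ and similarly for $\vect{b},\vect{c}$, and expand $\det(\vect{a},\vect{b},\vect{c})$ by multilinearity into $2^{3}$ terms: any term using $\vect{u}_{0}$ in two or more columns vanishes; the single term with no $\vect{u}_{0}$ equals $8\det(\vect{a}',\vect{b}',\vect{c}')$; and each of the three terms with $\vect{u}_{0}$ in exactly one column carries a factor $2\cdot 2=4$. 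Hence $4\mid\det(\vect{a},\vect{b},\vect{c})$.

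For the implication $\Delta_{\mathcal{F}}^{0}=\pm 4\Rightarrow\mathcal{F}$ is a basis: $\Delta_{\mathcal{F}}^{0}\neq 0$ makes $\vect{v}_{1},\vect{v}_{2},\vect{v}_{3}$ independent, and for every $\vect{x}\in\mathbb{B}$ and every $k$ the determinant $\Delta_{\mathcal{F}}^{k}(\vect{x})$ has all three of its columns in $\mathbb{B}$, so by (ii) $4\mid\Delta_{\mathcal{F}}^{k}(\vect{x})$, whence $\Delta_{\mathcal{F}}^{k}(\vect{x})/\Delta_{\mathcal{F}}^{0}\in\mathbb{Z}$. Lemma~\ref{lemma:value} then yields that $\mathcal{F}$ is a basis of $\mathbb{B}$.

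For the converse, suppose $\mathcal{F}$ is a basis. Since $\mathcal{B}_{0}$ is a basis and each $\vect{v}_{i}\in\mathbb{B}$, the coordinates of the $\vect{v}_{i}$ with respect to $\mathcal{B}_{0}$ are integers, i.e. $M_{\mathcal{F}}=M_{\mathcal{B}_{0}}B$ for an integer matrix $B$ (here $M_{\mathcal{F}}$, $M_{\mathcal{B}_{0}}$ denote the matrices whose columns are the vectors of $\mathcal{F}$, resp. $\mathcal{B}_{0}$). Symmetrically, since $\mathcal{F}$ is a basis and each $\vect{w}_{j}\in\mathbb{B}$, Lemma~\ref{lemma:value} gives $M_{\mathcal{B}_{0}}=M_{\mathcal{F}}A$ with $A$ an integer matrix. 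Then $M_{\mathcal{F}}=M_{\mathcal{F}}AB$ and $M_{\mathcal{F}}$ is invertible over $\mathbb{R}$, so $AB=I$; hence $\det A,\det B\in\{\pm 1\}$ and $|\Delta_{\mathcal{F}}^{0}|=|\det M_{\mathcal{F}}|=|\det M_{\mathcal{B}_{0}}|=4$. As $\Delta_{\mathcal{F}}^{0}\in\mathbb{Z}$, this gives $\Delta_{\mathcal{F}}^{0}=\pm 4$. (Equivalently, one may simply note that $\mathbb{Z}\vect{v}_{1}+\mathbb{Z}\vect{v}_{2}+\mathbb{Z}\vect{v}_{3}\subseteq\mathbb{B}$ is a full-rank sublattice and that two bases of a lattice differ by an element of $GL_{3}(\mathbb{Z})$, so $|\Delta_{\mathcal{F}}^{0}|$ equals the covolume of $\mathbb{B}$, which is $4$.)

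The main obstacle is step (ii), the divisibility $4\mid\det(\vect{a},\vect{b},\vect{c})$ for triples in $\mathbb{B}$; once the decomposition $\mathbb{B}=\mathbb{Z}(1,1,1)+2\mathbb{Z}^{3}$ is spotted it is a short multilinearity computation, the only care needed being to track which of the eight expansion terms vanish and which supply the factor $4$ (versus $8$). Everything else is a routine invocation of Lemma~\ref{lemma:value}.
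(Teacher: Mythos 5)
Your proof is correct, and its skeleton is the same as the paper's: establish that $\det(\vect{a},\vect{b},\vect{c})\equiv 0\pmod 4$ for any triple of vectors of $\mathbb{B}$, then feed this into Lemma~\ref{lemma:value}. The execution differs in two places, though. For the divisibility, the paper subtracts the first row from the second and third (leaving rows $2\alpha_i$ and $2\beta_i$ and extracting the factor $4$ directly), whereas you use the decomposition $\mathbb{B}=\mathbb{Z}\cdot(1,1,1)+2\mathbb{Z}^{3}$ and expand by column multilinearity; these are equivalent computations, and your sufficiency direction ($\Delta_{\mathcal{F}}^{0}=\pm 4\Rightarrow$ basis) then matches the paper's. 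The more substantial divergence is in the necessity direction: the paper asserts that integrality of $\Delta_{\mathcal{F}}^{k}(\vect{x})/\Delta_{\mathcal{F}}^{0}$ for all $\vect{x}\in\mathbb{B}$ forces the reduced (divided-by-$4$) determinant to be $\pm 1$, implicitly appealing to the classical unimodularity criterion for $\mathbb{Z}^{3}$ through the coordinates $(x_i,\alpha_i,\beta_i)$, while you argue via the explicit basis $\mathcal{B}_{0}=\left((2,0,0),(0,2,0),(1,1,1)\right)$ and the fact that two bases of a lattice differ by a matrix in $GL_{3}(\mathbb{Z})$, so that $|\Delta_{\mathcal{F}}^{0}|$ must equal the covolume $4$. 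Your route is more self-contained and makes the ``only if'' half fully rigorous where the paper is terse. One cosmetic slip: in the converse you refer to ``each $\vect{w}_{j}\in\mathbb{B}$'' without having introduced $\vect{w}_{j}$; you mean the columns of $\mathcal{B}_{0}$.
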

\begin{proof}
First, we prove that any determinant of 3 vectors $\vect{v}_{1},
\vect{v}_{2}, \vect{v}_{3}$ of $\mathbb{B}$ is
a multiple of $4$. Indeed, given a vector $\vect{v}_{i} (x_{i}, y_{i},
z_{i})$, $\vect{v}_{i} \in \mathbb{B}$ iff there exists $\alpha_{i},
\beta_{i} \in \mathbb{Z}$ such that $y_{i} = x_i + 2 \alpha_{i}$ and $z_{i}
= x_i + 2 \beta_{i}$. Then the determinant of $\vect{v}_{1}, \vect{v}_{2}$
and $\vect{v}_{3}$ is
\begin{displaymath}
\begin{array}{l c c c}
  \det(\vect{v}_{1},\vect{v}_{2}, \vect{v}_{3}) & = & 
\left| \begin{array}{c c c} 
x_{1} & x_{2} & x_{3} \\
y_{1} & y_{2} & y_{3} \\
z_{1} & z_{2} & z_{3} \\ \end{array} \right| 
& =  \left| \begin{array}{c c c} 
x_{1} & x_{2} & x_{3} \\
x_{1} + 2 \alpha_{1} & x_{2} + 2 \alpha_{2} & x_{3} + 2 \alpha_{3} \\
x_{1} + 2  \beta_{1} & x_{2} + 2  \beta_{2} & x_{3} + 2  \beta_{3} \\ 
\end{array} \right| \\
 & = 4 & 
\underbrace{
 \left| \begin{array}{c c c} 
x_{1} & x_{2} & x_{3} \\
\alpha_{1} & \alpha_{2} & \alpha_{3} \\
 \beta_{1} &  \beta_{2} &  \beta_{3} \\ \end{array} \right|} & \\

 & & \in \mathbb{Z} & \\ 
\end{array}. 
\end{displaymath}

With the notations of paragraph \ref{sec:wdp} we obtain for any vector
$\vect{v}(x, y, z) = (x, x + 2\alpha, x + 2\beta)$ of $\mathbb{B}$: 
\begin{displaymath}
\frac{1}{\Delta_{\mathcal{F}}^{0}} \times
\Delta_{\mathcal{F}}^{1}(\vect{x}) = 
\frac{1}{4 \left| \begin{array}{c c c} 
x_{1} & x_{2} & x_{3} \\
\alpha_{1} & \alpha_{2} & \alpha_{3} \\
 \beta_{1} &  \beta_{2} &  \beta_{3} \\ \end{array} \right|} 
\times  
4 \left| \begin{array}{c c c} 
x & x_{2} & x_{3} \\
\alpha & \alpha_{2} & \alpha_{3} \\
 \beta &  \beta_{2} &  \beta_{3} \\ \end{array} \right|
\end{displaymath}

and $\frac{1}{\Delta_{\mathcal{F}}^{0}} \times
\Delta_{\mathcal{F}}^{1}(\vect{x}) \in \mathbb{Z}$ iff 
$\left| \begin{array}{c c c} 
x_{1} & x_{2} & x_{3} \\
\alpha_{1} & \alpha_{2} & \alpha_{3} \\
 \beta_{1} &  \beta_{2} &  \beta_{3} \\ \end{array} \right| = \pm 1$
 which means $\Delta_{\mathcal{F}}^{0} = \pm 4$.
The same result applies for $\frac{1}{\Delta_{\mathcal{F}}^{0}} \times
\Delta_{\mathcal{F}}^{2}(\vect{x})$ and
$\frac{1}{\Delta_{\mathcal{F}}^{0}} \times
\Delta_{\mathcal{F}}^{3}(\vect{x})$. 
By Lemma \ref{lemma:value} we obtain $\mathcal{F}$ is a basis of
$\mathbb{B}$ iff $\Delta_{\mathcal{F}}^{0} = \pm 4$.
\end{proof}

To compute a weighted distance map on an image stored on a BCC grid,
one can consider a mask built using 8-neighbors: 
\begin{displaymath}
\mathcal{C}_{8} = \left\{(\vect{v}_{i} (\pm 1, \pm 1, \pm 1), w_{1}),  \right\}
\end{displaymath}
This mask contains only one type of weight as the Euclidean
distance between each face-sharing neighbor to the central voxel is
the same.

This mask can be decomposed into $\mathbb{B}$-basis sectors. Indeed,
let us consider the wedge $\mathcal{F}_{\mathbb{B}} =
\llangle \vect{v}_{1} (1, 1, 1), \vect{v}_{2} (-1, 1, 1), \vect{v}_{3}
(1, -1, 1) \rrangle$, and its symmetric wedges.
\begin{displaymath}
\Delta_{\mathcal{F}_{\mathbb{B}}}^{0} = 
\det (\vect{v}_{1}, \vect{v}_{2}, \vect{v}_{3}) = 
\left| \begin{array}{c c c}  
1 & -1 &  1 \\ 
1 &  1 & -1 \\ 
1 &  1 &  1 \\ 
\end{array} \right|
 = 4 
\end{displaymath}
and from Lemma \ref{lemma:bccbasis}, $\mathcal{F}_{\mathbb{B}}$ is a
$\mathbb{B}$-basis sector.

If we want larger masks, we can split the
wedge $\mathcal{F}_{\mathbb{B}}$ and its symmetric wedges according to the Farey triangulation technique \cite{fouard:ivc:2005,remy:iwcia:2000}. 
To do so, given a
$\mathbb{B}$-basis wedge $\mathcal{F} = \llangle \vect{v}_{1},
\vect{v}_{2}, \vect{v}_{3} \rrangle$, choose an edge to be split,
say the edge between $\vect{v}_{1}$ and $\vect{v}_{2}$. 
We create $\vect{v}(x, y, z)$ such that $\vect{v} = \vect{v}_{1}
\oplus  \vect{v}_{2}$ defined by $x = x_{1} + x_{2}$, $y = y_{1} +
y_{2}$ and $z = z_{1} + z_{2}$. The two obtained wedges
$\mathcal{F}_{1} = \llangle \vect{v},\vect{v}_{2}, \vect{v}_{3}
\rrangle$ and $\mathcal{F}_{2} = \llangle \vect{v}_{1},\vect{v},
\vect{v}_{3} \rrangle$ are $\mathbb{B}$-basis sectors. Indeed, 
\begin{displaymath}
\begin{array}{l c l}
\Delta_{\mathcal{F}_{1}}^{0} & = & 
\det (\vect{v}, \vect{v}_{2}, \vect{v}_{3}) = 
\left| \begin{array}{c c c}  
x_{1} + x_{2} & x_{2} & x_{3} \\ 
y_{1} + y_{2} & y_{2} & y_{3} \\ 
z_{1} + z_{2} & z_{2} & z_{3} \\ 
\end{array} \right| = 
\left| \begin{array}{c c c}  
x_{1} & x_{2} & x_{3} \\ 
y_{1} & y_{2} & y_{3} \\ 
z_{1} & z_{2} & z_{3} \\ 
\end{array} \right| + 
\left| \begin{array}{c c c}  
x_{2} & x_{2} & x_{3} \\ 
y_{2} & y_{2} & y_{3} \\ 
z_{2} & z_{2} & z_{3} \\ 
\end{array} \right| \\
 & = & \Delta_{\mathcal{F}}^{0} = \pm 4.\\
\end{array}
\end{displaymath}
In the same way, $\Delta_{\mathcal{F}_{2}}^{0} = \pm 4$. 

If we apply the first steps of this method, we can retrieve the
chamfer mask using the 14-neighborhood proposed by
\cite{strand:cviu:2005}.
But we can also build larger masks by splitting the new wedges.

\subsubsection{Chamfer mask weights}
To compute optimal integer chamfer weights for BCC grid the  depth-first
search method of \cite{fouard:ivc:2005,malandain:RR:2005}\footnote{The
corresponding code is available at
www.cb.uu.se/$\sim$tc18/code-data-set} can be used. This method
computes weights sets leading to weighted distance maps with small
relative error with respect to the corresponding Euclidean
map. Moreover, it keeps a weights set only if the normalized polytope
of the corresponding mask is convex, which allows application of the
results from Section~\ref{sec:wdp}. 

The following tables give sets of integer chamfer mask weights for
BCC grids with the scale factor allowing comparison of the final weighted
distance map with an Euclidean one composed with real numbers. They
also give the maximum relative error that can occur between the
weighted distance map and the Euclidean one.

For the two first cases, we also give the optimal real weights
computed using the equations of Section~\ref{sec:coeffsopt}.

Balls obtained by using different number of weights (shown in bold
below) are shown in Figure~\ref{balls}. 

\begin{enumerate}
\item One weight\\
For the mask corresponding to the 8-neighborhood, using only one
weight, the equations of section \ref{sec:coeffsopt} give the
following result:\\
\begin{tabular}{| l l |}
\hline
real optimal weight    & 1.268 \\
real optimal error (\%)     & 26.79 \\
\hline
integer optimal weight & 1     \\
real map scale factor       & 1.268 \\ 
\hline
\end{tabular}\\
These results are consistent with \cite{strand:cviu:2005}.\\

\item Two weights\\
\begin{scriptsize}
\begin{tabular}{| c c | c | c c c c c c c c | }
\hline
vector  & weight  & real  & \multicolumn{8}{c |}{weights} \\
\hline
(1 1 1) & $w_{1}$ & 1.547 & 1 & 2 & 3 & 4 & 5 & 6 & \bf{13} & 19 \\
(2 0 0) & $w_{2}$ & 1.786 & 2 & 3 & 4 & 5 & 6 & 7 & \bf{15} & 22 \\
\hline
\multicolumn{2}{| c |}{Scale factor} & 1 & 1.268 & 0.731 & 0.504 & 0.383 &
0.308 & 0.256 & \bf{0.119} & 0.081 \\
\hline
\multicolumn{2}{| c |}{Error (\%)} & 10.69 & 26.79 & 15.59 & 12.70 & 11.60
& 11.07 & 10.78 & \bf{10.72} & 10.71 \\
\hline
\end{tabular}\\
\end{scriptsize}
\ \\

\item Three weights\\
\begin{scriptsize}
\begin{tabular}{| c c | c c c c c c c c c | }
\hline 
vector  & weight  & \multicolumn{9}{c |}{weights} \\ 
\hline
(1 1 1) & $w_{1}$ & 1 & 2 & 4 & 5 & 6  & \bf{13} & 19 & 26 & 33\\
(2 0 0) & $w_{2}$ & 2 & 2 & 5 & 6 & 7  & \bf{15} & 22 & 30 & 38\\
(2 2 0) & $w_{3}$ & 2 & 3 & 7 & 8 & 10 & \bf{22} & 31 & 43 & 54\\
\hline
\multicolumn{2}{| c |}{Scale factor} & 
1.268 & 0.899 & 0.396 & 0.325 & 0.270 & \bf{0.125} & 0.0857 & 0.0626 &
      0.0494\\
\hline
\multicolumn{2}{| c |}{Error (\%)} & 
26.79 & 10.10 & 8.50  & 7.94  & 6.39  & \bf{6.34} & 6.12   & 6.12   &
      6.11\\
\hline
\multicolumn{11}{| c |}{Optimal error (\%): 6.02}\\
\hline
\end{tabular}\\
\end{scriptsize}

Note: for the previous masks, we displayed real optimal weights
to be able to compare our results with previous papers. 
However, as real optimal weights are computed sector by sector,
they may change from one sector to another for the same mask
vector. We do not display them for several sectors as they may not be
consistent. \\

\item Four weights\\
\begin{scriptsize}
\begin{tabular}{| c c | c c c c c c c c | }
  \hline
  vector  & weight  & \multicolumn{8}{c |}{weights} \\
  \hline
  (1 1 1) & $w_{1}$ & 1 & 2 & 4 & 5  & 6  & 9  & \bf{15} & 26\\
  (2 0 0) & $w_{2}$ & 2 & 2 & 4  & 6  & 7  & 10  & \bf{17} & 29\\
  (2 2 0) & $w_{3}$ & 2 & 3 & 6  & 8  & 10  & 14 & \bf{24} & 41\\
  (3 1 1) & $w_{4}$ & 3 & 4 & 7  & 10  & 12 & 17 & \bf{29} & 50\\
  \hline
  \multicolumn{2}{| c |}{Scale factor} & 
  1.268 & 0.899 & 0.460 & 0.334 & 0.275 & 0.194 & \bf{0.113} & 0.0662\\
  \hline
  \multicolumn{2}{| c |}{Error (\%)} & 
  26.79 & 10.10 & 7.94  & 5.57  & 4.73  & 4.21  & \bf{4.00} & 3.99\\
  \hline
  \multicolumn{10}{| c |}{Optimal error (\%): 3.96}\\
  \hline
\end{tabular}\\
\end{scriptsize}

\end{enumerate}

\subsection{Results of previous sections applied to FCC grids}
To apply the results of
Sections~\ref{sec:wdp}~and~\ref{sec:chamferAlgo} to the FCC grid, we
should again check that: 
\begin{itemize}
  \item $\mathbb{F}$ is a sub-module of $\mathbb{R}^{3}$
  \item Masks decomposed in $\mathbb{F}$-basis wedges can be created
  \item A scanning order can be defined on the FCC lattice.
\end{itemize}
The calculations are similar to those in Section~\ref{sec:bcc_application}.
\begin{lemma}\label{lemma:fccsubmodule}
Let $ \mathbb{F}^{n} = \{ \point{p} : p^{1}, p^{2}, ..., p^{n} \in
\mathbb{Z} \textrm{ and } p^{1}+p^{2}+p^{3} \equiv 0 \pmod {2} \}$ , then 
  $\left( \mathbb{F}^{n}, \mathbb{Z}, +, \cdot \right)$ is a
  sub-module of $\mathbb{R}^{n}$.
\end{lemma}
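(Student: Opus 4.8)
The plan is to mirror the proof of Lemma~\ref{lemma:bccsubmodule}: first verify that $\mathbb{F}^{n}$ is an Abelian subgroup of $(\mathbb{Z}^{n},+)$, then that it is stable under multiplication by integers, and finally observe that the four module axioms (identity, associativity, scalar distributivity, vectorial distributivity) are inherited from the ambient $\mathbb{R}$-vector space $\mathbb{R}^{n}$. The key observation driving everything is that the defining constraint $p^{1}+p^{2}+p^{3}\equiv 0 \pmod 2$ is \emph{linear} in the coordinates, hence automatically preserved by sums, negation and integer scaling.

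First I would check that $(\mathbb{F}^{n},+)$ is an Abelian group. Given $\point{p},\point{q}\in\mathbb{F}^{n}$, each coordinate $p^{i}+q^{i}$ is an integer and $(p^{1}+q^{1})+(p^{2}+q^{2})+(p^{3}+q^{3})=(p^{1}+p^{2}+p^{3})+(q^{1}+q^{2}+q^{3})\equiv 0+0\equiv 0\pmod 2$, so $\point{p}+\point{q}\in\mathbb{F}^{n}$; commutativity is inherited from $\mathbb{Z}^{n}$. The neutral element $\point{0}$ lies in $\mathbb{F}^{n}$ since $0+0+0\equiv 0\pmod 2$, and for any $\point{p}\in\mathbb{F}^{n}$ the opposite $-\point{p}$ lies in $\mathbb{F}^{n}$ because $(-p^{1})+(-p^{2})+(-p^{3})=-(p^{1}+p^{2}+p^{3})\equiv 0\pmod 2$ and each $-p^{i}\in\mathbb{Z}$.

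Next I would verify the external law: for $\lambda\in\mathbb{Z}$ and $\point{p}\in\mathbb{F}^{n}$, each $\lambda p^{i}$ is an integer and $\lambda p^{1}+\lambda p^{2}+\lambda p^{3}=\lambda(p^{1}+p^{2}+p^{3})\equiv 0\pmod 2$, so $\lambda\cdot\point{p}\in\mathbb{F}^{n}$; thus $\cdot$ restricts to a well-defined map $\mathbb{Z}\times\mathbb{F}^{n}\to\mathbb{F}^{n}$. Finally, since $\mathbb{F}^{n}\subseteq\mathbb{R}^{n}$ and $\mathbb{Z}\subseteq\mathbb{R}$, the identity, associativity, scalar-distributivity and vectorial-distributivity identities hold in $\mathbb{F}^{n}$ simply because they hold in the vector space $(\mathbb{R}^{n},\mathbb{R},+,\times)$; restricting both the points and the scalars to the smaller sets cannot break an identity. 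Hence $(\mathbb{F}^{n},\mathbb{Z},+,\cdot)$ is a sub-module of $\mathbb{R}^{n}$.

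There is no genuine obstacle here — the single line that requires attention is precisely the preservation of the parity condition under $+$, negation and integer scaling, which follows from its linearity; once that is noted the rest is a routine restriction argument, exactly parallel to the case of $\mathbb{B}^{n}$. One should of course tacitly assume $n\ge 3$ so that the three named coordinates $p^{1},p^{2},p^{3}$ exist, and the FCC application is the case $n=3$.
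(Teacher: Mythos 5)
Your proof is correct and follows essentially the same route as the paper, which simply states that the lemma is proved like Lemma~\ref{lemma:bccsubmodule} (closure of the lattice under addition, negation and integer scaling, with the module axioms inherited from $\mathbb{R}^{n}$). Arguing directly modulo $2$ with the linearity of the constraint, rather than via the explicit parametrization $p^{n}=p^{1}+\dots+p^{n-1}+2\alpha$ suggested in the paper, is only a cosmetic difference.
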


\begin{lemma}
\label{lemma:fccbasis}
  A family $\mathcal{F} = \left( \vect{v}_{1}, \vect{v}_{2},
    \vect{v}_{3} \right)$ is a basis of $\mathbb{F}$ iff
  \begin{displaymath}
    \Delta_{\mathcal{F}}^{0} = \det(\vect{v}_{1}, \vect{v}_{2},
    \vect{v}_{3}) = \pm 2.
  \end{displaymath}
\end{lemma}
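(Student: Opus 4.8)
The plan is to follow the template of the proof of Lemma~\ref{lemma:bccbasis}, replacing the row operation that cleared the BCC congruence by the one appropriate to $\mathbb{F}$, and then to invoke Lemma~\ref{lemma:value}. First I would record the parametrisation of $\mathbb{F}$: a vector $\vect{v}_i(x_i,y_i,z_i)$ lies in $\mathbb{F}$ iff there is $\gamma_i\in\mathbb{Z}$ with $x_i+y_i+z_i=2\gamma_i$, i.e. $z_i=2\gamma_i-x_i-y_i$. Performing the row operation $R_3\leftarrow R_3+R_1+R_2$ on the matrix whose columns are $\vect{v}_1,\vect{v}_2,\vect{v}_3$ turns the third row into $(2\gamma_1,2\gamma_2,2\gamma_3)$, so
\begin{displaymath}
\Delta_{\mathcal{F}}^{0}=\det(\vect{v}_1,\vect{v}_2,\vect{v}_3)
=2\left|\begin{array}{c c c} x_1 & x_2 & x_3\\ y_1 & y_2 & y_3\\ \gamma_1 & \gamma_2 & \gamma_3\end{array}\right| ,
\end{displaymath}
which already shows that every such determinant is even. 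The same operation applied with the $k$-th column replaced by an arbitrary $\vect{x}(x,y,z)\in\mathbb{F}$, $z=2\gamma-x-y$, gives $\Delta_{\mathcal{F}}^{k}(\vect{x})=2D_k(\vect{x})$ with $D_k(\vect{x})\in\mathbb{Z}$, hence
\begin{displaymath}
\frac{1}{\Delta_{\mathcal{F}}^{0}}\times\Delta_{\mathcal{F}}^{k}(\vect{x})=\frac{D_k(\vect{x})}{D},\qquad D=\frac{1}{2}\,\Delta_{\mathcal{F}}^{0}.
\end{displaymath}

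Next I would observe that the substitution $\vect{v}_i\mapsto\vect{v}_i'(x_i,y_i,\gamma_i)$ (and $\vect{x}\mapsto(x,y,\gamma)$ on the right-hand sides above) is a bijection from $\mathbb{F}$ onto $\mathbb{Z}^{3}$ respecting $+$ and $\cdot$, so it carries bases of $\mathbb{F}$ to bases of $\mathbb{Z}^{3}$ and back, and writing $\mathcal{F}'=(\vect{v}_1',\vect{v}_2',\vect{v}_3')$ one has $D=\det(\vect{v}_1',\vect{v}_2',\vect{v}_3')$ and $D_k(\vect{x})=\Delta_{\mathcal{F}'}^{k}\big((x,y,\gamma)\big)$. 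By Lemma~\ref{lemma:value} applied to $(\mathbb{Z}^{3},\mathbb{Z})$ — equivalently, by the classical fact recalled just after that lemma that a family is a $\mathbb{Z}^{3}$-basis iff its determinant is $\pm1$, cf.\ \cite{remy:iwcia:2000} — the family $\mathcal{F}'$ is a basis of $\mathbb{Z}^{3}$ iff $D=\pm1$. Reading this back through Lemma~\ref{lemma:value} for $\mathbb{F}$: $\mathcal{F}$ is a basis of $\mathbb{F}$ iff $\frac{1}{\Delta_{\mathcal{F}}^{0}}\Delta_{\mathcal{F}}^{k}(\vect{x})=D_k(\vect{x})/D\in\mathbb{Z}$ for every $\vect{x}\in\mathbb{F}$ and every $k$, iff $D=\pm1$, iff $\Delta_{\mathcal{F}}^{0}=\pm2$.

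The routine half — if $\Delta_{\mathcal{F}}^{0}=\pm2$ then $D=\pm1$ and every $D_k(\vect{x})/D$ is trivially an integer, so $\mathcal{F}$ is a basis by Lemma~\ref{lemma:value} — is immediate; the part needing care is the converse, that being a basis forces $|D|=1$, and that is exactly where the bijection with $\mathbb{Z}^{3}$ (or, equivalently, an explicit choice of witness vectors $\vect{x}\in\mathbb{F}$ in Lemma~\ref{lemma:value}) is used. The remaining steps are the same determinant bookkeeping as in Lemma~\ref{lemma:bccbasis}. If one prefers to avoid the isomorphism language, the converse can instead be phrased via covolumes: $\mathbb{F}$ has index $2$ in $\mathbb{Z}^{3}$, so any $\mathbb{Z}$-basis of $\mathbb{F}$ spans a lattice of covolume $2$, giving $|\Delta_{\mathcal{F}}^{0}|=2$ directly, and conversely three vectors of $\mathbb{F}$ with $|\det|=2$ span a subgroup of $\mathbb{F}$ of covolume $2$, hence all of $\mathbb{F}$.
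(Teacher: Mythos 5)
Your proposal is correct and follows essentially the same route the paper intends: the paper proves the FCC case by mimicking Lemma~\ref{lemma:bccbasis}, using the congruence $p^{3}=p^{1}+p^{2}+2\alpha$ to factor a $2$ out of every determinant and then invoking Lemma~\ref{lemma:value}, which is exactly your row-operation argument. Your explicit module isomorphism $\vect{v}\mapsto(x,y,\gamma)$ onto $\mathbb{Z}^{3}$ (and the alternative index-$2$/covolume remark) is a welcome tightening of the ``only if'' direction, which the paper's BCC template leaves somewhat implicit, but it is a refinement rather than a different method.
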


These Lemmas can be proved similar to
Lemma~\ref{lemma:bccsubmodule}~and~\ref{lemma:bccbasis} by noting that
for any $\point{p}\in\mathbb{F}^n$, there exist $\alpha \in\mathbb{Z}$
such that $p^{n} = p^{1} + \dots + p^{n-1} +2 \alpha$. 

To compute the weighted distance on FCC, we first consider the
smallest mask, i.e., the mask containing 12-neighbors: 
\begin{displaymath}
\mathcal{C}_{12} = \left\{(\vect{v}_{i} (\pm 1, \pm 1, 0),
w_{1}),(\vect{v}_{i} (\pm 1, 0 ,\pm 1), w_{1}),(\vect{v}_{i} (0, \pm
1, \pm 1), w_{1}),  \right\} 
\end{displaymath}

The wedges 
$\mathcal{F}_{\mathbb{F}}^1 =
\llangle \vect{v}_{1} (1, 1, 0), \vect{v}_{2} (1, 0, 1), \vect{v}_{3}
(1, -1, 0) \rrangle$,\\ 
$\mathcal{F}_{\mathbb{F}}^2 =
\llangle \vect{v}_{1} (1, 1, 0), \vect{v}_{2} (1, 0, 1), \vect{v}_{4}
(2, 0, 0) \rrangle$, and their symmetric wedges are $\mathbb{F}$-basis
sectors. 
\begin{displaymath}
\Delta_{\mathcal{F}^1_{\mathbb{F}}}^{0} = 
\Delta_{\mathcal{F}^2_{\mathbb{F}}}^{0} = 2
\end{displaymath}
and from Lemma \ref{lemma:fccbasis}, $\mathcal{F}_{\mathbb{F}}^1$ and
$\mathcal{F}_{\mathbb{F}}^2$ are $\mathbb{F}$-basis sectors. 

The splitting of the sectors are analogous to
Section~\ref{sec:bcc_mask_geometry}, but the determinants equals $\pm
2$ instead of $\pm 4$. 

We get the following mask weights. Balls obtained by using different
number of weights (shown in bold below) are shown in
Figure~\ref{balls}.

\begin{enumerate}
\item One weight\\
For the mask corresponding to the 8-neighborhood, using only one
weight, the equations of section \ref{sec:coeffsopt} give the
following result:\\
\begin{tabular}{| l l |}
\hline
real optimal weight    & 1.172 \\
real optimal error (\%)     & 17.16 \\
\hline
integer optimal weight & 1     \\
real map scale factor       & 1.172 \\ 
\hline
\end{tabular}\\
These results are consistent with \cite{strand:cviu:2005}.\\

\item Two weights\\
\begin{scriptsize}
\begin{tabular}{| c c | c | c c c | }
\hline
vector  & weight  & real  & \multicolumn{3}{c |}{weights} \\
\hline
(1 1 0) & $w_{1}$ & 1.271 & 1 & 1 & \bf{2}\\
(2 0 0) & $w_{2}$ & 1.798 & 1 & 2 & \bf{3}\\
\hline
\multicolumn{2}{| c |}{Scale factor} &
1 & 1.464 & 1.172 &\bf{0.636}\\
\hline
\multicolumn{2}{| c |}{Error (\%)} &
10.10 & 26.79 & 17.16 & \bf{10.10}\\
\hline
\end{tabular}\\
\end{scriptsize}
\ \\

\item Three weights\\
\begin{scriptsize}
\begin{tabular}{| c c | c c c c c c c c c | }
\hline
vector  & weight  & \multicolumn{9}{c |}{weights} \\
\hline
(1 1 0) & $w_{1}$ & 1 & 1 & 2 & 2 & 4  & 6 & 7 & \bf{11} & 15\\
(2 0 0) & $w_{2}$ & 1 & 2 & 3 & 3 & 6 & 9 & 10 & \bf{16} & 22\\
(2 1 1) & $w_{3}$ & 2 & 2 & 3 & 4 & 7 & 10 & 12 & \bf{19} & 26\\
\hline
\multicolumn{2}{| c |}{Scale factor} & 
1.464 & 1.172 & 0.694 & 0.636 & 0.325 & 0.226 & 0.191 & \bf{0.121} & 0.0887\\
\hline
\multicolumn{2}{| c |}{Error (\%)} & 
26.79 & 17.16 & 15.04  & 10.10  & 7.94  & 7.76 & 6.19 & \bf{6.16} & 5.95\\
\hline
\multicolumn{11}{| c |}{Optimal error (\%): 5.93}\\
\hline
\end{tabular}\\
\end{scriptsize}
\ \\
\item Four weights\\
\begin{scriptsize}
\begin{tabular}{| c c | c c c c c c c c | }
  \hline
  vector  & weight  & \multicolumn{8}{c |}{weights} \\
  \hline
  (1 1 0) & $w_{1}$ & 1 & 1 & 2 & 3 & 5  & 5 & 9 & \bf{12}\\
  (2 0 0) & $w_{2}$ & 2 & 2 & 3 & 4 & 7  & 7 & 13 & \bf{17}\\
  (2 1 1) & $w_{3}$ & 2 & 2 & 4  & 5  & 9  & 9 & 16 & \bf{21}\\
  (2 2 2) & $w_{4}$ & 2 & 3 & 5  & 7  & 12 & 13 & 23 & \bf{30}\\
  \hline
  \multicolumn{2}{| c |}{Scale factor} & 
  1.268 & 1.172 & 0.651 & 0.472 & 0.274 & 0.272 & 0.150 & \bf{0.113}\\
  \hline
  \multicolumn{2}{| c |}{Error (\%)} & 
  26.79 & 17.16 & 7.94  & 5.57  & 5.15  & 4.64  & 4.63 & \bf{4.07}\\
  \hline
  \multicolumn{10}{| c |}{Optimal error (\%): 3.98}\\
  \hline
\end{tabular}\\
\end{scriptsize}

\end{enumerate}

\begin{figure}[ht!]
\begin{center}
%\vspace{-2.5cm}
\subfigure[]{\includegraphics[width=0.35\linewidth]{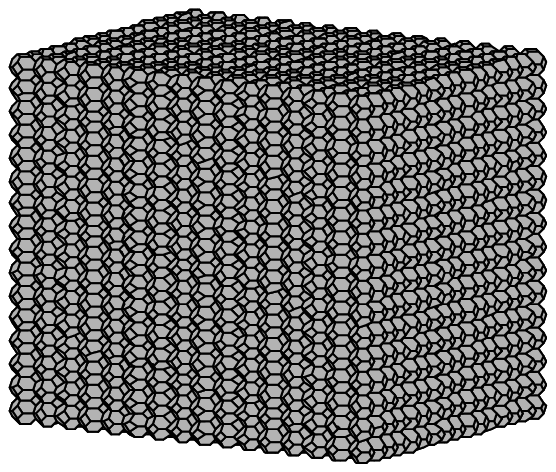}}
\hfill
\addtocounter{subfigure}{+3}
\subfigure[]{\includegraphics[width=0.35\linewidth]{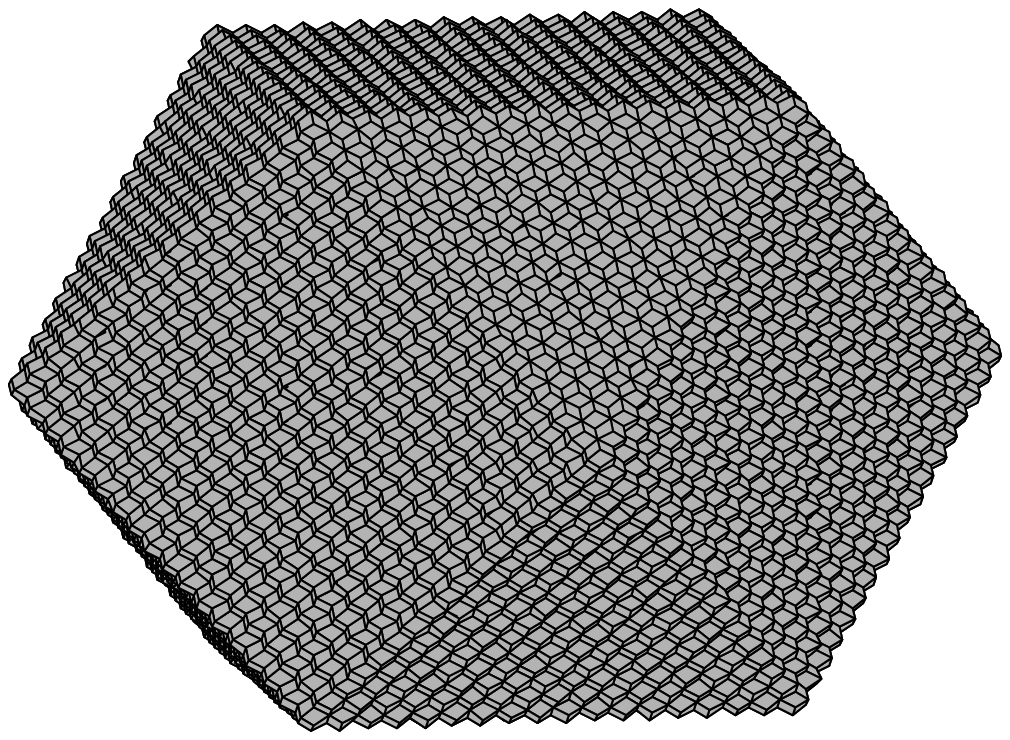}}
\hfill
\newline
\hfill
\addtocounter{subfigure}{-4}
\subfigure[]{\includegraphics[width=0.35\linewidth]{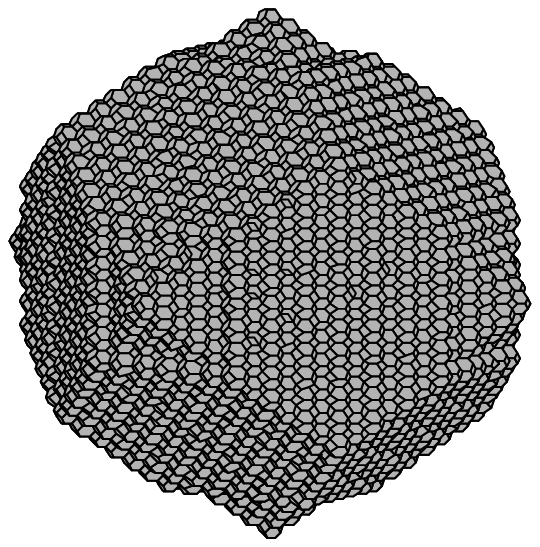}}
\hfill
\addtocounter{subfigure}{+3}
\subfigure[]{\includegraphics[width=0.35\linewidth]{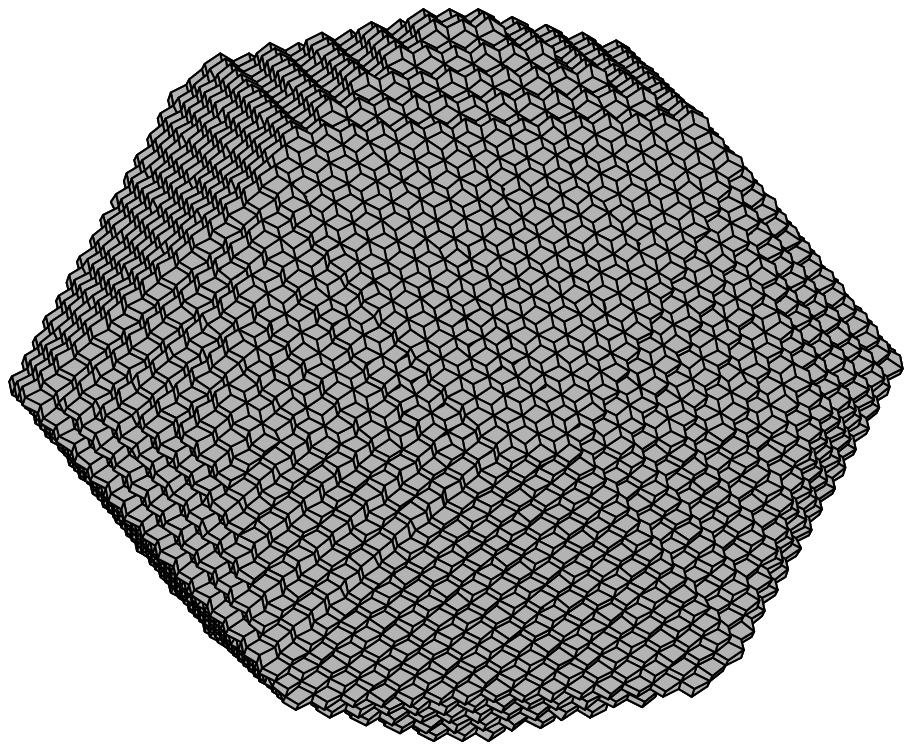}}
\hfill
\newline
\hfill
\addtocounter{subfigure}{-4}
\subfigure[]{\includegraphics[width=0.35\linewidth]{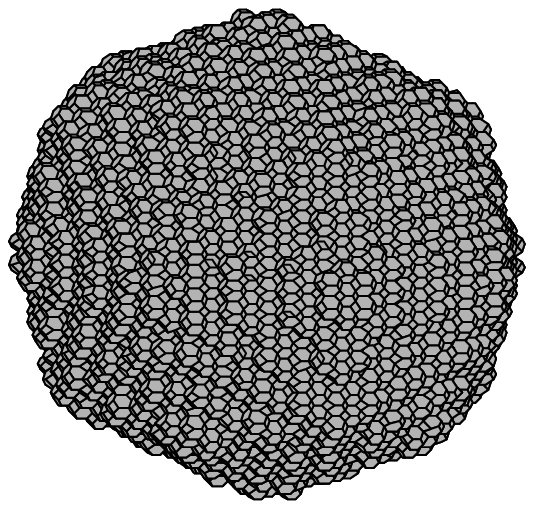}}
\hfill
\addtocounter{subfigure}{+3}
\subfigure[]{\includegraphics[width=0.35\linewidth]{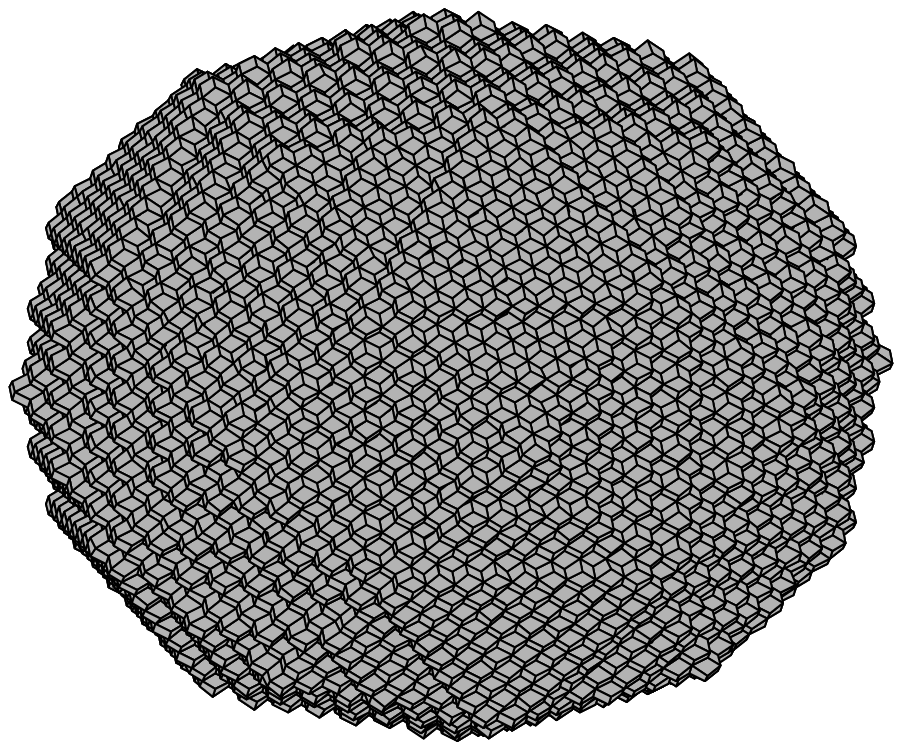}}
\hfill
\newline
\hfill
\addtocounter{subfigure}{-4}
\subfigure[]{\includegraphics[width=0.35\linewidth]{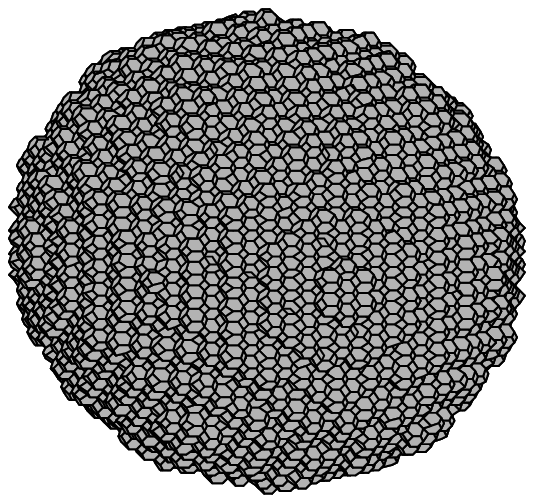}}
\hfill
\addtocounter{subfigure}{+3}
\subfigure[]{\includegraphics[width=0.35\linewidth]{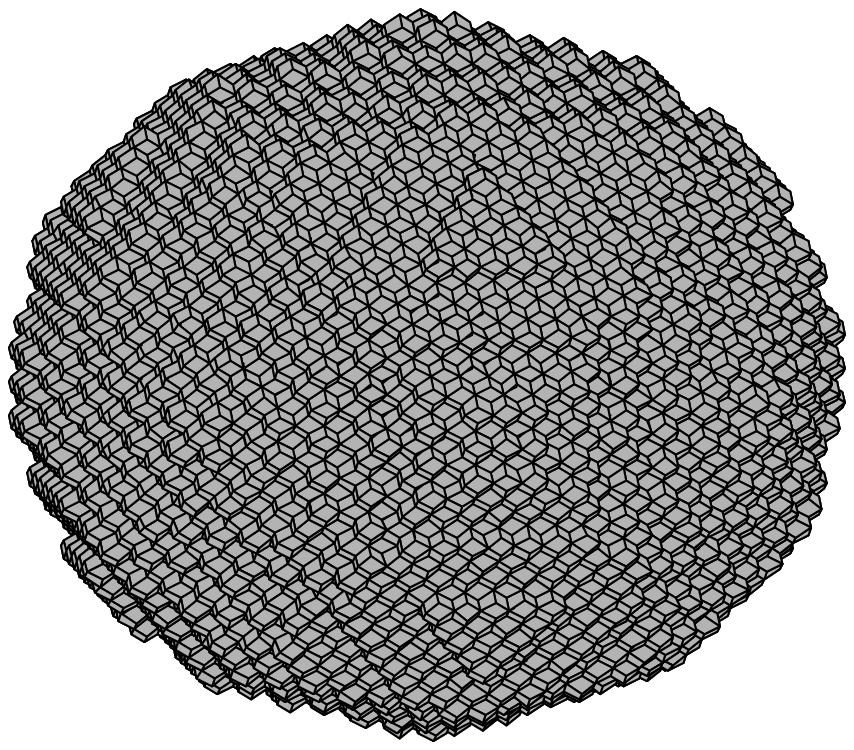}}
\hfill
\caption{The balls are constructed using the chamfer algorithm in
Section~\ref{sec:chamferAlgo}. Figure (a)--(d) show BCC balls with one
(a), two (b), three (c), and four (d) weights. Figure
(e)--(h) show FCC balls with one (e), two (f), three (g), and four (h)
weights. The weights written in bold in the tables are
used. All balls are of radius 20. \label{balls}}
\end{center}
\end{figure}

%%%%%%%%%%%%%%%%%%%%%%%%%%%%%%%%%%%%%%%%%%%%%%%%%%%%%%%%%%%%%%%%
%%                                                            %%
%%                     CONCLUSION                             %%
%%                                                            %%
%%%%%%%%%%%%%%%%%%%%%%%%%%%%%%%%%%%%%%%%%%%%%%%%%%%%%%%%%%%%%%%%
\section{Conclusions}
We have presented a general theory for weighted distances. This allows
application of the weighted distance transform to any point-lattice. 
The optimal weight calculation and the construction of a chamfer
algorithm to produce distance maps are straight-forward by the
procedure in Section~\ref{sec:coeffsopt} and
Theorem~\ref{chamf-alg-theorem}.
For the chamfer algorithm to construct correct distance-maps, the only
limitation is that the image must be wedge-preserving,
Definition~\ref{def:wedge_pres}, or have all border points in the
background, Definition~\ref{def:borderpoint}. With these conditions
which can easily be satisfied, images in any dimension can be
considered. It is also worth mentioning that despite its popularity,
to our knowledge, no proof of the correctness of the two-scan
algorithm has been published until now, except for the 2D 3 $\times$ 3
mask algorithm proof from 1966 \cite{rosenfeld:acm:1966}. 
Also, the important general formula for the
weighted distance between two grid points in
Theorem~\ref{theorem:value} is new. 

The application of these properties and of the chamfer algorithm are
numerous for the cubic and parallelepiped grids in the literature. To
illustrate how this theory can be applied, Section~\ref{sec:bccfcc} 
gives examples for the FCC and BCC grids.
But it can be applied to any point-lattice, for example the 3D
grid where the voxels are hexagonal cylinders as suggested in
\cite{brimkov:comp:2005} or 4D a grid as in \cite{borgefors:DAM:2003},
or even other grids as long as they satisfy the module conditions.

%%%%%%%%%%%%%%%%%%%%%%%%%%%%%%%%%%%%%%%%%%%%%%%%%%%%%%%%%%%%%%%%
%%                                                            %%
%%                     BIBLIOGRAPHY                           %%
%%                                                            %%
%%%%%%%%%%%%%%%%%%%%%%%%%%%%%%%%%%%%%%%%%%%%%%%%%%%%%%%%%%%%%%%%
%\bibliographystyle{elsart-num}
%\bibliography{references}

\end{document}